\documentclass[11pt]{article}
	
\usepackage[letterpaper,margin=0.90in]{geometry}
\usepackage{amsmath, amssymb, amsthm, amsfonts,bm}
\usepackage[inline]{enumitem}

\usepackage{cite}
\usepackage{framed}
\usepackage[framemethod=tikz]{mdframed}
\usepackage{appendix}
\usepackage{graphicx}
\usepackage{color}
\usepackage{wrapfig}
\usepackage{algorithm, subcaption}
\usepackage[noend]{algpseudocode}
\usepackage[textsize=tiny]{todonotes}

\usepackage{enumitem}
\setitemize{noitemsep,topsep=3pt,parsep=3pt,partopsep=3pt}

\usepackage{xspace}

\definecolor{darkgreen}{rgb}{0,0.5,0}
\definecolor{darkblue}{rgb}{0,0,0.8}

\usepackage{cleveref}

\theoremstyle{theorem}
\newtheorem{theorem}{Theorem}[section]
\theoremstyle{lemma}
\newtheorem{lemma}[theorem]{Lemma}
\theoremstyle{corollary}
\newtheorem{corollary}[theorem]{Corollary}
\theoremstyle{claim}
\newtheorem{claim}[theorem]{Claim}

\theoremstyle{definition}
\newtheorem{definition}{Definition}[section]
\theoremstyle{remark}

\newcommand{\ignore}[1]{}

\algnewcommand\algorithmicswitch{\textbf{switch}}
\algnewcommand\algorithmiccase{\textbf{case}}

\algdef{SE}[SWITCH]{Switch}{EndSwitch}[1]{\algorithmicswitch\ #1\ \algorithmicdo}{\algorithmicend\ \algorithmicswitch}%
\algdef{SE}[CASE]{Case}{EndCase}[1]{\algorithmiccase\ #1}{\algorithmicend\ \algorithmiccase}%
\algtext*{EndSwitch}%
\algtext*{EndCase}%

\newcommand{\eps}{\varepsilon}

\newcommand{\poly}{\operatorname{\text{{\rm poly}}}}
\newcommand{\polylog}{\operatorname{\text{{\rm polylog}}}}

\newcommand{\dist}{\operatorname{dist}}
\newcommand{\outdeg}{\operatorname{outdeg}}
\newcommand{\indeg}{\operatorname{indeg}}

\DeclareMathOperator{\E}{\mathbb{E}}

\renewcommand{\paragraph}[1]{\vspace{0.15cm}\noindent {\bf #1}:}



\newcommand{\FullOrShort}{short}

\ifthenelse{\equal{\FullOrShort}{full}}{
	
  \newcommand{\fullOnly}[1]{#1}
  \newcommand{\shortOnly}[1]{}

  }{

    \newcommand{\shortOnly}[1]{#1}
		\newcommand{\fullOnly}[1]{}
    
  }


\sloppy
\begin{document}

\date{}

\title{Distributed Degree Splitting, Edge Coloring, and Orientations} 

\author{
 Mohsen Ghaffari\\
  \small MIT \\
  \small ghaffari@mit.edu
\and
Hsin-Hao Su\\
	\small MIT \\
	\small hsinhao@mit.edu 
}

\maketitle

\begin{abstract} 
We study a family of closely-related distributed graph problems, which we call \emph{degree splitting}, where roughly speaking the objective is to partition (or orient) the edges such that each node's degree is split almost uniformly. Our findings lead to answers for a number of problems, a sampling of which includes:

\medskip
\begin{itemize} 

\item We present a $\poly\log n$ round deterministic algorithm for $(2\Delta-1)\cdot (1+o(1))$-\emph{edge-coloring}, where $\Delta$ denotes the maximum degree. Modulo the $1+o(1)$ factor, this settles one of the long-standing open problems of the area from the 1990's (see e.g. Panconesi and Srinivasan [PODC'92]). Indeed, a weaker requirement of $(2\Delta-1)\cdot \poly\log \Delta$-edge-coloring in $\poly\log n$ rounds was asked for in the 4th open question in the \emph{Distributed Graph Coloring} book by Barenboim and Elkin. 
\medskip

\item We show that \emph{sinkless orientation}---i.e., orienting edges such that each node has at least one outgoing edge---on $\Delta$-regular graphs can be solved in $O(\log_{\Delta} \log n)$ rounds randomized and in $O(\log_{\Delta} n)$ rounds deterministically. These prove the corresponding lower bounds by Brandt et al. [STOC'16] and Chang, Kopelowitz, and Pettie [FOCS'16] to be tight. Moreover, these show that sinkless orientation exhibits an exponential separation between its randomized and deterministic complexities, akin to the results of Chang et al. for $\Delta$-coloring $\Delta$-regular trees.

\medskip 
\item We present a randomized $O(\log^4 n)$ round algorithm for orienting $a$-arboricity graphs with maximum out-degree $a(1+\eps)$. This can be also turned into a decomposition into $a (1+\eps)$ forests when $a=\Omega(\log n)$ and into $a (1+\eps)$ pseduo-forests when $a=o(\log n)$. Obtaining an efficient distributed decomposition into less than $2a$ forests was stated as the 10th open problem in the book by Barenboim and Elkin. 

\end{itemize}   
\end{abstract}
\setcounter{page}{0}
\thispagestyle{empty}
\newpage

\section{Introduction \& Related Work}
\vspace{-7pt}
Graph symmetry breaking problems form one of the central subareas of distributed algorithms, and they have received extensive attention over the last three decades. See the book by Barenboim and Elkin\cite{barenboim2013monograph} for an instructive survey. 
In this paper, we revisit some of the classical problems of this area, as well as some newer ones which have received attention only recently. The common denominator of the problems we consider is that they all revolve around a seemingly rudimentary edge symmetry-breaking task, which we refer to as \emph{degree-splitting}, and discuss shortly.

Throughout, we work with the standard distributed model called $\mathsf{LOCAL}$, due to Linial\cite{linial1992locality}: The network is abstracted as a graph $G=(V, E)$. There is one processor on each vertex, which initially knows only its neighbors. Per round, each processor can send one message to each of its neighbors.

\vspace{-7pt}
\subsection{Degree Splitting and Edge Coloring}
\vspace{-5pt}
We start our discussion of degree splitting with the classical edge-coloring problem, as these two problems have a simple and clear relation. One basic version of degree splitting is as follows:

\begin{mdframed}[hidealllines=false,backgroundcolor=gray!10]
\textbf{Undirected Degree Splitting:} Given a graph with maximum degree $\Delta$, color each edge red or blue such that each node has at most $\Delta(1+\eps)/2$ edges in each color, for a small $\eps\geq0$.
\end{mdframed}
\vspace{5pt}

To understand the connection to edge-coloring, imagine an ideal world---though not always feasible---where everything is fair and one could always have a perfect split, i.e., with $\eps=0$. Then, by a recursion of depth $\log \Delta$ of degree-splittings, we would get to a setting where edges are colored in $2^{\log \Delta} = \Delta$ colors, and each node has at most $1$ edge of each color. That is a proper $\Delta$ edge-coloring. 

But that is too good to be true! Not every graph admits a $\Delta$ edge-coloring\footnote{Although, a $(\Delta+1)$-edge-coloring is guaranteed to always exist, by Vizing's theorem\cite{vizing1964estimate}.}, and not every graph admits a perfect degree-split. Take $K_3$ for instance. But almost perfect splits exist, with $\eps=2/\Delta$. Here is one way of getting them: Add a dummy node and connect it to all odd degree nodes; then take an Eulerian tour and color its edges red and blue in alternating order. The same guarantee also follows from the Discrepancy Theory result of Beck and Fiala\cite{beck1981integer}---which is in fact far more general\footnote{It follows from Beck-Fiala's theorem that hypergraphs of rank $t$---i.e., where each hyperedge has at most $t$ vertices---admit a red-blue hyperedge-coloring where each node has at most $\Delta/2+t-1$ edges in each color.}. 

We present distributed algorithms for computing almost perfect degree-splits. As Eulerian tours cannot be computed distributedly, our methods are vastly different from the above. As formalized in \Cref{lem:randomizedundirected}, we present a randomized algorithm that achieves an almost-perfect split with per-color degree at most $\Delta/2+1$ in $\poly(\Delta \log n)$ rounds. However, as far as we know, this result itself does not lead to an improvement in any of the well-studied distributed problems. Instead, as formalized in \Cref{thm:coloring}, we present a deterministic algorithm that in $\poly\log n$ rounds produces a degree-split with $\eps = 1/\log^c n$, for a desirably large constant $c\geq 2$, in graphs of max-degree $\Delta =\Omega(\poly\log n)$. This immediately leads to an improvement for edge-coloring: 

\begin{theorem}\label{thm:edge-coloring} There is a deterministic $\poly\log n$-round algorithm for $(2\Delta-1)(1+o(1))$-edge-coloring. 
\end{theorem}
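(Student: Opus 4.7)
The plan is to apply the deterministic degree-splitting primitive of the paper (with accuracy $\eps = 1/\log^c n$ for a sufficiently large constant $c \geq 2$) recursively, reducing the edge coloring problem on $G$ to edge coloring problems on leaf subgraphs of low max degree, each of which can be handled by a classical low-degree deterministic edge-coloring routine such as Panconesi--Rizzi's $(2\Delta'-1)$-edge-coloring in $O(\Delta' + \log^* n)$ rounds. Fix a threshold $\Delta^\star = \polylog n$ chosen large enough that the splitting primitive is guaranteed to apply on every subgraph of max degree $\geq \Delta^\star$. If $\Delta \leq \Delta^\star$ already, invoke the low-degree routine directly on $G$. Otherwise, split $E(G)$ into a red and a blue subgraph, each of max degree at most $\lceil \Delta(1+\eps)/2 \rceil$, and recurse in parallel on each side, stopping the recursion along any branch as soon as its current max degree falls below $\Delta^\star$.

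This produces a binary tree of splits of depth $k = O(\log(\Delta/\Delta^\star)) = O(\log n)$: telescoping the recurrence $\Delta_{i+1} \leq \Delta_i(1+\eps)/2 + 1$ gives $\Delta_i \leq \Delta \cdot ((1+\eps)/2)^i + O(1)$. All branches sitting at the same level of the tree are processed simultaneously, so each of the $O(\log n)$ recursive levels costs $\poly\log n$ rounds; likewise, the at most $2^k$ leaf subgraphs are edge-colored in parallel using pairwise disjoint color palettes, in an additional $\poly\log n$ rounds. The total round complexity is therefore $\poly\log n$.

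The color count is $\sum_{\text{leaves}}(2\Delta_{\text{leaf}} - 1) \leq 2 \cdot 2^k \cdot \Delta_k \leq 2\Delta (1+\eps)^k + O(2^k)$. Since $k \leq \log n$ and $c \geq 2$, the first term satisfies $(1+\eps)^k \leq e^{k\eps} \leq e^{1/\log^{c-1} n} = 1+o(1)$; and since $\Delta^\star = \polylog n$, we have $2^k = O(\Delta/\polylog n) = o(\Delta)$. Hence the total is at most $2\Delta(1+o(1)) = (2\Delta-1)(1+o(1))$ colors, as required.

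The main obstacle is making sure the recursion never leaves the regime in which the deterministic splitting primitive is valid: the primitive requires the current max degree to be $\Omega(\polylog n)$, so the threshold $\Delta^\star$ has to be set strictly above that floor, and once a branch falls beneath $\Delta^\star$ it must be handed off to the classical low-degree subroutine rather than split further. A secondary technical point is tracking the additive $+1$ introduced by each ceiling $\lceil \Delta_i(1+\eps)/2 \rceil$, but telescoping this recurrence yields only an $O(1)$ accumulated additive error per branch, which is absorbed into $\Delta^\star$ at the leaves and therefore hidden inside the $(1+o(1))$ slack in the final color count.
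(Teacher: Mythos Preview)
Your proposal is correct and follows essentially the same approach as the paper: recursively apply the deterministic degree-splitting primitive (Theorem~\ref{lem:largedegree}) with a $1/\polylog n$ accuracy parameter, halt each branch once its max degree drops below a $\polylog n$ threshold, and finish the leaves with Panconesi--Rizzi; the color count is bounded by $2^k\cdot 2\Delta_k \leq 2\Delta(1+\eps)^k(1+o(1))$. The only cosmetic difference is that the paper parameterizes the split accuracy as $\eps' = \eps/(2\log_2\Delta)$ to state a clean $(2+\eps)\Delta$ bound for arbitrary $\eps$ (and uses floors rather than ceilings, avoiding your additive $+1$ bookkeeping), whereas you fix $\eps = 1/\log^c n$ from the outset.
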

As Panconesi and Rizzi\cite{panconesi-rizzi} state, four key problems of the area from the 1990's were to find $\poly\log n$-round deterministic algorithms for Maximal Independent Set, $(\Delta+1)$-vertex-coloring, $(2\Delta-1)$-edge-coloring, and Maximal Matching. To this day, only the Maximal Matching problem is resolved, due to a breakthrough of Hanckowiak, Karonski, and Panconesi\cite{hanckowiak1998MMConference, hanckowiak2001MMJournal}. \Cref{thm:edge-coloring} almost settles the edge-coloring problem, modulo the $(1+o(1))$ factor. The previously best-known number of required colors were $\Delta\cdot 2^{O(\frac{\log \Delta}{\log\log \Delta})}$, due to Barenboim and Elkin\cite{Barenboim:edge-coloring}, and $O(\Delta\log n)$, due to Czygrinow et al.\cite{czygrinow2001coloring}. We in fact present a simpler proof of the former via degree splittings in \Cref{app:SimpleColoring}. As stated in the 4th open problem of the \emph{Distributed Graph Coloring} book by Barenboim and Elkin\cite[Section 11]{barenboim2013monograph}, even $\Delta \cdot \poly(\log \Delta)$-edge-coloring in $\poly\log n$ rounds remained open.

In \Cref{sec:randomizedcoloring}, we explain that mixing \Cref{thm:edge-coloring} with some other ideas also leads to a fast randomized $(2\Delta-1)(2+o(1))$-edge-coloring in $O(\poly(\log\log n))$ rounds.

\subsection{Degree Splitting and Edge Orientations}
\vspace{-2pt}
We also consider the following natural variant of degree-splitting:
\vspace{3pt}
\begin{mdframed}[hidealllines=false,backgroundcolor=gray!10]
\textbf{Directed Degree Splitting:} Given a graph with maximum degree $\Delta$, orient each edge such that each node has in-degree and out-degree at most $\Delta/2(1+\eps)$, for a small $\eps\geq 0$.
\end{mdframed}
\vspace{5pt}

We note that in directed splitting, even in graphs which admit perfect splits, computing them might be time-consuming---e.g., perfectly splitting an $n$-node cycle would require $\Omega(n)$ rounds.
   
The directed degree splitting relates closely to Eulerian Orientations, as the latter requires an orientation such that each node has the same in-degree and out-degree. We note that on bipartite graphs (with the bi-partition given), the directed and undirected degree-slitting problems are equivalent. However, we are not aware of a formal reduction between them in the general case. 

Despite that, we use more or less the same methodology as that of our undirected degree splitting to find almost perfect directed splits: As formalized in \Cref{lem:randomizeddirected}, we get an existentially best possible guarantee of per-node out-degree and in-degree at most $\Delta/2+1$ in $\poly(\Delta \log n)$ rounds randomized. More importantly, as formalized in \Cref{lem:largeorient}, we get a deterministic algorithm that computes a degree splitting with $\eps = 1/\log^c n$, for a desirably large constant $c\geq 2$, in graphs of max-degree $\Delta =\Omega(\poly\log n)$ in $\poly \log n$ rounds.

\paragraph{Sinkless Orientation and Its Refinements} One related problem which has received attention recently is \emph{sinkless orientation}, where the objective is to orient edges of a $\Delta$-regular graph such that each node has out-degree at least $1$. Note that this is clearly a much weaker requirement than that of the directed degree splitting problem. Recently, Brandt et al.\cite{brandt2016LLL} gave an elegant $\Omega(\log_{\Delta} \log n)$ round lower bound for this problem, which was then extended by Chang et al.\cite{chang2016exponential} to an $\Omega(\log_{\Delta} n)$ lower bound for deterministic algorithms. For this weaker orientation problem, we can achieve much better round complexities, which match the respective lower bounds.

\begin{theorem}\label{thm:sinkless} There is a randomized $O(\log_{\Delta} \log n)$-round algorithm which solves sinkless orientation in all $\Delta$-regular graphs, for $\Delta\geq 3$, and in fact all graphs of minimum degree at least $\Delta$, with high probability. Moreover, the same problem has a deterministic $O(\log_{\Delta} n)$-round algorithm.
\end{theorem}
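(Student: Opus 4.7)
The plan is to first reduce to $\Delta$-regular graphs: any vertex of degree exceeding $\Delta$ unilaterally drops incident edges down to exactly $\Delta$ (these dropped edges can be oriented arbitrarily at the end), since a sinkless orientation of the $\Delta$-regular spanning subgraph extends to the original. I then handle the randomized and deterministic halves separately, since they draw on quite different techniques.

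For the randomized $O(\log_\Delta \log n)$-round algorithm, I would frame the problem as a Lovász Local Lemma instance: orienting each edge uniformly at random makes the bad event $A_v$ (``all $\Delta$ edges at $v$ point inward'') have probability $2^{-\Delta}$, with dependency degree at most $\Delta(\Delta-1)$ since $A_v$ depends only on edges at distance $\leq 1$. For $\Delta \geq 3$ the LLL criterion $e \cdot p \cdot d \leq 1$ holds with slack exponential in $\Delta$, which is precisely the regime in which fast local-resampling algorithms succeed quickly. The envisioned algorithm runs $t = O(\log_\Delta \log n)$ rounds; in each, every vertex still violating $A_v$ (together with neighbors it conflicts with, tie-broken by IDs) joins a resampling wave that redraws the orientations of the relevant edges. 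The analysis should show that a fixed vertex's probability of remaining a sink after $t$ rounds is $2^{-\Omega(\Delta^t)}$---the doubly-exponential form arises because the radius-$t$ dependency tree has $\Delta^t$ leaves, each contributing an essentially independent ``escape'' chance of order $1/2$ to the vertex at its root. Choosing $t = O(\log_\Delta \log n)$ so that $\Delta^t \geq c \log n$ drives the per-vertex failure below $1/n^c$, and a union bound over $n$ vertices yields the high-probability guarantee. Any residual bad set admits components of $\polylog n$ size and weak diameter $O(\log_\Delta \log n)$, which the deterministic algorithm below could mop up inside the same round budget.

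For the deterministic $O(\log_\Delta n)$-round algorithm, the plan exploits a structural observation: in a $\Delta$-regular graph on $n$ vertices with $\Delta \geq 3$, every BFS ball of radius $R = c \log_\Delta n$ (with appropriate $c$) must contain a cycle, because a $\Delta$-regular tree of radius $R$ already exceeds $n$ vertices. Each vertex collects its radius-$R$ ball in $R$ rounds and applies a canonical, locally-computable rule that simultaneously assigns each visible edge an orientation---for instance, fix the lexicographically smallest cycle in the ball, orient it cyclically, and orient the remaining edges in a BFS pointing toward that cycle. For the orientations to be globally consistent, two neighbors $u$ and $v$ must agree on the orientation of $\{u,v\}$; this is arranged by making the rule depend only on a subball visible to both, e.g., the radius-$(R-1)$ ball centered at the lower-ID endpoint. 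The resulting orientation is sinkless because every vertex either lies on the chosen cycle (hence has a cyclically chosen out-edge) or has an out-edge along the BFS path toward it.

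The main obstacle is the randomized analysis: a standard Moser--Tardos-type LLL bound yields only $O(\log n / \Delta)$ rounds, which is worse than the target $O(\log_\Delta \log n)$ whenever $\Delta$ is small. Reaching the tight bound requires extracting genuinely doubly-exponential decay from the resampling process---by carefully tracking the entire tree of possible ``causes'' of a residual bad vertex rather than a single chain, so that each of the $\Delta^t$ leaves contributes independently to suppressing the failure probability. I expect this to be the delicate part of the proof. The deterministic half, once the ball-contains-cycle fact is in place and the symmetric canonical rule is engineered, is essentially a matter of careful bookkeeping.
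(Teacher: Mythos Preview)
Your reduction to $\Delta$-regular graphs is broken as stated: when a high-degree vertex $u$ drops the edge $\{u,v\}$, vertex $v$ loses degree too, so the resulting subgraph need not have minimum degree $\Delta$. The paper instead splits each degree-$d'$ vertex into $\lfloor d'/\Delta\rfloor$ copy-nodes of degree exactly $\Delta$ and treats leftover edges as \emph{half-edges} attached only to the endpoint that still needs them outgoing; the deterministic algorithm is then stated for structures in which every node is incident to at least $d$ edges-or-half-edges.

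The randomized plan is where the real gap lies. The paper does \emph{not} iterate LLL resampling. It spends $O(1)$ randomized rounds total: mark each edge with probability $1/4$, orient marked edges uniformly, and call a vertex \emph{bad} if it has more than $\Delta/2$ marked edges, is adjacent to such a vertex, or has no outgoing marked edge. Good vertices are already non-sinks; each bad vertex retains at least $\Delta/2$ unmarked edges, and the bad set shatters into connected components of size $O(\Delta^2\log n)$ with high probability. The deterministic algorithm then finishes each such component in $O(\log_\Delta\log n)$ rounds. Your route---extracting $2^{-\Omega(\Delta^t)}$ failure probability from $t$ resampling rounds---is precisely the step you flag as delicate, and standard witness-tree bounds do not deliver it; the paper sidesteps the issue entirely by moving all the work to the deterministic post-shattering phase.

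Your deterministic sketch is close in spirit, but the consistency mechanism is the crux and ``depend on a subball visible to both endpoints'' does not suffice: the edges at a single vertex $v$ are then governed by different balls that may select different cycles, with no guarantee that any of them points out of $v$. The paper's rule is \emph{per-edge}: each short edge $e$ orients itself by the preferred direction of the smallest-ID short cycle containing $e$. For any short vertex $v$, let $C$ be the smallest-ID short cycle through any edge at $v$; both edges of $C$ at $v$ are then governed by $C$ itself (anything smaller would contradict the choice of $C$), and a consistent cyclic orientation gives $v$ one outgoing edge among them. Long vertices orient one edge toward the nearest short vertex. Finally, for $3\le\Delta\le 500$ the paper adds a clustering step you omit: compute a maximal $c$-independent set, contract each resulting constant-diameter cluster to a single node, solve on the contracted graph (now minimum degree $>500$), and orient intra-cluster edges toward the cluster's outgoing inter-cluster edge.
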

We can in fact guarantee a more balanced split in almost the same running time: We show how to find an orientation with per-node in-degree and out-degree at most $5\Delta/6$---that is, a directed degree split with $\eps=2/3$---in $O(\log\log n)$ rounds randomized and $O(\log n)$ rounds deterministically. 

Brandt et al.\cite{brandt2016LLL} used sinkless orientation to prove an $\Omega(\log_{\Delta}\log n)$ round lower bound on $\mathsf{LOCAL}$-algorithms for Lovasz Local Lemma (LLL). Since LLL can provide much finer degree splits, studying these stronger degree-splits might expose higher lower bounds for LLL. Moreover, Chang et al.\cite{chang2016exponential} recently presented the first exponential separation between randomized and deterministic distributed complexities, by showing that $\Delta$-vertex-coloring $\Delta$-regular trees requires $\Theta(\log_{\Delta}\log n)$ rounds randomized and $\Theta(\log_{\Delta} n)$ rounds deterministically. \Cref{thm:sinkless}, in conjunction with the aforementioned lower bounds, exhibits the same exponential separation on sinkless orientation.

\paragraph{Low Out-Degree Orientations and Nash-Williams Decompositions} An alternative way of viewing the directed degree splitting problem is as follows: it asks for an orientation that achieves a maximum out-degree within a $1+\eps$ factor of what is necessary, given the maximum degree. In this regard, one can ask for a stronger guarantee: find an orientation with maximum out-degree $a(1+\eps)$, where $a$ is the arboricity of the graph. Clearly any orientation of any graph with arboricity $a$ needs out-degree at least $a$. Our methods allow us to achieve such an approximation: 

\begin{theorem} There is a randomized $O(\poly \log n/\eps)$-round algorithm which orients $a$-arboricity graphs with maximum out-degree at most $a(1+\eps)$. This is equivalent with a decomposition into $a(1+\eps)$ pseudo-forests. If $a=\Omega(\log n)$, we can turn this into a decomposition into $a(1+\eps)$ forests.
\end{theorem}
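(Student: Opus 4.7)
First note that an orientation of $G$ with maximum out-degree at most $k$ is equivalent to a decomposition of $E(G)$ into $k$ pseudo-forests: bucket the at-most-$k$ out-edges of each vertex into $k$ color classes. Thus the theorem reduces to producing an orientation of $G$ with maximum out-degree at most $a(1+\eps)$ in $O(\poly\log n /\eps)$ rounds. By the Nash--Williams characterization, every subgraph of $G$ has average degree at most $2a$, so in any induced subgraph at least an $\Omega(\eps)$-fraction of the vertices have degree at most $(2+\eps)a$ (Markov). Iteratively peel all such vertices into layers $L_1,L_2,\dots$; the process halts in $L=O(\log n/\eps)$ rounds. Orienting every edge from its earlier-peeled endpoint to its later-peeled endpoint (ID ties within a layer) produces an acyclic orientation of $G$ with max out-degree at most $(2+\eps)a$. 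This is the starting point.

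\textbf{Refinement via iterated directed splitting.} The main task is to shave the leading constant from $2+\eps$ down to $1+\eps$, and this is where I invoke the directed splitting primitive \Cref{lem:largeorient} (or its randomized counterpart \Cref{lem:randomizeddirected}) as a subroutine. The plan is to run $O(\log n)$ phases of $\poly\log n$ rounds each. In a phase, let $D$ be the current maximum out-degree and call a vertex \emph{heavy} if its out-degree exceeds $a(1+\eps)$. Since $\sum_v \outdeg(v)=|E|\leq a(n-1)$, the heavy vertices' excess out-edges must terminate in aggregate at vertices with genuine slack. I form an auxiliary bounded-degree subgraph $H$ consisting of the out-edges of heavy vertices (truncated so the per-vertex degree is $O(D)$), apply directed degree splitting to $H$, and interpret the returned orientation as instructions for which of a heavy vertex's out-edges to flip. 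A potential $\Phi=\sum_v\max(0,\outdeg(v)-a(1+\eps))$ decreases by a constant factor in each phase, so $O(\log n)$ phases drive $\Phi$ to zero, leaving an orientation of max out-degree at most $a(1+\eps)$. The main obstacle is calibrating $H$ so that the splitting primitive's inherent $(1+o(1))$ slack really translates into a drop in the \emph{max} out-degree rather than merely a balanced re-orientation of $H$; care must be taken not to turn light vertices into new heavy ones, which is enforced by capping how much of each light vertex's ``budget'' is exposed to $H$ and by iterating in several sub-phases.

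\textbf{From pseudo-forests to forests when $a=\Omega(\log n)$.} For each of the $a(1+\eps)$ output pseudo-forests there is at most one cycle per connected component, and in each pseudo-forest all cycles can be located and one edge per cycle marked in $O(\log n)$ rounds by pointer-jumping on its functional graph of out-edges. Removing the marked edges makes each pseudo-forest into a genuine forest. The marked edges form a residual graph whose arboricity is $O(\log n)$ (since across all pseudo-forests each vertex contributes a bounded number of cycle-closing marks, amortized via a charging to components); when $a=\Omega(\log n)$, the $\eps a=\Omega(\log n)$ slack in the pseudo-forest budget is wide enough to absorb this residual. A recursive application of the algorithm on the residual graph produces $O(\eps a)$ additional forest classes, which together with the trimmed (now acyclic) pseudo-forests yield the claimed decomposition of $E(G)$ into $a(1+\eps)$ forests.
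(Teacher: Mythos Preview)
Your proposal diverges from the paper's approach and has genuine gaps in both the orientation step and the forest-decomposition step.

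\textbf{The refinement step is not an argument.} You start from a $(2+\eps)a$ out-degree orientation (the standard Barenboim--Elkin peeling) and then propose to reduce it to $a(1+\eps)$ by repeatedly applying the directed degree-splitting primitive to an auxiliary graph $H$ of heavy vertices' out-edges. But the splitting primitive only promises out-degree $\approx \Delta(H)/2$ inside $H$; it knows nothing about the global arboricity $a$. You yourself flag ``the main obstacle is calibrating $H$ so that the splitting primitive's inherent $(1+o(1))$ slack really translates into a drop in the max out-degree'' and then say only that ``care must be taken.'' That is the entire content of the proof, and it is missing. In particular you give no reason why the potential $\Phi$ drops by a constant factor, nor why flipping edges in $H$ does not simply shuffle excess from heavy to light vertices and back. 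The paper does something structurally different: it works directly with augmenting paths (directed paths from out-degree $\geq D+1$ to out-degree $\leq D-1$) and runs a blocking-flow iteration. The key lemma, which has no analogue in your sketch, is an \emph{arboricity-based expansion bound}: from any vertex of out-degree $>D=\lceil(1+\eps)a\rceil$, the set reachable in $i$ steps grows by a factor $\geq D/a\geq 1+\eps$ per step (because any set $B$ spans at most $a|B|$ edges), so an augmenting path of length $O((\log n)/\eps)$ always exists. Combined with the standard ``shortest augmenting path length increases after a blocking flow'' argument, $O((\log n)/\eps)$ phases of maximal edge-disjoint path augmentation (found via Luby's MIS on a path-supergraph) suffice.

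\textbf{The forest step's residual bound is unjustified.} You claim the cycle-closing edges removed from the $a(1+\eps)$ pseudo-forests form a graph of arboricity $O(\log n)$, ``amortized via a charging to components.'' This does not follow: the marked edges are a union of $a(1+\eps)$ matchings (one edge per cyclic component per pseudo-forest), and such a union can have arboricity $\Theta(a)$, not $O(\log n)$. For instance, if many pseudo-forest cycles pass through a common vertex and the marked edges happen to be incident to it, its residual degree is $\Theta(a)$. The paper's conversion is completely different and avoids the cycle-breaking route: it builds $a(1+\eps)$ \emph{star} forests directly by a randomized one-round assignment (each node is ``active'' in each forest with probability $(1-\eps)/(1+\eps)$ and picks, for each active forest, an out-edge to a neighbor inactive there), leaving at most $O(\eps a)$ leftover out-edges per node, which are then packed into $O(\eps a)$ additional forests via the Barenboim--Elkin method. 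The hypothesis $a=\Omega(\log n)$ is used for the Chernoff concentration in this random assignment, not to bound the arboricity of a residual graph.
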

See \Cref{thm:directedDegreeSplit} and \Cref{lem:forestDecomp} for the formal statements. We note that efficient distributed orientation with out-degree less than $2a$ had remained open. The best previously known results were as follows: an orientation with out-degree at most $2a$ in $O(a\log n)$ rounds and an orientation with out-degree at most $(2+\eps)a$ in $O(\log n)$ rounds, both due to Barenboim and Elkin\cite{barenboim2010sublogarithmic}. The same authors state the closely-related problem of efficient distributed decomposition into less than $2a$ forests as the 10th open problem in their book\cite[Section 11]{barenboim2013monograph}.

\vspace{-7pt}
\subsection{Other Related Work}
\vspace{-8pt}
\paragraph{Related Work on Degree Splitting} Degree splitting was first considered by Israeli and Shiloach\cite{israeli1986improved} in the parallel algorithms model---a.k.a.~PRAM---as a subroutine for computing a maximal matching. Their method for computing the split relies on finding an Eulerian cycle of the graph and $2$-coloring its edges in alternating order. We note that a number of other work in the PRAM model, e.g., \cite{karloff1987efficient}, also used degree splittings but all relying on Eulerian cycles. Unfortunately, Eulerian cycles cannot be computed using distributed algorithms (with sublinear complexity). Hence, these methods cannot extend to our setting. Our approach is quite different, and it is morally much closer to the classic ideas of augmenting paths and blocking-flows in the maximum flow problem\cite{edmonds1972theoretical, dinicalgorithm}.

However, there is an ingenious line of work that comes close to distributedly computing degree splits. Inspired by the parallel maximal matching algorithm of Israeli and Shiloach\cite{israeli1986improved}, Hanckowiak, Karonski, and Panconesi\cite{hanckowiak1998MMConference, hanckowiak2001MMJournal} used a clever relaxation of the degree splitting to distributedly compute a maximal matching in $\poly\log n$ rounds deterministically. Their relaxation allows a small but non-negligible fraction $\delta>0$ of nodes to have an unfair split, even possibly having all of their edges in one color. This relaxation is indeed essential to their method. While for Maximal Matching this relaxation is good enough, it becomes too costly for the other problems, e.g., the number of required colors in edge-coloring blows up by an $O(\log n)$ factor, as Czygrinow et al.\cite{czygrinow2001coloring} show. In this regard, one can view our degree-splitting results as a \emph{qualitative} improvement on those of \cite{hanckowiak1998MMConference, hanckowiak2001MMJournal}, as we do not need the relaxation of admitting some unbalanced nodes. However, we pay for this refinement and our round-complexity ends up being a higher $\poly\log n$.  

\paragraph{Related Work on Edge Coloring}
Edge coloring is one of the classical distributed problems and it has been studied extensively over the years. There is a clear dichotomy between randomized and deterministic algorithms for this problem, and we review the results in these two categories.

First, we review the deterministic results. Panconesi and Rizzi provide an $O(\Delta+\log^* n)$ algorithm for $(2\Delta-1)$-edge-coloring. This complexity was recently improved by Barenboim\cite{Barenboim:2015} to $O(\Delta^{3/4}\log \Delta + \log^* n)$ and subsequently to $O(\Delta^{1/2}\log^{5/2} \Delta + \log^* n)$ by Fraigniaud, Heinrich, and Kosowski~\cite{Fraigniaud:2016}. Both these results work indeed for the harder problem of $(\Delta+1)$-vertex coloring. However, these complexities can be much larger than $\poly\log n$. The best known number of required colors for $\poly\log n$-round algorithms remained at $\Delta\cdot 2^{O(\frac{\log \Delta}{\log\log \Delta})}$, due to Barenboim and Elkin\cite{Barenboim:edge-coloring}, and $O(\Delta\log n)$, due to Czygrinow et al.\cite{czygrinow2001coloring}. See also \cite[Chapter 8 \& Chapter 11.1]{barenboim2013monograph}.

Now, we review the randomized results. The classical $O(\log n)$ round randomized Maximal Independent Set algorithm of Luby\cite{luby1986simple} leads to a randomized $O(\log n)$-round $(2\Delta-1)$-edge coloring. This round complexity was improved to $O(\log \Delta) + e^{O(\sqrt{\log\log n})}$ by Barenboim et al.\cite{barenboim2012locality}. This was further improved to just $2^{O(\sqrt{\log\log n})}$ by Elkin, Pettie, and Su\cite{EPS15}. In the randomized world, there are also algorithms for finding colorings with smaller number of colors. Panconesi and Srinivasan\cite{panconesi1997randomized} give the first such result. Dubhashi, Grable and Panconesi\cite{dubhashi1998near} later improve this to $(1+\eps)\Delta$-edge-coloring in $O(\log n)$ rounds, when $\Delta=\Omega(\log^{1+\Omega(1)} n)$. This was later refined and extended to graphs of degree $\Delta\geq C_0$, for a constant $C_0$ depending on $\eps$. The final work in this track is by Elkin, Pettie, and Su\cite{EPS15} which improves the complexity to $O(\log^* \Delta \cdot \max\{1,\frac{\log n}{\Delta^{1-o(1)}}\})$.

\paragraph{Related Work on Orientations}
Distributed low out-degree orientation of low-arboricity graphs was first studied by Barenboim et al.\cite{barenboim2010sublogarithmic} and the same results have been used in a few subsequent works. This orientation was then turned into a forest decomposition which subsequently lead to sublinear-time algorithms for maximal independent set, vertex coloring, edge coloring and maximal matching in graphs of low arboricity. See \cite[Chapter 4 \& Chapter 11.3]{barenboim2013monograph}. Sinkless orientation was recently introduced by Brandt et al.\cite{brandt2016LLL} and studied also by Chang et al.\cite{chang2016exponential}.

\subsection{Our General Method In a Nutshell}
Our general method follows a natural idea and it is inspired by classical concepts from the maximum flow problem. Although, as we soon see, to have an efficient algorithm, particularly an efficient distributed algorithm and especially a deterministic one, various aspects require novel techniques.

Let us consider directed degree splitting as our running example in discussing the methodology. Consider an arbitrary orientation. In this orientation, some nodes might have an out-degree (much) larger than in-degree and some nodes might have a larger in-degree than out-degree. Virtually, we can think of out-degree as the commodity of our flow. This means that the first group have \emph{excess flow} and the second group have \emph{flow deficiency}. Naturally, we wish to transfer some flow from the first group to the second to even things out. For instance, if we find a directed path from the first group to the second---i.e. what we usually call an \emph{augmenting path}---we can flip the direction of its edges, improving the degrees on the two endpoints, but keeping them unchanged in the middle nodes. We would continue doing this until each node has about the same out-degree and in-degree, at which point we have found our almost-perfect degree split. 

Finding an efficient distributed algorithm following this idea necessitates a number of considerations and novel techniques. First, we need the augmenting paths to be short, as otherwise we cannot even find them distributedly. That issue is not hard, because as we will see imperfect splits have relatively short augmenting paths. Second, in a fast distributed algorithm, we cannot afford to fix imbalanced nodes by using augmenting paths one by one; we instead need to have many ``disjoint" short augmenting paths. Third, we need to find them fast distributedly. These second and third issues are much more crucial. We will show that imperfect splits in fact have large sets of ``disjoint" augmenting-paths. However, finding such a set distributedly, and especially doing it deterministically, will require quite some effort. We leave those discussions to the technical sections. 

Finally, we note that in some problems, we desire much faster solutions, e.g., an $O(\log_{\Delta}\log n)$ round complexity in sinkless orientation. In these cases, our general methodology provides some algorithm but not quite matching the lower bound. There, we deviate slightly from this flow augmenting mentality and use some other ideas to optimize the complexity. 

\section{Sinkless Orientation}\label{sect:sinkless}
In this section, we present a simple $O(\log_{\Delta} \log n)$ round randomized algorithm for sinkless orientation in $\Delta$-regular graphs. This matches the $\Omega(\log_{\Delta} \log n)$ lower bound presented by Brandt et al.\cite{brandt2016LLL}. As a component of this result, we also present an $O(\log_{\Delta} n)$ round deterministic algorithm for the same problem, which itself proves the corresponding $\Omega(\log_{\Delta} n)$ lower bound of Chang et al.\cite{chang2016exponential} to be tight. These results prove \Cref{thm:sinkless}. We first present an algorithm that works assuming $\Delta>500$. The extension to all cases with $\Delta\geq 3$, to cases with irregularity, and also to a setting where we desire a more balanced split of in-degree and out-degree, are deferred to \Cref{app:sinklessGeneralizations}. 

\subsection{Sinkless Orientation for $\Delta$-regular graphs with $\Delta>500$}
\label{sec:sinklessLargeDeg}
Notice that if $\Delta=\Omega(\log n)$, orienting each edge randomly ensures that all nodes have at least $\Theta(\Delta)$ outgoing edges, with high probability, which is a sinkless orientation. The far more interesting case of the problem is when $\Delta = O(\log n)$, and this is the focus of this section. 

The algorithm is composed of two parts: a randomized part, which \emph{shatters} the graph thus leaving only small components of $\polylog n$ size, and then a deterministic part, which takes care of these remaining small components. The random part will take $O(1)$ rounds and then the deterministic part will solve these remaining $\polylog n$-size components in $O(\log_{\Delta} \log n)$ rounds, hence leading to the overall randomized complexity of $O(\log_{\Delta} \log n)$. The deterministic part itself is a full solution for $n$-node graphs with complexity $O(\log_{\Delta} n)$. Next, we present these two parts.
\subsubsection{The Randomized Part of the Algorithm (Pre-Shattering)}\label{sec:randomSinkless}
The randomized algorithm is quite simple and it works as follow. Before presenting the algorithm, we note that in the course or the algorithm, we will allow \emph{half-edges}, which are edges with only one endpoint (which needs the edge to be outgoing). 
\begin{algorithm}[H]
\caption{Randomized Orientation}\label{alg:random}
\begin{algorithmic}
\small
\State Mark each edge with probability $\frac{1}{4}$.
\State For each marked edge, orient it randomly with probability $1/2$ for each direction.
\State For each node $v$, mark $v$ as {\it a bad node} of the following types according to these rules:
\begin{itemize}
	\item Type I. If $v$ has more than $\Delta/2$ marked edges incident to it.
	\item Type II. If $v$ is not Type I but it has at least one neighbor of Type I.
	\item Type III. If $v$ is not Type I or Type II but it has no outgoing marked edges.
\end{itemize}
\State Unmark all the edges incident to Type I nodes.
\State Orient unmarked edges which both of their endpoints are good nodes arbitrarily.
\State Consider unmarked edges with exactly one good endpoint as a half-edge only attached to the bad-node.
\State Run the deterministic algorithm on the components induced by the bad nodes and their edges or half-edges.
\end{algorithmic}
\end{algorithm}
The analysis is presented in \Cref{app:sinklessLargeDeg}. Here, we only mention a few key aspects which should deliver the intuition. One can see that, after \Cref{alg:random}, each bad node is incident to at least $\Delta/2$ unmarked edges or half-edges.  Moreover, each good node has at least one outgoing marked edge. Also, the probability of each node to be bad is at most $exp(-\Delta)$. As we show in \Cref{crl:componentSize}, it follows that with high probability, each connected component induced by bad nodes has size $O(\Delta^2 \log n) = \poly\log n$. The deterministic algorithm solves these remaining components, where each remaining node has at least $\Delta/2$ edges or half-edges, in $O(\log_{\Delta}\log n)$ rounds.

\subsubsection{The Deterministic Part of the Algorithm (Post-Shattering)}\label{sec:deterministic}
Consider an $N$-node graph where each node is incident on at least $d\geq 3$ edges or half-edges. We explain how to find a sinkless orientation of this graph in $O(\log_{d} N)$ rounds deterministically. Note that when plugging this subroutine in the algorithm of the previous section, we will have $N=\poly\log n$ and $d\geq \Delta/2$. Hence, this deterministic piece would work in  $O(\log_{\Delta} \log n)$ rounds. 

\paragraph{The Deterministic Algorithm} Orient half-edges outwards from their single endpoint. For the edges, do as follows: Uniquely identify cycles by appending the ids of the related edges. For each cycle, define its preferred orientation by taking the smallest id edge from the lower id node to the higher id node, and then following this direction through the whole cycle. Call a cycle short if it has at most $2\log_{d-1} N+1$ edges. Call an edge \emph{short} if it is in at least one short cycle. First, orient each short edge $e$ consistent with the preferred orientation of the smallest id short cycle that contains $e$. Call a node \emph{short} if it is incident on at least one short edge, or on a half-edge, and \emph{long} otherwise. Then, for each long node $v$, let $u$ be one neighbor of $v$ who is closer to short nodes (compared to $v$). Then orient this edge as $v \rightarrow u$.

\begin{lemma}\label{lem:deterministicSinkless} For any $d\geq 3$, this deterministic algorithm works in $O(\log_{d} N)$ rounds on any $N$-node graph where each node is incident on at least $d$ edges or half-edges, and it orients these edges or half-edges such that each node has out-degree at least $1$.
\end{lemma}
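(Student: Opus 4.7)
The plan is to establish two things: every node ends up with out-degree at least $1$, and all of the required information can be collected within $O(\log_d N)$ rounds.

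For correctness at each node, I distinguish three cases. Any node incident on a half-edge already has out-degree $\geq 1$ by the first step. For a short node $v$ without a half-edge, let $C^*$ be the short cycle of smallest ID passing through $v$, and let $e_1,e_2$ be the two edges of $v$ on $C^*$. Any short cycle containing $e_1$ must pass through $v$, so its ID is at least that of $C^*$; conversely, $C^*$ itself contains $e_1$, so $C^*$ realizes the minimum-ID short cycle containing $e_1$, and the same argument applies to $e_2$. Hence both $e_1$ and $e_2$ are oriented according to the preferred orientation of $C^*$, which, as a simple directed cycle, sends exactly one of them out of $v$. For a long node $v$, the algorithm orients one edge toward a neighbor that is strictly closer to the nearest short node, and this is an out-edge provided such a neighbor exists.

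That the nearest short node lies within distance $O(\log_{d-1} N)$ of any long node follows from a standard Moore-bound BFS argument. Suppose, toward a contradiction, that every node in the ball $B_k(v)$ were long. Then no edge of $B_k(v)$ lies on a short cycle of $G$, so the subgraph induced on $B_k(v)$ has girth $> 2\log_{d-1}N + 1$; moreover, every node at depth $\leq k-1$ from $v$ retains all $\geq d$ of its neighbors inside $B_k(v)$. The girth bound forces each BFS layer to grow by a factor of at least $d-1$, yielding $|B_k(v)| \geq 1 + d\sum_{i=0}^{k-1}(d-1)^i \geq (d-1)^k$; for $k = \lceil \log_{d-1} N \rceil$ this gives $|B_k(v)| \geq N$. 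If the inequality is strict we have an immediate contradiction; if $|B_k(v)| = N$, then the whole graph $G$ would have no short cycle, contradicting the global Moore bound, which forces at least one short cycle (and hence at least one short node) to exist in $G$. Thus every long node has a short node within distance $O(\log_{d-1} N) = O(\log_d N)$, and the orientation rule at long nodes is well-defined.

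For the round complexity, each node can gather its $O(\log_{d-1} N) = O(\log_d N)$-hop neighborhood in $O(\log_d N)$ rounds; from this it can enumerate every short cycle through each of its incident edges together with their IDs, orient its short edges, and decide whether it is short or long. A BFS of depth $O(\log_d N)$ initiated simultaneously at all short nodes then supplies every long node with its distance to the nearest short node, from which a strictly closer neighbor can be selected for the outward orientation. The main subtlety I expect is the short-node argument, specifically noticing that the minimum-ID short cycle through $v$ is simultaneously the minimum-ID short cycle through each of its two edges at $v$; once that observation is in place, the Moore-bound step and the gather-and-orient implementation are both routine.
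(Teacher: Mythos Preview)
Your proposal is correct and follows essentially the same approach as the paper: a Moore-bound/BFS argument to show every long node is within $O(\log_{d-1}N)$ of a short node, and the observation that the minimum-id short cycle through a short node $v$ governs the orientation of both its edges at $v$. Your phrasing of the short-node case is in fact slightly more direct than the paper's (you argue that $C^*$ is simultaneously the minimum-id short cycle through each of $e_1,e_2$, hence both are oriented by $C^*$; the paper instead supposes $e_2$ is misoriented and derives a smaller-id cycle, a contradiction), but the content is the same.
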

\begin{proof}Regarding the time complexity, notice that each half-edge is oriented immediately and each short edge can find its orientation in $O(\log_{d-1} N)=O(\log_{d} N)$ rounds. Hence, in $O(\log_{d-1} N)$ rounds, we have all short edges oriented. We argue that orienting edges for long nodes can also be done in  $O(\log_{d-1} N)$ rounds because each such node has distance at most $O(\log_{d-1} N)$ to some short node. More concretely, we argue that each node has either a half-edge or a short cycle within its $\log_{d-1} N$ neighborhood. This is true because otherwise, the BFS tree of depth $\log_{d-1} N$ rooted at this node would have minimum degree $d$ and depth $\log_{d-1} N$. Such a tree necessarily has more than $N$ nodes, which would be a contradiction. 

We next argue that each node has out-degree at least $1$. The argument for long nodes is easy as they are oriented towards the short cycles. The argument for nodes incident on half-edges is also trivial. The key part is to argue that in the orientation of short edges, despite the fact that different short edges act according to possibly different short cycles, each short node has out-degree at least $1$. For that, consider a short node $v$, consider all the short edges incident on it, and let $C$ be the smallest id short cycle which contains at least one of these edges. Suppose that $C$ has edges $e_1=(v, u_1)$ and $e_2=(v,u_2)$ incident on $v$. If $e_1$ is oriented as $v\rightarrow u_1$, we are done. Suppose $e_1$ is oriented as $u_1\rightarrow v$. We claim that then it must be the case that $e_2$ is oriented as $v \rightarrow u_2$, which would finish the proof. This claim is true because the only reason for $e_2$ to be oriented in the opposite direction is if $e_2$ is a part of short cycle $C'$ which has an id smaller than that of $C$. However, that would be in contradiction with the choice of $C$.
\end{proof}

\section{Edge-Coloring via Undirected Degree Splitting}\label{sec:undirected}
In this section, we explain a method for $(2\Delta-1)\cdot(1+o(1))$-edge coloring graphs with maximum degree $\Delta$, based on degree splitting. As a formalized restatement of \Cref{thm:edge-coloring}, we get: 

\begin{theorem}\label{thm:coloring}There is a deterministic distributed algorithm that computes a $(2+\epsilon)\Delta$-edge coloring of any graph with maximum degree $\Delta$, in $O(\log^{11} n / \epsilon^3 )$ rounds. 
\end{theorem}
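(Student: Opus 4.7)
The plan is to exploit the promised deterministic undirected degree-splitting subroutine (stated in the introduction): on any graph of maximum degree $\Delta = \Omega(\poly\log n)$, it computes in $\poly\log n$ rounds a red/blue edge-partition so that each monochromatic subgraph has maximum degree at most $\tfrac{\Delta}{2}(1+\eps')$, where $\eps'$ can be driven down to $1/\log^c n$ for any desired constant $c$. I would iterate this into a complete binary tree of subgraphs: the root holds $G$, and each internal tree node is split into its red and blue children. Because at any fixed depth the subgraphs are edge-disjoint subsets of $E(G)$, all splittings at a single level run simultaneously in one invocation of the subroutine. The recursion stops at the first level $L$ where the current maximum degree $\Delta_L$ drops below a chosen threshold $\Delta^{\star} = \Theta(\log^{C} n)$, which guarantees that the precondition $\Delta = \Omega(\poly\log n)$ of the splitting lemma holds at every internal level.

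For the base case, I would run a standard deterministic $(2\Delta^{\star}-1)$-edge coloring---e.g.\ Panconesi--Rizzi's $O(\Delta^{\star}+\log^{*}\! n)$ algorithm---on each leaf subgraph with a disjoint palette per leaf; the union is then a proper edge coloring of $G$. Since the maximum degree at depth $\ell$ is at most $\Delta\bigl((1+\eps')/2\bigr)^{\ell}$ and the tree has $2^{L}$ leaves, the total color count is at most $2^{L}(2\Delta_{L}-1) \leq 2\Delta\,(1+\eps')^{L}$. With $L \leq \lceil\log\Delta\rceil = O(\log n)$, choosing $\eps' = \Theta(\eps/\log n)$ makes $(1+\eps')^{L} \leq 1+\eps/2$, giving at most $(2+\eps)\Delta$ colors overall.

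The deterministic round count is (rounds per split) times $L$. Each split costs $\poly(\log n, 1/\eps')$ rounds, which becomes $\poly(\log n / \eps)$ after substituting $\eps' = \Theta(\eps/\log n)$; summed over $L = O(\log n)$ levels this yields a $\poly\log n / \poly(\eps)$ bound, and reading off the exponents of the splitting lemma gives the claimed $O(\log^{11} n / \eps^{3})$. Levels cannot be parallelised across depth, since the depth-$(\ell{+}1)$ inputs are the outputs at depth $\ell$, but within a level the edge-disjointness of the subgraphs lets every split share the same $\poly\log n$ rounds.

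The real work is not in this reduction---a clean recursion once the splitting subroutine is in hand---but in the undirected degree-splitting lemma itself: producing an almost-perfect split with bias $1/\poly\log n$ that is uniform across \emph{every} node, deterministically in $\poly\log n$ rounds, without the ``allow a few bad nodes'' relaxation of Hanckowiak--Karonski--Panconesi. The mild subtlety that the splitting subroutine requires $\Delta = \Omega(\poly\log n)$ is absorbed by terminating the recursion at $\Delta^{\star} = \Theta(\log^{C} n)$ and handing off to Panconesi--Rizzi, without distorting the color budget above $2\Delta(1+\eps')^{L}$.
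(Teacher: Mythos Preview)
Your proposal is correct and follows essentially the same approach as the paper: recursively apply the deterministic degree-splitting lemma with parameter $\eps' = \Theta(\eps/\log\Delta)$, halting when the degree drops to the threshold required by the splitting lemma, and finish each leaf with Panconesi--Rizzi; the color count $2^{L}(2\Delta_L-1)\le 2\Delta(1+\eps')^{L}\le(2+\eps)\Delta$ and the round count $O(\log\Delta)\cdot O((\log^7 n)/\eps'^3)=O(\log^{11}n/\eps^3)$ are exactly the paper's computations. The only cosmetic difference is that the paper's stopping threshold is explicitly $\lceil 32\log_{1.5} m/\eps'^2\rceil$ (so it depends on $\eps$, not just on $n$), which you should make explicit rather than writing $\Theta(\log^C n)$ for a fixed $C$.
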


We note that, coarser degree splittings can also be used to obtain much simpler algorithms for edge coloring. Particularly, in \Cref{app:SimpleColoring}, we explain a much simpler method for edge-coloring that matches the bounds of Barenboim and Elkin \cite{Barenboim:edge-coloring}. \fullOnly{In the edge-coloring algorithm provided in the previous section, the Achilles heel is in the $2$ factor that we lose in each recursion level for splitting. After a number of iterations, these $2$ factors accumulate and amount to a great loss in the number of colors.} However, to get \Cref{thm:coloring}, we need to have an almost perfect split, particularly the loss in each degree-splitting iteration should be at most a $1+{1}/{\poly\log n}$. This would allow us to say that even after $\log n$ iterations, the overall loss is negligible. 

In the rest of this section, we explain how to achieve this fine degree splitting. We first explain in \Cref{subsec:low-degree} how to split graphs of maximum degree at most $\poly\log n$ into two spanning subgraphs, each with maximum degree almost half the previous maximum degree. Then, in \Cref{subsec:high-degree}, we explain how to lift this solution to graphs of higher degree, and how to use that to obtain the claimed edge-coloring result. 

\subsection{Deterministic Undirected Degree Splitting for Low-Degree Graphs}\label{sec:lowdeg}
\label{subsec:low-degree}

We say a red-blue edge coloring is $t$-balanced, if there are at most $t$ red edges and at most $t$ blue edges incident to each node.  We show the following result:
\begin{lemma}\label{lem:smallbalance}Given a graph $G$ with maximum degree $d$, a $\lfloor (1+\epsilon)d/2 \rfloor$-balanced coloring can be obtained in $O((d^2 \log^5 n) /\epsilon)$ rounds provided that $(4\log_{1.5}m) / d < \epsilon < 1$. \end{lemma}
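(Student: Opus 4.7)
The strategy follows the augmenting-path methodology outlined earlier in the paper. Initialize with an arbitrary red/blue edge coloring (for concreteness, color every edge red). Write $\ell=\lfloor(1+\eps)d/2\rfloor$, and call a node \emph{red-bad} (resp.\ \emph{blue-bad}) if its red (resp.\ blue) degree exceeds $\ell$. An \emph{augmenting path} is a simple path $v_0 e_1 v_1 \cdots e_k v_k$ with $v_0$ red-bad, $v_k$ blue-bad, and $e_1,e_2,\ldots$ alternately red and blue, starting with red. Flipping the colors of $e_1,\ldots,e_k$ decreases the red load at $v_0$ and the blue load at $v_k$ each by one, while leaving the red- and blue-degrees of every intermediate node untouched. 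The algorithm proceeds in phases: in each phase, find many edge-disjoint short augmenting paths and flip them in parallel, and repeat until no bad node remains.

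\textbf{Existence of short augmenting paths.} I would first show that whenever a bad node exists, there is also an augmenting path of length $O(\log_{1.5} m)$. Grow an alternating BFS from a red-bad $v_0$: odd layers extend along red edges, even layers along blue. At any intermediate non-bad node reached by a red edge, the blue-degree is at least $d-\ell \geq (1-\eps)d/2$, so the BFS can continue along $\Omega(d)$ blue edges; the symmetric statement handles the other parity. For $d$ above a small constant this yields a per-layer branching factor of at least $1.5$, so after $O(\log_{1.5} m)$ layers the BFS either exhausts the vertex set (impossible, since there are only $m$ endpoints available) or halts at a blue-bad node. The hypothesis $\eps > 4\log_{1.5} m / d$ is precisely what ensures that the resulting path length is at most $\eps d/4$, which is the slack budget needed so that flipping many paths simultaneously never overloads any intermediate node.

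\textbf{Parallel flipping and round count.} Within a phase, let every bad node broadcast the alternating BFS above to depth $O(\log_{1.5} m)$ to enumerate candidate paths $\mathcal{P}$; then build the auxiliary conflict graph $H$ on $\mathcal{P}$ with two paths adjacent iff they share an edge of $G$. A maximal independent set of $H$ selects a maximal family of edge-disjoint augmenting paths, which can all be flipped in parallel. One round on $H$ costs $O(\log_{1.5} m)$ rounds on $G$, so combining with a deterministic MIS routine of complexity $\polylog n$ makes the phase take $\polylog n$ rounds on $G$. A maximality argument shows that the selected family covers a constant fraction of the currently bad nodes, so the number of bad nodes decreases by a constant factor per phase. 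Since the initial per-node imbalance is $O(d)$ and the target slack is $\eps d/2$, a total of $O(d/\eps)$ phases suffice, which multiplied by the per-phase cost yields the claimed $O(d^2 \log^5 n / \eps)$ bound.

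\textbf{Main obstacle.} The existence argument is essentially pigeonhole and the invariant that simultaneous flipping preserves intermediate balance is mechanical given that path length is at most the slack. The real difficulty is Step~2: showing that a \emph{deterministically} computable maximal packing of augmenting paths corrects a constant fraction of the imbalance per phase, and that the conflict graph $H$, each of whose vertices is a $\Theta(\log m)$-length walk in $G$, can still be handled by $\polylog n$-round deterministic MIS/MM tools with the stated overhead. Most of the technical work in the proof goes into engineering this packing step so that the progress-per-phase and the per-phase cost combine to the announced bound.
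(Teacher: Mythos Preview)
Your proposal has two genuine gaps, and you correctly flag the second one yourself as the ``main obstacle'' without resolving it.

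\textbf{The expansion argument does not work as stated.} You define an augmenting path to run from a red-bad node to a blue-bad node and claim the alternating BFS branches by a factor $\geq 1.5$ per layer because ``at any intermediate non-bad node reached by a red edge, the blue-degree is at least $d-\ell$''. That inequality is false: a non-bad node may simply have low degree, or have almost all of its (few) edges red; nothing lower-bounds its blue degree. The paper avoids this by a different setup: it fixes a threshold $t$ and \emph{labels} a node red (blue) when it has $t$ or $t-1$ red (blue) edges, and an alternating path leaves each node along an edge of \emph{its own label color}. Then every intermediate labeled node has at least $t-1$ outgoing options, while hitting an unlabeled or same-labeled node already terminates the path successfully. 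This is why the paper's tree actually expands; your BFS need not.

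\textbf{Deterministic path packing is the whole problem.} You reduce to computing a deterministic MIS on the conflict graph $H$ of $\Theta(\log m)$-length paths and invoke ``a deterministic MIS routine of complexity $\polylog n$''. No such routine is known for general graphs---that is precisely the open problem the paper is working around. The paper's deterministic algorithm (\Cref{alg:augment}) does \emph{not} compute an MIS on a path-conflict graph; instead it grows, from every source simultaneously, ordered-disjoint pseudo-trees via token passing with per-node budgets split across $\log_{1.5} m$ levels, and proves via a bipartite counting argument (\Cref{lem:smalldeg}, \Cref{lem:reducesource}) that at least $|S|/2$ sources find augmenting paths. That combinatorial machinery is the heart of the proof and cannot be replaced by a black-box MIS call. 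Relatedly, even with a maximal packing, only a $1/d$ fraction of the found paths can be flipped (many paths may terminate at the same node), so progress per invocation is $1-\Omega(1/d)$, not a constant; this is where the second factor of $d$ in the bound comes from, and your phase count $O(d/\eps)$ does not reproduce it.
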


\begin{figure}
\centering

\begin{subfigure}[t]{0.47\textwidth}
\includegraphics[scale = 0.6]{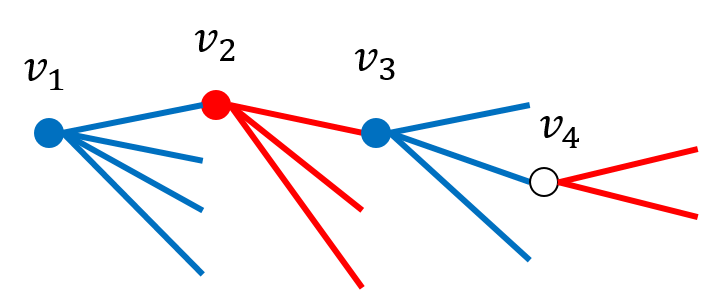}\caption{Before the augmentation.}
\end{subfigure}
\qquad
\begin{subfigure}[t]{0.47\textwidth}
\includegraphics[scale = 0.6]{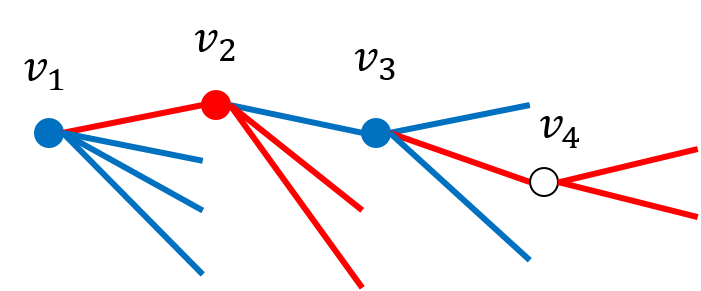}\caption{After the augmentation.}
\end{subfigure}
\caption{\label{fig:augmenting_path} $t = 4$ and $v_1 v_2 v_3 v_4$ forms an augmenting path.} 
\end{figure}

Given a $t$-balanced coloring with $t> \lfloor (1+\epsilon)d/2 \rfloor$, we show how to improve the coloring to a $(t-1)$-balanced coloring. Then we  iterate $t$ from $d$ to $\lfloor (1+\epsilon)d/2 \rfloor+1$. 

Label a node blue if it has $t$ or $t-1$ blue edges incident to it. Label a node red if it has $t$ or $t-1$ red edges incident to it. A node is a {\it source} if it is incident to  exactly $t$ red edges or $t$ blue edges. Let $S$ be the set of all source nodes. An {\it alternating path} $v_1 \ldots v_{k}$ is a path that satisfies the following:
\begin{enumerate*}
	\item $v_1 \in S$.
	\item $v_i$ and $v_{i+1}$ alternate between red and blue for $1 \leq i \leq k-2$.
	\item Edge $v_i v_{i+1}$ is colored the same with the label of $v_i$ for $1 \leq i \leq k-1$.
\end{enumerate*}
An {\it augmenting path} is an alternating path with an additional condition:
\begin{enumerate*} \setcounter{enumi}{3}
	\item \label{item:last} $v_k$ is unlabeled or labeled differently than $v_{k-1}$.
\end{enumerate*}

Suppose there is an augmenting path that starts at $v_1$. Without loss of generality, suppose that $v_1$ is blue, $v_1 v_2$ is blue, $v_2$ is red, $v_2 v_3$ is red, etc. By augmenting along the augmenting path, we recolor $v_1 v_2$ with red, $v_2 v_3$ with blue, etc. The number of red/blue edges incident to $v_2, v_3, \ldots v_{k-1}$ remain the same after the augmentation. The number of blue edges incident to $v_1$ decreases by 1. If $k$ is even, then $v_{k-1} v_k$ was blue. When we recolor $v_{k-1} v_k$ in red, the number of red edges incident to $v_k$ increases by 1. Since $v_k$ is not labelled red by (\ref{item:last}.), after the increment, the number of red edges is still at most $t-1$. Therefore, the number of source nodes decreases by 1. If $k$ is odd, then a similar argument applies. See \Cref{fig:augmenting_path} for an illustration.

Thus, given an augmenting path, we can decrease the number of source nodes by 1. By finding the augmenting paths repeatedly, we can eliminate all the sources, and so the graph becomes $(t-1)$-balanced. However, to reduce the number of sources efficiently, we need to do multiple augmentations in parallel. Therefore, we find a set of {\it almost edge-disjoint augmenting paths}, as described below. 

\paragraph{Finding Many Almost Edge-Disjoint Augmenting Paths}
\begin{definition}[Ordered Disjointness]
We say two paths $v_1 \ldots   v_k$ and $v'_1  \ldots v'_{k'}$ are {\it ordered disjoint} if there does not exist  $1 \leq i < k$ and $1 \leq j < k$ such that the ordered pair $(v_i, v_{i+1})$ and $(v'_j,v'_{j+1})$ are the same.
\end{definition}
 
     A set of almost edge-disjoint augmenting paths is a set of augmenting paths with the following property (see \Cref{fig:almostdisjnt} for an example):
\begin{enumerate*}
\item \label{itm:almt_disjoint1} The first nodes of the augmenting paths in $S$ are distinct.
\item \label{itm:almt_disjoint3} For any two augmenting paths in $S$, they are ordered disjoint.
\end{enumerate*}

The second property actually characterizes the edge-disjointness property among the paths except on the last edge. Indeed, suppose that $(v_i, v_{i+1})$ and $(v'_j, v'_{j+1})$ denote the same edge $v_i v_{i+1}$ (that is, $\{v_i, v_{i+1} \}= \{v'_{j}, v'_{j+1}\}$). Then $v_i$ and $v'_j$ must have been labeled the same color with the edge $v_i v_{i+1}$ by definition of an augmenting path. Moreover, since $v_i \neq v'_j$, we must have $v_{i} = v'_{j+1}$ and $v'_{j} = v_{i+1}$. This implies that $v_{i}$ and $v_{i+1}$ are labeled the same color and so are $v'_{j}$ and $v'_{j+1}$. Therefore, $v_1  \ldots v_{i+1}$ and $v'_1 \dots v'_{j+1}$ must be augmenting paths, which implies that $v_i v_{i+1}$ is their last edge.

The augmenting paths  may share the same last node. Augmenting along more than one augmenting paths of the same last node may increase the red/blue degree of it to more than $t-1$, creating new sources. However, if each such node accepts only one path that ends at it and we only augment the accepted paths, then the red/blue degrees can only increase to at most $t-1$. Since we know that there can be at most $d$ augmenting paths that end at the same node, if we find $\Omega(|S|)$ augmenting paths, at least $1/d$ fraction of them can be augmented. Thus, the sources can be reduced to $(1-\Omega(1/d))\cdot |S|$. 

In the rest of this section we show how to find $\Omega(|S|)$ almost edge-disjoint augmenting paths deterministically with the restriction that $(4\log_{1.5}m) / d < \epsilon < 1$. In \Cref{sect:randomundirected} we show how to find $\Omega(|S|)$ almost edge-disjoint augmenting paths randomly but without the restriction that $\epsilon d = \Omega(\log n)$.

\paragraph{Technical Overview} \Cref{cor:maximalalmostdis} in \Cref{sect:randomundirected} shows that there exists a set of $|S|$ almost edge-disjoint augmenting paths of length $O(\log n /\epsilon)$. Moreover, the cardinality of any maximal set of almost edge-disjoint augmenting paths of length $O(\log n /\epsilon)$ is $|S|$. This implies in the sequential setting we can find such a set of paths greedily. Similar to the approach in \cite{LotkerPP-distrmatch} by Lokter et al., in \Cref{sect:randomundirected}, we show it can be found in the distributed setting randomly by simulating Luby's maximal independent set algorithm on a super-graph. However, finding a maximal set of (almost) edge-disjoint augmenting paths deterministically  is more challenging technically. Fortunately, in this problem, we have the property that if we build a search tree from a source to search for an augmenting path, the search tree expands very quickly. This property allows us to build (almost) edge-disjoint search trees from different sources such that they are still able to expand quickly. Note that our algorithm and analysis below is independent of that in \Cref{sect:randomundirected}.

\begin{figure}
\centering

\begin{subfigure}[t]{0.47\textwidth}
\includegraphics[scale = 0.6]{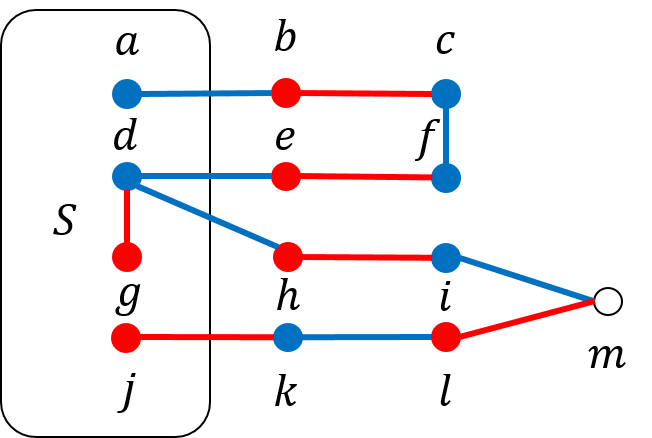}\caption{A set of almost edge-disjoint augmenting paths. $S$ is the set of sources. The 4 augmenting paths are $abcf$, $defc$, $gdhim$, and $jklm$. Note that $cf$ is included in both $abcf$ and $defc$, as their last edge.}\label{fig:almostdisjnt}
\end{subfigure}
\qquad
\begin{subfigure}[t]{0.47\textwidth}
\includegraphics[scale = 0.6]{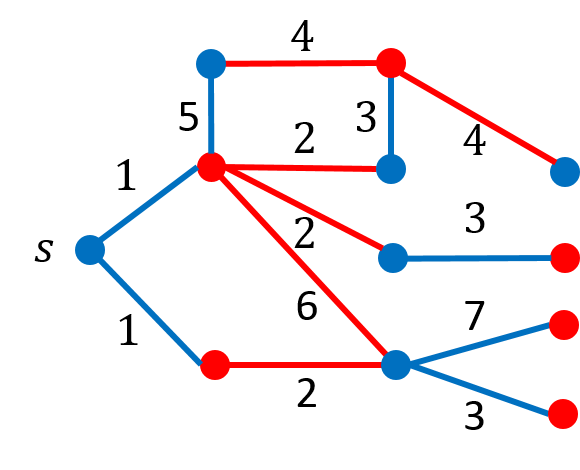}\caption{A pseudo-tree  rooted at $s$ with 4 leaves. The numbers on the edges denote their depth in the pseudo-tree. Each leaf corresponds to an alternating path from $s$ to it. }\label{fig:pseudotree}
\end{subfigure}
\caption{\label{fig:multple_paths} A set of almost edge-disjoint augmenting path and a pseudo-tree.}
\end{figure}

The deterministic algorithm is described in \Cref{alg:augment}. The main idea is to grow a tree-like structure from each $s \in S$ simultaneously. We call these {\it pseudo-trees} (See \Cref{fig:pseudotree} for an example). In a pseudo-tree, each edge is associated with 0,1, or 2 children edges, who are the adjacent edges. Each edge except those who are adjacent to $s$ have exactly one parent edge. The edges without children are leaves. Also, the structure does not contain a cycle. That is, if we view each edge as a node, then the pseudo-tree is a collection of rooted trees whose roots are the edges incident to $s$. Thus, there is an unique path from $s$ to each leaf. We require each such path to be an alternating path.

When we grow the pseudo-trees $T_s$ and $T_{s'}$ from different sources $s$ and $s'$, we require that any alternating paths $v_1 \ldots v_k$ from $T_s$ and $v'_1 \ldots v'_{k'}$ from $T_{s'}$ are ordered disjoint. This ensures that if we extract an augmenting path from each of the pseudo-trees, they will be almost edge-disjoint.


To grow the pseudo-trees simultaneously, each source $s \in S$ maintain a set of tokens. Initially, there is only one token starting at each source $s \in S$. Then, they split as they travel. The edges traveled by the tokens from $s$ form a pseudo-tree of $s$. In each step, the tokens at each node request for the edges to grow the pseudo-trees. Because of the ordered disjointness property, each edge $uv$ can be assigned only once to the tokens at $u$ and once to the tokens at $v$. In our algorithm, each request of the token will be granted either 1 edge or 2 edges. Those who got 1 edge will travel along the edge. Those who got 2 edges will split into two tokens. Then each edge will be traveled by one of the token. We will assign the edges properly so that a large fraction of the pseudo-trees will grow exponentially. 

\begin{algorithm}
\caption{Finding Almost Edge-Disjoint Augmenting Paths($S$)}\label{alg:augment}
\begin{algorithmic}\small
\State Each source $s \in S$ create a token located at $s$.
\For{level $i = 1,2,\ldots, \log_{1.5} m$}
\State For each node $u$, set its budget, $budget(u)$, to be $\lfloor (2t- d -2)/ \log_{1.5} m \rfloor$.
\For{step $j = 1,2,\ldots, h$, where $h = \lceil (16/3)\log^2_{1.5} m / \epsilon \rceil$}
\State Each active token requests for edge assignment at its current node.
\State For each node $u$, $\min(\#request,budget(u))$ requests  will be granted two tokens, others one edge. Each red (blue) node assign unused, distinct red (blue)  edges to tokens. Update $budget(u)$.
\If{a token has been assigned one edge}
	\State Travel along the edge.
\Else \Comment{The token has been assigned two edges.}
	\State Split into two tokens. Then each token take one of the edge and travel along it.
	\State Deactivate the tokens.
\EndIf
\EndFor
\State For each source:
\If {a path from the source to one of its tokens forms an augmenting path} 
\State Deactivate all the tokens from $s$ and save the augmenting path.
\ElsIf{it has less than $L_{i+1}$} 
\State Mark $s$ as {\bf failure} and remove $s$ from $S$ and  deactivate all the tokens. 
\ElsIf{it has at least $L_{i+1}$}
\State Discard the tokens so that it has exactly $L_{i+1}$ tokens. 
\State Set the $L_{i+1}$ tokens to be active.
\EndIf
\EndFor
\end{algorithmic}
\end{algorithm}

   Define $L_1 = 1$ and $L_{i+1} = 2 \lceil 3 L_{i}/ 4 \rceil$. \Cref{alg:augment} consists of multiple levels. At level $i$, an active source $s$ maintains $L_{i}$ tokens. The goal is for $\lceil 3 L_{i}/ 4 \rceil$ tokens of $s$ to split into two tokens, so the number of tokens becomes $L_{i+1} = 2 \lceil 3 L_{i}/ 4 \rceil$ (discard the rest). Once a token splits into two, they will pause until the next level. We will show that a large fraction of sources achieve the goal in each level.

Each level has $h$ steps. In each step $j$ of level $i$, the active tokens are assigned either 1 or 2 edges. If a node has more budget than the number of requesting tokens, then every token on it is given 2 edges. Otherwise, the node is allowed to assign $budget(u)$ tokens 2 edge; other tokens get 1 edge. 

Now we bound the number of requests at $u$ by $d-t+1$. First note that the requests happen only at the labelled nodes, since if a token reaches an unlabelled node then an augmenting has been found. W.l.o.g.~assume $u$ is blue. There are two cases. If $u$ has exactly $t-1$ blue edges, then there is at most $d-t+1$ red edges incident to it. If a token enters through one of the blue edges then the augmenting path must have been found and so the token will be deactivated. Therefore, those who are requesting must enter $u$ through red edges. Thus, the number of requests is at most $d-t+1$. If $u$ has exactly $t$ blue edges, then there is at most $d-t$ red edges incident to it. By the same argument, the number of requests is at most $d-t$. However, since $u \in S$, a additional token has been created at $u$ initially. In any case, the number of requests at $u$ is bounded $d-t+1$. 

The intuition of why each token can split without traveling too far is that because the number of edges that can be assigned is at least $t-1$, the average number of edges that can be assigned to each request is $(t-1)/(d-t+1) = 1 + \Omega(\epsilon)$, since $t > \lfloor (1+\epsilon)d/2 \rfloor$. Therefore, intuitively, it seems possible for each pseudo-tree to grow by an $1+\Omega(\epsilon)$ factor in each step. However, since $1+\Omega(\epsilon)$ is not even an integer, it is not clear what it means to have each token to split into $1+\Omega(\epsilon)$ tokens. Instead, our analysis shows that most tokens will split into two after $\tilde{O}(1/\epsilon)$ steps.
 
We use $budget(u)$ to control how many edges can a node assign in each level.  Suppose that we assign 1 edge to all the $d-t+1$ requests first, then the number of unused blue edges  is at least $t-1 - (d-t+1) = 2t - d -2$. These edges can be used to grant $2t - d -2$ tokens 1 additional edge. We divide the budget across all the $\log_{1.5} m$ levels, so each level has $\lfloor (2t-d-2) / \log_{1.5} m \rfloor$ budget. We say a token did not {\it successfully split}, if it has not been granted two edges during the $h$ steps. We say a source {\it failed} if less than $3/4$ fraction of their tokens sucessfully split. Given the budget, we show that the number of failure sources in each level is bounded. If a source did not fail or did not found an augmenting path yet, then we say it is {\it active}.

\begin{lemma}\label{lem:smalldeg}
Suppose that $(4\log_{1.5}m) / d < \epsilon < 1$. Let $S_i$ denote the set of active sources at the beginning of level $i$ and let $F_i$ be the set of failure sources during level $i$. We have $|F_i| \leq |S_i|/(2\log_{1.5} m) $.
\end{lemma}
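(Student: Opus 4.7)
The plan is to bound $|F_i|$ by comparing two expressions for the total number of ``wastes'' (one-edge grants) incurred during level $i$; call this quantity $W_i$.

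For a lower bound on $W_i$, I would use the definition of failure directly. Every $s\in F_i$ has more than $L_i/4$ tokens that never receive a two-edge grant during the $h$ steps of the level; call these its \emph{non-split} tokens. Each such token is active throughout all $h$ steps and is granted exactly one edge at each step, contributing $h$ wastes, so
\[
  W_i \;>\; h \cdot |F_i| \cdot \tfrac{L_i}{4}.
\]

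For an upper bound on $W_i$, I would observe that a waste at a node $u$ can occur only after $u$'s per-level budget $B=\lfloor(2t-d-2)/\log_{1.5}m\rfloor$ has been exhausted: as long as the current request count is at most the remaining budget, every request is granted two edges. Call such a $u$ \emph{saturated}. At a saturated $u$ the splits total exactly $B$ and the wastes total $R_i(u)-B$, where $R_i(u)$ is the number of requests received at $u$ in level $i$; at a non-saturated $u$ the wastes are zero. By ordered-disjointness, each request at $u$ uses a distinct directed edge of the color opposite $u$'s label into $u$, so $R_i(u)\le d-t+1$. Moreover, since each saturated node absorbs exactly $B$ splits and the total number of splits at level $i$ is at most the $|S_i|L_i$ tokens active at the start of the level, the number of saturated nodes $|E_i|$ satisfies $|E_i|\le|S_i|L_i/B$. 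Combining,
\[
  W_i \;\le\; \frac{|S_i|L_i}{B}\cdot(d-t+1-B) \;\le\; \frac{|S_i|L_i\,(d-t+1)}{B}.
\]

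Putting the two bounds together yields $|F_i| < 4|S_i|(d-t+1)/(hB)$, and it remains to verify that the chosen parameters force this to be at most $|S_i|/(2\log_{1.5}m)$, i.e.\ $(d-t+1)/(hB)\le 1/(8\log_{1.5}m)$. From $t>\lfloor(1+\epsilon)d/2\rfloor$ I get $d-t+1\le(1-\epsilon)d/2+1$ and $2t-d-2\ge\epsilon d-2$; substituting $h\ge(16/3)\log^2_{1.5}m/\epsilon$ and $B\ge(2t-d-2)/\log_{1.5}m-1$, the hypothesis $\epsilon d>4\log_{1.5}m$ absorbs the floors and the additive $O(1)$ terms, and the desired inequality reduces to $(1+3\epsilon)\,\epsilon d \ge 6\epsilon+8$, which holds.

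The main obstacle is the upper bound on $W_i$: it combines three ingredients---the all-or-nothing behavior of budget exhaustion (localizing wastes to saturated nodes), the ordered-disjointness bound on $R_i(u)$, and the crude counting $T\le|S_i|L_i$ for the total number of splits---to convert a per-node budget into a global constraint scaling with $|S_i|$. Once these three ingredients are in place, the final arithmetic is routine bookkeeping enabled by the parameter assumption $\epsilon d>4\log_{1.5}m$.
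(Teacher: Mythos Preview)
Your proposal is correct and is essentially the same double-counting as the paper's proof, just organized differently: the paper builds a bipartite multigraph between unlucky tokens $X$ and lucky tokens $Y$ (left-degree exactly $hB$ since each of the $h$ wastes of an unlucky token sits at a node where $B$ splits have occurred, right-degree at most $d-t+1$ since that bounds the requests at the node where $y$ split) to obtain $|X|\,hB\le|Y|(d-t+1)$, whereas you count one-edge grants $W_i$ directly, lower-bounding via the $h$ wastes per non-split token of each failed source and upper-bounding via saturated nodes (each absorbing exactly $B$ of the at most $|S_i|L_i$ total splits and each receiving at most $d-t+1$ requests). Both routes unwind to the same inequality $h\cdot|F_i|\cdot L_i/4 \le |S_i|L_i(d-t+1)/B$ and the same closing arithmetic using $t>\lfloor(1+\epsilon)d/2\rfloor$, $h\ge(16/3)\log_{1.5}^2 m/\epsilon$, and $\epsilon d>4\log_{1.5}m$.
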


\begin{proof}
First, call a token {\it unlucky} if it has not been successfully split after the $h$ steps. Otherwise, we say a token is {\it lucky} (either an augmenting path has been found or the token has successfully split.) Consider a bipartite multi-graph where the left nodes $X$ denote the unlucky tokens and the right nodes $Y$ denote the lucky tokens. Add an edge between an unlucky token $x$ and a lucky token $y$, if $y$'s split prevents $x$ to split on one of its $h$ steps. Note that in case $x$ traveled through the same node multiple times, multiple edges can be added between the same pair of nodes. 

Observe that the degree of a left node is exactly $h \cdot \lfloor (2t-d-2)/\log_{1.5} m \rfloor$. The number of edges in the bipartite graph is $h \cdot |X| \cdot \lfloor (2t-d-2)/\log_{1.5} m \rfloor$. Also, the degree of each right node is at most $d-t+1$, as the number of requests at each node is at most $d-t+1$. Therefore,
\begin{align*}
|Y| &\geq h \cdot |X| \cdot  \lfloor (2t-d-2)/\log_{1.5} m \rfloor / (d-t+1) \\
&\geq |X| \cdot  \frac{h}{d} \cdot \left\lfloor \frac{\epsilon d - 2 }{\log_{1.5} m} \right\rfloor && t \geq \lfloor (1+\epsilon)d/2 \rfloor + 1\\
&\geq |X| \cdot  \frac{h}{d} \cdot \left(\frac{\epsilon d}{2\log_{1.5} m} \right) && \mbox{$\epsilon d \geq 4\log_{1.5} m$ and when $\log_{1.5} m \geq 2$ } \\ 
&\geq |X| \cdot (8/3)\log_{1.5}m && h \geq (16/3)\log^2_{1.5} m / \epsilon  
\end{align*}

Therefore,
\begin{align*}
|S_i| \cdot L_{i} = |X| + |Y| &\geq |X|\cdot (1 + (4/3)\log_{1.5}m) \\
&\geq |F_i| \cdot \left( \frac{3}{4} L_i \right)\cdot (1 +(4/3)\log_{1.5}m) 
\geq |F_i| \cdot \left( \frac{3}{4} L_i \right)\cdot \left( \frac{8}{3} \log_{1.5} m  \right)
\end{align*}
Thus, $|F_i| \leq |S_i|/(2\log_{1.5} m) $.
\qedhere
\end{proof}

\begin{lemma}\label{lem:reducesource}

Let $A_i \subseteq S_i \setminus F_i$ denote the set of sources that successfully find an augmenting path during level $i$.  $\left| \bigcup_{i} A_i \right| \geq |S|/2$.
\end{lemma}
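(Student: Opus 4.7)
The plan is to combine the per-level failure bound from \Cref{lem:smalldeg} with a geometric-growth argument showing that no source can remain active indefinitely. First I would partition the sources in $S$ according to how they exit the algorithm: every $s \in S$ either (a) finds an augmenting path during some level $i$, contributing to $A_i$; (b) is marked as a failure during some level $i$, contributing to $F_i$; or (c) survives to the end of all $\log_{1.5} m$ levels without ever finding an augmenting path. The first task is to rule out case (c); the second is to bound the total failures in case (b).

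For case (c), I would use the defining recurrence $L_{i+1} = 2\lceil 3L_i/4\rceil \geq 1.5\,L_i$ together with $L_1 = 1$ to conclude $L_{\log_{1.5} m + 1} \geq 1.5^{\log_{1.5} m} = m$. On the other hand, the tokens of a single source $s$ occupy distinct leaves of a pseudo-tree whose nodes are edges of $G$ (each edge has at most one parent and at most two children, and the structure is acyclic in the pseudo-tree sense). Consequently a source can never carry more than $m$ tokens. Therefore a source that has neither found an augmenting path nor failed by the end of level $\log_{1.5} m$ would, at the check at the end of that level, be required to have $L_{\log_{1.5} m + 1} \geq m$ tokens actually split, which exceeds the supply of edges in the pseudo-tree unless an augmenting path was found en route. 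Thus case (c) is empty, i.e., $|S| = |\bigcup_i A_i| + \sum_i |F_i|$.

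For case (b), I would apply \Cref{lem:smalldeg} level by level. Since $|S_i| \leq |S|$ for every $i$ (active sources form a decreasing sequence as failures and successes accumulate), summing over the $\log_{1.5} m$ levels yields
\[
\sum_{i=1}^{\log_{1.5} m} |F_i| \;\leq\; \sum_{i=1}^{\log_{1.5} m} \frac{|S_i|}{2\log_{1.5} m} \;\leq\; \log_{1.5} m \cdot \frac{|S|}{2\log_{1.5} m} \;=\; \frac{|S|}{2}.
\]
Combining this with the identity from the previous paragraph gives $|\bigcup_i A_i| \geq |S| - |S|/2 = |S|/2$, as claimed.

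The only real subtlety is the justification that a single source's token count cannot exceed $m$, since the remaining steps are bookkeeping on the two quantities already controlled. I would articulate this by noting that each token at the beginning of a level corresponds to a distinct leaf edge of the source's pseudo-tree, and new tokens are created only by splitting a leaf into two children edges along previously-unused adjacent edges of $G$; hence the pseudo-tree is a subgraph (viewed as an edge-set) of $G$, and the number of its leaves is at most the number of its edges, which is at most $m$. With this in hand, the geometric lower bound $L_{\log_{1.5} m + 1} \geq m$ forces the required termination behavior and the proof closes.
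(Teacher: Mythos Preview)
Your proposal is correct and follows essentially the same approach as the paper: bound $\sum_i |F_i| \leq |S|/2$ via \Cref{lem:smalldeg} summed over the $\log_{1.5} m$ levels, then rule out indefinitely-active sources by pitting the geometric growth of $L_i$ against the $m$-edge bound on the pseudo-tree. The paper's version of the second step counts the total number of edges traversed across all levels, obtaining $\sum_{i=1}^{\log_{1.5} m} L_{i+1} > m$, rather than only the final-level token count; this makes the strict inequality immediate and sidesteps the small gap in your argument where $L_{\log_{1.5} m+1} \geq m$ is not by itself a contradiction with ``at most $m$ leaf edges.''
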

\begin{proof}
 Notice that $S_{i+1} = S_{i} \setminus (A_i \cup F_i)$. First we claim that $\left|\bigcup_{i} F_i\right| \leq |S|/2$. Since $|F_i| \leq |S_i| /(2\log_{1.5}m) \leq |S| / (2\log_{1.5}m)$, we have $\sum_{i} |F_i| \leq |S|/2$. Now we show that every source $x \in S \setminus \bigcup_{i} F_i$ must be in one of $A_i$. Suppose to the contrary, there exists $x \in S$ such that $x \in S_{\log_{1.5}m} \setminus (A_{\log_{1.5}m} \cup F_{\log_{1.5}m})$. Consider the edges travel by the tokens of $x$. At level $i$, since it has successfully advanced to level $i+1$, it must have traveled at least $L_{i+1}$ edges. The total number of edges traveled by the tokens of $x$ is $\sum_{i=1}^{\log_{1.5} m} L_{i+1} > (3/2)^{\log_{1.5} m} = m$. A contradiction occurs.
\end{proof}

In \Cref{alg:augment}, level $i$ takes $O(i \cdot h)$ rounds, since the length of the path spanned by the tokens from the source is $O(i\cdot h)$ and they have to communicate with the source at the end of the level. Therefore, the total number of rounds of \Cref{alg:augment} is $O(\sum_{i=1}^{\log_{1.5} m} i\cdot h) = O(h \log^2_{1.5} m) = O((\log^4 n)/\epsilon)$.

\begin{proof}[Proof of \Cref{lem:smallbalance}]
Given a $t$-balanced coloring, we can improve it to a $(t-1)$-balanced coloring by calling \Cref{alg:augment} repeatedly. By \Cref{lem:reducesource}, each invocation of \Cref{alg:augment} finds $|S|/2$ almost edge-disjoint augmenting paths from distinct sources. Of those $|S|/2$ augmenting paths, at least $1/d$ fraction will be accepted and augmented, since there can be at most $d$ paths ending with the same node. Therefore, the source reduces by a $1/(2d)$ fraction for each invocation. Since $|S| \leq n$, after $O(d \log n)$ invocations, we have obtained a $(t-1)$-balanced coloring. If we iterate $t$ from $d$ to $\lfloor (1+\epsilon)d/2 \rfloor -1$, we obtain a $\lfloor (1+\epsilon)d/2	 \rfloor$-balanced coloring. The total number of invocation of \Cref{alg:augment} is $O(d^2 \log n)$. Thus, the running time is $O((d^2 \log^5 n)/\epsilon)$.
\end{proof}

\subsection{Deterministic Undirected Degree Splitting for High-Degree Graphs}
\label{subsec:high-degree}
Suppose that the input is a graph with maximum degree $\Delta$. In this case, if we apply \Cref{lem:smallbalance} directly, it takes $\tilde{O}(\Delta^2/\epsilon)$ rounds to get a $\lfloor (1+\epsilon)\Delta/2 \rfloor$-balanced coloring. Here, we show a  method which removes the dependency on $\Delta$. 
\begin{theorem}\label{lem:largedegree}Suppose that $\Delta \geq \lceil 32 \log_{1.5} m /\epsilon^2 \rceil$, then a $\lfloor (1+\epsilon)\Delta/2 \rfloor$-balanced coloring can be obtained in $O((\log^7 n) / \epsilon^3)$ rounds. \end{theorem}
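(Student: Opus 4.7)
The plan is to reduce the high-degree case to the low-degree regime handled by \Cref{lem:smallbalance} via a bucketing construction on an auxiliary graph. Set $d:=\lceil 13\log_{1.5} m/\epsilon\rceil$ and $\epsilon':=\epsilon/3$. For each $v\in V(G)$ of degree $d_v$, partition its incident edges arbitrarily into $k_v:=\lceil d_v/d\rceil$ buckets of at most $d$ edges each. Build an auxiliary graph $H$ whose vertices are these buckets and whose edges are inherited from $G$: an edge $(u,v)\in E(G)$ becomes an edge of $H$ between the bucket of $u$ that contains it and the bucket of $v$ that contains it. Each $G$-vertex simulates all of its own buckets on its single processor, and since the \local{} model allows unbounded-bandwidth messages, one round of communication in $H$ costs $O(1)$ rounds in $G$.

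The graph $H$ has maximum degree at most $d$ and exactly $m$ edges. With the chosen $d$, the precondition $(4\log_{1.5} m)/d<\epsilon'$ of \Cref{lem:smallbalance} holds. Invoking that lemma on $H$ with parameter $\epsilon'$ produces a $\lfloor(1+\epsilon')d/2\rfloor$-balanced red-blue coloring of $H$ in $O(d^2\log^5 n/\epsilon')$ rounds, which simplifies to $O(\log^7 n/\epsilon^3)$ after substituting the values of $d$ and $\epsilon'$.

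To translate back to $G$, fix a vertex $v$ and a color. Its $k_v$ buckets each carry at most $(1+\epsilon')d/2$ edges of that color, so its color-degree in the induced coloring of $G$ is at most
\[
k_v\cdot\frac{(1+\epsilon')d}{2}\;\le\;\frac{(1+\epsilon')(\Delta+d)}{2}.
\]
The hypothesis $\Delta\ge\lceil 32\log_{1.5} m/\epsilon^2\rceil$ forces $d\le\epsilon\Delta/2$, and combined with $\epsilon'=\epsilon/3$ and $\epsilon\le 1$ a short calculation shows that the right-hand side is at most $(1+\epsilon)\Delta/2$. Since color-degree is an integer, the induced coloring of $G$ is $\lfloor(1+\epsilon)\Delta/2\rfloor$-balanced, as required.

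The main obstacle is calibrating $d$: it has to be at least $\Omega(\log m/\epsilon)$ so that the hypothesis of \Cref{lem:smallbalance} is met, and at most $O(\epsilon\Delta)$ so that the additive $+1$ inside $\lceil d_v/d\rceil$ is absorbed into the $\epsilon$-slack of the final balance. These two constraints are jointly feasible precisely in the regime $\Delta=\Omega(\log m/\epsilon^2)$ assumed by the theorem; in the complementary regime one would have to invoke \Cref{lem:smallbalance} directly and pay its $\Delta^2$ cost. Everything else in the argument is bookkeeping over a routine \local{}-model simulation, so no new machinery beyond \Cref{lem:smallbalance} is needed.
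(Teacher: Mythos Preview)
Your proposal is correct and follows essentially the same approach as the paper: split each vertex into $O(\Delta/d)$ copies of degree at most $d=\Theta(\log m/\epsilon)$, invoke \Cref{lem:smallbalance} on the resulting low-degree auxiliary graph with a slightly smaller $\epsilon'$, and merge back. The only cosmetic differences are the constants ($\epsilon'=\epsilon/3$ and $d=\lceil 13\log_{1.5}m/\epsilon\rceil$ for you versus $\epsilon'=\epsilon/2$ and $d=\lceil 4\log_{1.5}m/\epsilon'\rceil$ in the paper) and that you bucket according to each vertex's actual degree $d_v$ rather than the global maximum $\Delta$; neither affects the argument.
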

\begin{proof}
Let $\epsilon' = \epsilon/2$ and $d = \lceil 4\log_{1.5} m/\epsilon' \rceil$. Thus, we have $\Delta \geq 2d/ \epsilon'$. First, obtain $G'$ as follows: For each node $u$, split it into $\lceil \Delta/ d \rceil$ \emph{copy-nodes} and divide evenly the edges adjacent to $u$ between the copy-nodes such that each copy-node, except possibly one, has degree $d$.  Since $\epsilon' d \geq 4 \log_{1.5} m$, we can apply \Cref{lem:smallbalance} to get a get a $\lfloor (1+\epsilon')d/2 \rfloor$-balanced coloring in $G'$ in $O((d' \log^5 n)/ \epsilon')= O((\log^{7} n )/ \epsilon^3)$ rounds. Then, we merge the copy-nodes back. Then, for each node of $G$, the number of incident edges of each color is bounded by:

\begin{align*}
\left\lfloor \frac{(1+\epsilon')d}{2} \right\rfloor \cdot \left\lceil \frac{\Delta}{d} \right\rceil \leq  \frac{(1+\epsilon')d}{2} \cdot \left( \frac{\Delta}{d} + 1 \right)
	&\leq \frac{\Delta}{2} \cdot \left(1 + \epsilon' \right) + d \\
	&\leq \frac{\Delta}{2} \cdot \left(1 + \epsilon' + \frac{2d}{\Delta}\right)
	\leq \frac{\Delta}{2} \cdot \left(1 + \epsilon \right) && \mbox{$\Delta \geq 2d/ \epsilon'$ }
\end{align*}
Since the number of colors must be an integer, it is a $\lfloor (1+\epsilon)\Delta/2 \rfloor$-balanced coloring. 
\end{proof}

\begin{proof}[Proof of \Cref{thm:coloring}]

Let $\epsilon' = \epsilon / (2\log_2 \Delta)$ and $\Delta_0 = \Delta$. Suppose that $\Delta_0 \leq \lceil 32 \log_{1.5} m /\epsilon'^2\rceil$, we can use Panconesi and Rizzi's algorithm \cite{panconesi-rizzi} that runs in $O(\Delta_0 + \log^{*} n)$ rounds to get $(2\Delta_0-1)$-edge coloring. Otherwise, we apply \Cref{lem:largedegree} to get a $\lfloor (1+\epsilon')\Delta/2 \rfloor$-balanced coloring. For the subgraph consists of red edges and the subgraph consists of blue edges, we recursively apply this procedure on both of them in parallel with a new maximum degree $\Delta_{i+1} = \lfloor (1+\epsilon')\Delta_i /2 \rfloor$. Let $t$ be the level where the recursion halts. That is, $t$ is the smallest integer such that $\Delta_t \leq \lceil 32 \log_{1.5} m /\epsilon'^2\rceil$. The recursion will stop at level $t$, where we will apply Panconesi and Rizzi's algorithm to get an $(2\Delta_{t} - 1)$-edge coloring on each subgraph. Since the number of subgraphs at level $t$ is $2^{t}$, the total number of color used is 
\begin{align*}
(2\Delta_t - 1) \cdot 2^{t} &\leq 2 \cdot (1+\epsilon')^t \Delta \ \\
&\leq 2\Delta + 4t\epsilon'\Delta  && \mbox{$(1+x)^n \leq 1+2nx$ for $0 \leq nx \leq 1$} \\
&\leq (1 + 2t\epsilon') \cdot 2\Delta \\
&\leq (1 + \epsilon) \cdot 2\Delta && t \leq \log_{2} \Delta
\end{align*}

We apply the balanced coloring procedure for $O(\log \Delta)$ rounds, each takes $O(\log^{10}n /\epsilon^3)$ rounds by \Cref{lem:largedegree}. At the last level, Panconesi and Rizzi's algorithm takes $O(\Delta_k + \log^{*} n) = O(\log n / \epsilon'^2 + \log^{*} n) = O(\log^3 n/ \epsilon^2)$ rounds. Therefore, the total number rounds is: $O(\log^{11} n / \epsilon^3)$.
\end{proof}

\bibliographystyle{alpha}
\bibliography{ref}
\appendix
\section{Missing Proofs of \Cref{sect:sinkless}}
\label{app:sinkless}
\subsection{Missing Proofs of \Cref{sec:sinklessLargeDeg}}
\label{app:sinklessLargeDeg}

\begin{claim} After \Cref{alg:random}, each node is incident to at least $\Delta/2$ unmarked edges.\end{claim}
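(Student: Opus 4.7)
The plan is a straightforward case analysis on the type of the node, driven directly by the definitions in \Cref{alg:random}. Let $v$ be an arbitrary node and split into two cases depending on whether $v$ is Type~I.

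If $v$ is Type~I, then by the step ``Unmark all the edges incident to Type~I nodes,'' every edge incident to $v$ becomes unmarked after this step. Hence $v$ has $\Delta \geq \Delta/2$ unmarked edges, and we are done.

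If $v$ is not Type~I, then by definition of Type~I the number of edges incident to $v$ that were marked in the initial random marking is at most $\Delta/2$. Consequently, the number of edges incident to $v$ that remained \emph{unmarked} after the initial marking is at least $\Delta - \Delta/2 = \Delta/2$. Unmarking edges incident to Type~I nodes (the only later step that changes markings) can only turn marked edges into unmarked ones, and so the count of unmarked edges incident to $v$ only grows. Therefore $v$ still has at least $\Delta/2$ unmarked edges at the end of \Cref{alg:random}.

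Combining the two cases, every node has at least $\Delta/2$ unmarked edges incident to it, which is exactly the claim. There is no real obstacle here: the argument is just an unpacking of the definition of Type~I (``more than $\Delta/2$ marked edges'') together with the fact that the only operation that modifies markings after classification, namely the unmarking step, is monotone in the direction that helps us.
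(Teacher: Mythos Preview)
Your proof is correct and follows essentially the same two-case argument as the paper: Type~I nodes have all $\Delta$ edges unmarked, while non-Type-I nodes by definition had at most $\Delta/2$ marked edges to begin with, hence at least $\Delta/2$ unmarked ones. Your added remark that the unmarking step is monotone is a small extra detail the paper leaves implicit, but the approach is identical.
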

\begin{proof}
Node of type I get all their $\Delta$ edges unmarked. Each other node has at most $\Delta/2$ marked edges incident to it, by definition.
\end{proof}

\begin{claim} After \Cref{alg:random}, if the unmarked edges are oriented in a way such that all bad nodes have at least one outgoing unmarked edge, then the orientation is sinkless. \end{claim}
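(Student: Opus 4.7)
The plan is to verify that every node, whether good or bad, ends up with at least one outgoing edge under the final orientation, by a simple case split on the type of node.

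The bad nodes are immediate: by the hypothesis of the claim, each bad node has at least one outgoing unmarked edge (possibly a half-edge), so none of them is a sink regardless of the orientation chosen for the remaining marked edges.

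The substantive case is a good node $v$. The key observation is that ``good'' was defined precisely to rule out Type III, so $v$ must be incident on at least one edge that was both marked in Step 1 and oriented outward from $v$ by the random choice in Step 2. I would then argue that this particular edge survives Step 4 with its orientation intact. Step 4 unmarks only edges incident to some Type I vertex; but $v$ is good (hence not Type I itself) and, since $v$ is not Type II either, none of $v$'s neighbors is Type I. Consequently, no edge incident to $v$ is touched by Step 4, and in particular the outward-oriented marked edge of $v$ remains marked and keeps pointing away from $v$, certifying that $v$ is not a sink in the final orientation.

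There is essentially no obstacle here; the argument is a direct unpacking of the definitions of the three bad types. The only point worth emphasizing is the role of Type II: it exists precisely to shield good nodes from the Step 4 unmarking, since any node with a Type I neighbor is itself declared bad and handled via the hypothesis. Once this is noted, the claim reduces to the combination of (i) the hypothesis (for bad nodes), and (ii) the preservation of a single outward marked edge chosen in Step 2 (for good nodes).
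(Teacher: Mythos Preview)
Your proposal is correct and follows the same approach as the paper. The paper's proof is terser (essentially your two cases in one line each), and in particular it does not spell out why the outward marked edge of a good node survives Step~4; your use of the Type~II definition to rule out any Type~I neighbor of a good node makes this step explicit, which is a reasonable addition.
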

\begin{proof}
If a node is bad, then by the assumption it will have at least one outgoing unmarked edge. Otherwise, it must have at least one outgoing marked edge by the definition of Type III.
\end{proof}

In \Cref{sec:deterministic}, we show the deterministic algorithm can be used orient the unmarked edges if the number of unmarked edges incident to each node is between $\Delta/2$ and $\Delta$. Now it remains to bound the size of the connected components induced by the bad nodes.

Let $T_I(v)$ and $T_{III}(v)$ denote the events that $v$ is Type I or Type III, respectively. 

\begin{lemma}\label{lem:typeI}For any $v\in G$, $\Pr(T_I(v)) \leq \exp(-\Delta/12)$. \end{lemma}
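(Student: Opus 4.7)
\textbf{Proof plan for \Cref{lem:typeI}.} The event $T_I(v)$ says that more than $\Delta/2$ of the $\Delta$ edges incident on $v$ are marked, and each edge is marked independently with probability $1/4$. So the plan is simply to apply a multiplicative Chernoff bound to the sum $X = \sum_{e \ni v} \mathbf{1}[e \text{ is marked}]$, a sum of $\Delta$ independent Bernoullis of parameter $1/4$.

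The mean is $\mu = \E[X] = \Delta/4$, and the event we wish to bound is $\{X > \Delta/2\} = \{X > 2\mu\}$, i.e.\ a deviation factor of $1+\delta$ with $\delta = 1$. Using the standard upper-tail Chernoff inequality in the form $\Pr[X \geq (1+\delta)\mu] \leq \exp(-\delta^2 \mu / 3)$ valid for $0 \leq \delta \leq 1$, we get
\[
\Pr(T_I(v)) \;\leq\; \Pr[X \geq 2\mu] \;\leq\; \exp\!\left(-\frac{1 \cdot \mu}{3}\right) \;=\; \exp(-\Delta/12),
\]
which is exactly the claimed bound.

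There is essentially no obstacle here: independence of the marking across edges is immediate from the description of \Cref{alg:random}, and $\Delta$-regularity (or the assumption that $v$ has exactly $\Delta$ incident edges) fixes the number of Bernoulli trials. The only minor care required is the choice of Chernoff variant so that the constant in the exponent comes out to $1/12$; using the $\delta^2/3$ form with $\delta = 1$ gives this immediately, and one could alternatively use the slightly sharper $\delta^2/(2+\delta)$ form to obtain the same $1/12$. This lemma will then plug into the subsequent bounds on $\Pr(T_{III}(v))$ and on the probability that any given node is bad, which together with a standard Galton--Watson-style argument will yield the $O(\Delta^2 \log n)$ component-size bound quoted in \Cref{sec:randomSinkless}.
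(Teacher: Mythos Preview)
Your proposal is correct and matches the paper's proof essentially verbatim: compute $\E[X]=\Delta/4$ and apply the multiplicative Chernoff bound with $\delta=1$ to obtain $\exp(-\Delta/12)$. The only difference is that you spell out which Chernoff variant gives the constant, whereas the paper simply cites ``a Chernoff Bound.''
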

\begin{proof}
Let $X$ denote the number of edges incident to $v$. We have $\E[X] = \Delta/4$. By a Chernoff Bound, $\Pr(X > \Delta / 2) \leq \exp(-\Delta /12)$.
\end{proof}

\begin{lemma}\label{lem:typeIII}For any $v\in G$, $\Pr(T_{III}(v)) \leq \exp(-\Delta/8)$. \end{lemma}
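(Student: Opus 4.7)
The plan is to bound $\Pr(T_{III}(v))$ by the probability of the (strictly larger) event that $v$ has no outgoing marked edge incident to it, and then use independence across edges.

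First I would set up the single-edge probability. By \Cref{alg:random}, each of the $\Delta$ edges incident to $v$ is marked independently with probability $1/4$, and conditioned on being marked, its orientation is chosen uniformly at random in $\{0,1\}$. Hence, for each edge $e$ incident to $v$, the event ``$e$ is marked and oriented outward from $v$'' has probability exactly $\tfrac{1}{4}\cdot \tfrac{1}{2} = \tfrac{1}{8}$, and these events are mutually independent across the $\Delta$ edges incident to $v$ (the orientations of different edges are independent coin flips).

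Next I would note the containment $T_{III}(v) \subseteq \{v \text{ has no outgoing marked edge}\}$, which is immediate from the definition of Type III. Combining this with independence gives
\[
\Pr(T_{III}(v)) \;\leq\; \left(1 - \tfrac{1}{8}\right)^{\Delta} \;=\; (7/8)^\Delta.
\]
To finish, I would compare with $\exp(-\Delta/8)$. Using $\ln(1+x) \geq x - x^2/2$ with $x=1/7$, we have $\ln(8/7) \geq \tfrac{1}{7} - \tfrac{1}{98} = \tfrac{13}{98} \geq \tfrac{1}{8}$, so $\ln(7/8) \leq -1/8$, and therefore $(7/8)^\Delta \leq \exp(-\Delta/8)$, giving the claimed bound.

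There is no real obstacle here; the only subtle point is to make sure that the marking and the orientation of an edge incident to $v$ together yield an independent Bernoulli$(1/8)$ trial across the $\Delta$ edges, so that the factorization of the ``no outgoing marked edge at $v$'' event is valid, and this is direct from the description of \Cref{alg:random}.
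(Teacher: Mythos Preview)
Your proof is correct and follows essentially the same approach as the paper: bound $\Pr(T_{III}(v))$ by the probability that no incident edge is both marked and oriented outward from $v$, which is $(1-1/8)^{\Delta}$ by independence, and then compare with $\exp(-\Delta/8)$. The only cosmetic difference is that the paper invokes the standard inequality $1-x \leq e^{-x}$ directly for the last step, whereas you take a slightly longer (but equally valid) route via $\ln(1+x) \geq x - x^2/2$.
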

\begin{proof}
The probability that an edge is marked and oriented toward $v$ is $1/8$. Therefore, the probability that no edges are marked and oriented toward $v$ is $(1- 1/8)^{\Delta} \leq \exp(-\Delta/8)$.
\end{proof}

Let $\dist(u,v)$ denote the distance between $u$ and $v$ in $G$. If $\dist(u,v) \geq 2$, then it is clearly that the event $T_I(u)$ (or $T_{III}(u))$ and $T_I(v)$ (or $T_{III}(v)$) are independent. Let $V'$ be the set of nodes that are Type I or Type III. Define $\dist(X,v) = \min_{u \in X} \dist(u,v)$. Let $E_{2,4} = \{uv \mid 2 \leq \dist(u,v) \leq 4\}$ denote the set of edges whose endpoints have distance between 2 and 4. Let $N_{k}(u) = \{x \mid \dist(x,u) \leq k\}$ be the set of nodes within distance $k$ to $u$.

\begin{lemma}\label{lem:badcomponent} Let $C$ be a connected components induced by the bad nodes. Then, there exists $S \subseteq V' \cap C$ such that $|S| \geq |C| / \Delta^2$ and $(S, E_{2,4})$ is connected.\end{lemma}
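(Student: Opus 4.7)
The plan is to construct $S$ by a greedy extension that maintains throughout two invariants: (i)~$(S,E_{2,4})$ is connected, and (ii)~any two vertices of $S$ are at $G$-distance at least $2$ (the second property is what will later permit independence of the $T_I$ and $T_{III}$ events on the vertices of $S$ in the ensuing union bound). I initialize $S=\{v_0\}$ for an arbitrary $v_0\in V'\cap C$ and iteratively add to $S$ any $u\in(V'\cap C)\setminus S$ satisfying both (a)~$\dist(u,s)\in[2,4]$ for some $s\in S$ and (b)~$\dist(u,s')\geq 2$ for every $s'\in S$; the process halts when no such $u$ remains. Condition~(a) preserves~(i) and condition~(b) preserves~(ii), so at termination $(S,E_{2,4})$ is connected with pairwise $G$-distances at least $2$.

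For the size bound I chain two inequalities. First, every bad vertex lies within $G$-distance~$1$ of some vertex of $V'\cap C$: Types I and III are in $V'$ by definition, while each Type II vertex has a Type I neighbor that also lies in $C$ by connectedness. Fixing for each $v\in C$ a representative $\pi(v)\in V'\cap C$ with $\dist(v,\pi(v))\leq 1$, each fiber of $\pi$ has at most $\Delta+1$ preimages (the center plus its at most $\Delta$ Type II neighbors), which yields $|V'\cap C|\geq|C|/(\Delta+1)$. Second, I will show $|V'\cap C|\leq(\Delta+1)|S|$ by arguing that every $u\in(V'\cap C)\setminus S$ is $G$-adjacent to some $s\in S$; combined with invariant~(ii), this places $(V'\cap C)\setminus S$ inside $\bigcup_{s\in S}(N_1(s)\setminus\{s\})$, a set of cardinality at most $\Delta|S|$. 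Chaining yields $|S|\geq|C|/(\Delta+1)^2$, which matches the target $|C|/\Delta^2$ up to a constant factor that is negligible in the $\Delta>500$ regime considered here.

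The one nontrivial step, and the main obstacle, is proving that no $u\in V'\cap C$ has $\dist(u,S)\geq 5$ at termination. If such a $u$ existed it would trivially satisfy~(b) and must therefore have failed~(a), which forces $\dist(u,s)\notin[2,4]$ for every $s\in S$ and thus $\dist(u,S)\geq 5$. I will derive a contradiction from the $G[C]$-connectedness of $C$: take any $G[C]$-path $u=w_0,w_1,\ldots,w_\ell$ ending in $S$, and track $d_i:=\dist(\pi(w_i),S)$. Since $\dist(\pi(w_i),\pi(w_{i+1}))\leq\dist(\pi(w_i),w_i)+\dist(w_i,w_{i+1})+\dist(w_{i+1},\pi(w_{i+1}))\leq 1+1+1=3$, the triangle inequality gives $|d_{i+1}-d_i|\leq 3$; with $d_0\geq 5$ and $d_\ell=0$, the first index $i^\star$ with $d_{i^\star}\leq 4$ must in fact satisfy $d_{i^\star}\in[2,4]$. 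But then $\pi(w_{i^\star})$ lies in $V'\cap C\setminus S$ and meets both~(a) and~(b), contradicting the maximality of $S$.
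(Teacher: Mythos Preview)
Your argument is correct, and the underlying idea---greedily grow $S\subseteq V'\cap C$ while maintaining $E_{2,4}$-connectivity---is the same as the paper's. The execution differs, though, and the paper's is more direct. Instead of running the greedy process to maximality and then proving the two covering inequalities $|C|\leq(\Delta+1)|V'\cap C|$ and $|V'\cap C|\leq(\Delta+1)|S|$, the paper simply observes that whenever $|S|<|C|/\Delta^2$, the $2$-ball $N_2(S)$ in $G$ has size at most $|S|(1+\Delta+\Delta(\Delta-1))\approx |S|\Delta^2<|C|$; since $C$ is connected in $G[C]$, some $w\in C$ lies at $G$-distance exactly $3$ from $S$, and either $w$ itself (if Type~I or~III) or a Type~I neighbor of $w$ (if Type~II) lies in $V'\cap C$ at distance in $[2,4]$ from $S$ and can be added. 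This avoids the $\pi$-map and the path-tracking $d_i$ argument entirely. Your explicit invariant~(ii) (pairwise $G$-distance $\geq 2$) is not part of the lemma statement, but it is implicitly maintained by the paper's construction as well and is what justifies independence in the subsequent union bound, so stating it is a plus. Finally, both your bound $|C|/(\Delta+1)^2$ and the paper's count $1+\Delta^2$ miss the stated $|C|/\Delta^2$ by a harmless additive lower-order term; neither affects the downstream $O(\Delta^2\log n)$ conclusion.
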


\begin{proof}
We will construct $S$ step by step. First notice that $V' \cap C$ must be non-empty, since a Type II node must be adjacent to a Type I node, which must be in $V' \cap C$. Let $u \in V' \cap C$. Initially, Let $S =\{u\}$. Now we will show how to extend $S$ by adding one node $z \in (V'\cap C) \setminus S$ into it provided that $|S| < |C| / \Delta^2$, and $z$ is connected to some node in $S$ with an edge in $E_{2,4}$.

Suppose that $|S| < |C| / \Delta^2$, then there exists a node $w$ in $C\setminus S$ such that $\dist(S, w) = 3$, since the 2-neighborhood of $S$ can only span at most $|S|(1 + \Delta + \Delta \cdot (\Delta - 1)) \leq |S|\Delta^2$ nodes and $C$ is connected. If $w$ is Type I or Type III, then $w \in V' \cap C$ and we can add $w$ to $S$. Otherwise, it is Type II, which implies it has a neighbor $z$ of Type I. We must have $2 \leq \dist(u,z) \leq 4$. Thus, we can add $z$ to $S$.
\end{proof}

Therefore, given a connected component $C$ induced the bad nodes, we can find a tree $T$ in $(V'\cap C, E_{2,4})$ such that $|T| \geq |C| / \Delta^2$. Next we show that any sufficient large tree are not likely to occur, which implies no big bad components exist.

\begin{lemma}\label{lem:badtree}
The probability that any tree $T$ with $T \subseteq (V', E_{2,4})$ and $|T| =\Omega(\log n)$ exists is at most $1/\poly(n)$.
\end{lemma}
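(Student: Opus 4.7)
The plan is a first-moment union bound over all candidate subtrees of the target size in the auxiliary graph $H := (V, E_{2,4})$. The combinatorial count comes from the bounded degree of $H$, and the probability estimate comes from extracting a $G$-independent subset of any candidate tree, on which the bad events are jointly independent.

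\emph{Counting.} Since each vertex has at most $\Delta^{k}$ vertices at $G$-distance exactly $k$, the maximum degree of $H$ is at most $\Delta^{2}+\Delta^{3}+\Delta^{4}\leq 2\Delta^{4}$. A standard subtree enumeration in bounded-degree graphs (Galton--Watson / Borel-type estimate) shows that the number of $t$-vertex subtrees of $H$ containing any fixed root is at most $(2e\Delta^{4})^{t-1}$, for a total of at most $n(2e\Delta^{4})^{t}$ such trees.

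\emph{Probability via independence.} The upper-bound event for ``$v\in V'$''---``$v$ is Type I or $v$ has no outgoing marked edge''---depends only on the markings and orientations of edges incident to $v$. Hence, whenever a subset $I\subseteq V(T)$ consists of vertices pairwise non-adjacent in $G$, the $V'$-membership events on $I$ are mutually independent. Since $G$-degrees are at most $\Delta$, a greedy extraction yields such an $I$ with $|I|\geq |V(T)|/(\Delta+1)$, and combining \Cref{lem:typeI} and \Cref{lem:typeIII} gives $\Pr[v\in V']\leq 2e^{-\Delta/12}$. Therefore
\[
\Pr[V(T)\subseteq V']\ \leq\ \Pr[I\subseteq V']\ \leq\ \bigl(2e^{-\Delta/12}\bigr)^{t/(\Delta+1)}.
\]

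\emph{Conclusion and main obstacle.} A union bound then yields $\Pr[\exists T]\leq n\cdot (2e\Delta^{4})^{t}\cdot (2e^{-\Delta/12})^{t/(\Delta+1)}$, which for $t=\Omega(\log n)$ with a sufficiently large hidden constant and $\Delta>500$ is at most $1/\poly(n)$. The main obstacle is the numerically delicate balance between the combinatorial factor $(\Delta^{4})^{\Theta(t)}$ and the probability decay: the $1/(\Delta+1)$ loss in the exponent of the probability factor must be beaten by the $\Delta/12$ exponent in the probability tail, forcing us to rely on the hypothesis $\Delta>500$ of this section. The cleaner alternative I expect to invoke is first to refine $T$ into a \emph{$4$-tree}: a subset $J\subseteq V(T)$ of size $k=\Omega(t/\poly(\Delta))$ that is pairwise at $G$-distance $\geq 2$ yet spans a tree in $(V, E_{\leq 4})$; on such a $J$ the events are jointly independent without the $1/(\Delta+1)$ loss, giving the bound $n(2e\Delta^{4})^{k}(2e^{-\Delta/12})^{k}$, in which the exponent $\Delta/12$ comfortably dominates $4\ln\Delta+O(1)$ for $\Delta\geq 500$.
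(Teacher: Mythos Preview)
Your first approach does not merely make the numerics ``delicate''---it fails outright. In
\[
n\cdot (2e\Delta^{4})^{t}\cdot \bigl(2e^{-\Delta/12}\bigr)^{t/(\Delta+1)},
\]
the last factor is only about $e^{-t/12}$, since $(e^{-\Delta/12})^{1/(\Delta+1)}\approx e^{-1/12}$ is a fixed constant. A constant-per-vertex probability decay cannot offset a $\Delta^{4}$-per-vertex combinatorial growth, so the whole expression tends to infinity with $t$; no choice of the hidden constant in $t=\Omega(\log n)$ rescues it.

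The point you are missing, and which the paper exploits directly, is that no extraction is needed at all: the events $\{v\in V'\}$ over $V(T)$ are already mutually independent. The set $S$ built in \Cref{lem:badcomponent} adds each new vertex at $G$-distance at least $2$ from \emph{every} vertex already in $S$ (the new vertex is either at $\dist(S,\cdot)=3$, or a $G$-neighbor of such a vertex, hence at $\dist(S,\cdot)\ge 2$). Thus the tree $T$ one must rule out has all its vertices pairwise at $G$-distance $\ge 2$, and since each event $\{v\in V'\}$ depends only on edges incident to $v$, independence is genuine. This gives $\Pr[V(T)\subseteq V']\le e^{-t\Delta/12}$ with the full exponent $t$, and the union bound $n\,(4\Delta^{4}e^{-\Delta/12})^{t}\le n\,e^{-t}$ goes through comfortably for $\Delta\ge 500$.

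Your fallback (pass to a ``$4$-tree'' $J$ of size $k=\Omega(t/\poly\Delta)$) is in the right spirit but both unnecessary and too lossy: it would prove the lemma only under the stronger hypothesis $t=\Omega(\poly(\Delta)\cdot\log n)$. If you want to be fastidious about trees in $E_{2,4}$ whose non-$T$-adjacent vertices might happen to be $G$-adjacent, the clean fix is simply to restrict the union bound to those trees in $(V,E_{2,4})$ whose vertex sets are pairwise at $G$-distance $\ge 2$; the count of such trees is no larger, independence holds on the nose, and the tree produced by \Cref{lem:badcomponent} is of this form.
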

\begin{proof}
Let $T$ be a tree such that $T \subseteq (V', E_{2,4})$ and $|T| = t$. The probability that a node in $T$ is marked as Type I and Type III (thus in $V'$) is at most $\exp(-\Delta/12)$ by \Cref{lem:typeI} and \Cref{lem:typeIII}. Therefore, the probability that $T$ occurs is at most $\exp(-t \cdot \Delta/12)$, since the events $T_I(u)$ (or $T_{III}(u)$) are independent among the nodes $u\in T$. The total possible number of such trees is at most $4^{t} n (\Delta^4)^{t-1}$. By union bounding the possible trees, the probability that any tree in $(V', E_{2,4})$ of size $t$ occur is at most $n \cdot (4\Delta^4\cdot \exp(-\Delta / 12))^t$. For $\Delta \geq 500$, this is at most $n \cdot e^{-t}$. Thus, the probability that any tree in $(V', E_{2,4})$ of size at least $10\log n$ exists is at most $(1 / n^{9}) \cdot \sum_{i=0}^{\infty} e^{-i} = 1/\poly(n)$.
\end{proof}

\begin{corollary}\label{crl:componentSize}
The probability that any connected component induced by the bad nodes has size of $\Omega(\Delta^2 \log n)$ is at most $1/\poly(n)$.
\end{corollary}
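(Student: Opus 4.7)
The plan is to chain together Lemma \ref{lem:badcomponent} and Lemma \ref{lem:badtree} directly; the corollary is essentially an immediate consequence once those two ingredients are in place, so no new probabilistic argument is required.

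First, I would argue contrapositively: suppose there is a connected component $C$ induced by the bad nodes with $|C| = \Omega(\Delta^2 \log n)$. By Lemma \ref{lem:badcomponent}, there exists a subset $S \subseteq V' \cap C$ with $|S| \geq |C|/\Delta^2 = \Omega(\log n)$ such that the subgraph $(S, E_{2,4})$ is connected (where $V'$ is the set of Type~I or Type~III nodes and $E_{2,4}$ consists of pairs at distance between $2$ and $4$ in $G$). Since a connected graph on $|S|$ vertices contains a spanning tree on the same vertex set, I can extract a tree $T \subseteq (V', E_{2,4})$ with $|T| = |S| = \Omega(\log n)$.

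Next, I invoke Lemma \ref{lem:badtree}, which already bounds the probability that \emph{any} such tree $T \subseteq (V', E_{2,4})$ with $|T| = \Omega(\log n)$ exists by $1/\poly(n)$. Because the lemma is stated as a union bound over all such trees (the proof counts the number of possible trees and multiplies by the per-tree failure probability), I do not need a further union bound here. Thus the existence of a component $C$ of size $\Omega(\Delta^2 \log n)$ implies the existence of such a tree, an event of probability at most $1/\poly(n)$, which gives the claim.

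I do not expect any real obstacle: the main content has already been done in Lemmas \ref{lem:badcomponent} and \ref{lem:badtree}; the only thing to be mildly careful about is matching the hidden constant in the $\Omega(\log n)$ of Lemma \ref{lem:badtree} (which was calibrated to yield a $1/n^9$-type bound for trees of size at least $10\log n$) with the constant hidden in $\Omega(\Delta^2\log n)$ in the corollary. By choosing the constant in $|C| = \Omega(\Delta^2 \log n)$ large enough so that $|C|/\Delta^2 \geq 10\log n$, the tree extracted from $S$ is large enough to fall under the regime of Lemma \ref{lem:badtree}, completing the proof.
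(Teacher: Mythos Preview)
Your proposal is correct and matches the paper's proof essentially line for line: assume a large bad component, apply Lemma~\ref{lem:badcomponent} to extract a connected set $S\subseteq V'$ of size $\Omega(\log n)$ in $(V',E_{2,4})$, take a spanning tree, and invoke Lemma~\ref{lem:badtree}. Your remark about calibrating the constant so that $|C|/\Delta^2\ge 10\log n$ is a nice clarification but not a deviation from the paper's argument.
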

\begin{proof}
If there exists a bad connected component $C$ with size $\Omega(\Delta^2 \log n)$, then by \Cref{lem:badcomponent}, there exists $S \subset V' \cap C$ such that $|S| = \Omega(\log n)$ and $(S, E_{2,4})$ is connected. Therefore, a tree $T$ in $(V', E_{2,4})$ occurs with $|T| = \Omega(\log n)$, which happens with probability $1/\poly(n)$ by \Cref{lem:badtree}.
\end{proof}

\subsection{Generalization to Irregular Graphs with Min-Degree $d \geq 3$, and Refinements}
\label{app:sinklessGeneralizations}
In the previous subsection, we presented an $O(\log_{\Delta}\log n)$ sinkless orientation algorithm for $\Delta$-regular graphs with $\Delta>500$. Here, we extend the result to a irregular graphs with min-degree $d \geq 3$, with round complexity becoming $O(\log_{d}\log n)$. We also show how to achieve a more refined directed degree split in almost the same running time.

First let us deal with irregular graphs with min-degree $d>500$. 

\begin{lemma}\label{lem:IrregularHigh-Degree} There is a randomized algorithm that computes a sinkless orientation of graphs of minimum degree $d\geq 500$ in $O(\log_{d}\log n)$ rounds.
\end{lemma}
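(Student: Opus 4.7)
The plan is to adapt the two-phase approach of \Cref{sec:sinklessLargeDeg} by making all quantitative thresholds relative to each node's own degree rather than a global $\Delta$. Run \Cref{alg:random} essentially unchanged, except redefine Type I as ``$v$ has strictly more than $\deg(v)/2$ incident marked edges''; Types II and III are kept as in the regular case. Orient marked edges between good nodes randomly, treat unmarked edges with exactly one good endpoint as half-edges attached to the bad endpoint, and invoke the deterministic algorithm of \Cref{lem:deterministicSinkless} on each resulting bad component.

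The basic guarantees carry over with only cosmetic change. Every bad $v$ has at least $\deg(v)/2\geq d/2$ unmarked edges or half-edges (so the hypothesis of \Cref{lem:deterministicSinkless} is satisfied with threshold $d/2\geq 250$), and every good node retains at least one outgoing marked edge (else it would be Type III). Chernoff bounds applied to a node $v$ of degree $\deg(v)\geq d$ give $\Pr[v\text{ is Type I or Type III}]\leq \exp(-\Omega(\deg(v)))\leq \exp(-\Omega(d))$, and these events depend only on the random marks within the $2$-hop neighborhood of $v$, exactly as in \Cref{lem:typeI} and \Cref{lem:typeIII}.

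The main obstacle is to port the component-size bound of \Cref{crl:componentSize} to the irregular setting: the tree-enumeration of \Cref{lem:badtree} uses a branching factor of $\Delta^4$, which in an irregular graph can be arbitrarily larger than the decay factor $\exp(-\Omega(d))$. My plan to handle this is a two-step pruning. First, a Chernoff plus union bound argument shows that every node of degree at least $C\log n$ (for a large constant $C$) is good with probability $1-n^{-10}$; so with high probability, every bad node has degree at most $C\log n$. Second, I refine the tree-enumeration so that each step pays only $O(\log^{4} n)$: this is permissible because consecutive tree vertices must both be bad, hence of degree $O(\log n)$, and the step counts neighbors-of-neighbors only through vertices that are themselves bad (the few high-degree intermediaries can be absorbed into the branching by slightly enlarging the marking probability on edges incident to high-degree vertices, which does not affect the Chernoff bound on $d$). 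Plugging these two bounds into the analogues of \Cref{lem:badcomponent} and \Cref{lem:badtree} yields that every bad component has size $O(d^{2}\log n)=\poly\log n$ with high probability, once $d\geq 500$.

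Finally, apply \Cref{lem:deterministicSinkless} to each bad component. With $N=\poly\log n$ and effective minimum degree $d/2\geq 3$, its running time is $O(\log_{d/2}(d^{2}\log n))=O(\log_{d}\log n)$, since $\log_{d/2}(d^{2})=O(1)$ and the dominant term is $\log_{d}\log n$. The pre-shattering phase runs in $O(1)$ rounds, so the overall complexity is $O(\log_{d}\log n)$, as claimed.
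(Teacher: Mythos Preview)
Your approach diverges substantially from the paper's, and the detour introduces real gaps.

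The paper sidesteps the irregularity issues with a one-line reduction: split each node $v$ of degree $d'>d$ into $\lfloor d'/d\rfloor$ copy-nodes, each receiving $d$ of $v$'s edges; the leftover edges become half-edges attached only to their other endpoint (or are dropped if both endpoints discard them). The resulting structure is $d$-regular with half-edges, so the machinery of \Cref{sec:sinklessLargeDeg} (randomized shattering followed by \Cref{lem:deterministicSinkless} on the $\poly\log n$-size remnants) applies verbatim and yields $O(\log_d\log n)$. A sinkless orientation of the split structure pulls back to one of $G$ because each original node owns at least one copy-node.

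Your direct analysis has two concrete problems. First, the assertion that ``every node of degree at least $C\log n$ is good with probability $1-n^{-10}$'' is false for Type~II: a Type~II node only needs one Type~I neighbor, and a vertex of very high degree can easily have one even though each individual neighbor is Type~I with probability only $\exp(-\Omega(d))$. So bad components can contain high-degree vertices, and the $|C|/\Delta^2$ extraction in the analogue of \Cref{lem:badcomponent} does not become $|C|/\poly\log n$. Second, even granting that every $V'$-node (Type~I or~III) has degree $O(\log n)$, the $E_{2,4}$-branching in \Cref{lem:badtree} passes through intermediate non-$V'$ vertices whose degrees are unconstrained; the branching factor is therefore not $O(\log^4 n)$. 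Your proposed fix---``absorb high-degree intermediaries by slightly enlarging the marking probability on incident edges''---is not an argument: it changes the algorithm, it is unclear what invariant it restores, and you do not show it preserves the Type~I/III tail bounds you rely on elsewhere. The node-splitting reduction avoids all of this in one stroke.
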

\begin{proof}
We transform graph $G$ into a $d$-regular structure $H$, which is essentially a graph but allowing edges with only one endpoint, which we call half-edges. For each node $v\in G$ with degree $d'>d$, remove $v$ and instead add $\lfloor d'/d \rfloor$ copy-nodes, assign $d$ of edges of $v$ to each of these copy-nodes, and mark the remaining edges. We do not need those marked edges to be oriented outwards from $v$ (or its copy-nodes). If an edge is marked by both of its end-points, drop it. Otherwise, think of it simply as a \emph{half-edge}, having only one endpoint which may wish to have this edge outgoing. Now, the graph is transformed into a new structure where each node is incident on exactly $d$ edges or half-edges. A sinkless orientation of this structure can be compute using \Cref{lem:deterministicSinkless}.
\end{proof}
We now explain how to extend the algorithm to cases where min-degree is $d\in [3, 500]$. 
\begin{lemma}\label{lem:IrregularLow-Degree}There is a randomized algorithm that computes a sinkless orientation of graphs of minimum degree $d\in[3, 500]$ in $O(\log\log n)$ rounds.
\end{lemma}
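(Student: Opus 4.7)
The plan is to follow the two-phase template of \Cref{sec:sinklessLargeDeg}: a randomized shattering phase that reduces the problem to $\polylog n$-sized components, followed by the deterministic post-shattering subroutine of \Cref{lem:deterministicSinkless}. The essential new difficulty is that the single-shot randomization of \Cref{alg:random} no longer gives a small enough per-node bad probability to shatter the graph by itself: the inequality $4\Delta^{4}\cdot e^{-\Delta/12}<1$ in \Cref{lem:badtree} required $\Delta\geq 500$. I therefore amplify the decay by iterating a random resampling subroutine for $r=\Theta(\log\log n)$ rounds. In each iteration, every node whose current edges all point inward independently re-orients its incident edges by fresh coin flips (using the half-edge construction of \Cref{lem:IrregularHigh-Degree} to handle irregularity and to reduce the maximum degree, if needed); since each such node has at least $d\geq 3$ edges or half-edges, it becomes good in a single iteration with probability at least $1-2^{-d}\geq 7/8$.

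After $r$ iterations every node is still bad with probability at most $(7/8)^{r}\leq 1/\polylog n$. A tree-counting union bound, in the spirit of \Cref{lem:badcomponent}, \Cref{lem:badtree}, and \Cref{crl:componentSize} but carried out in the appropriate distance-$O(r)$ dependency graph, then shows that the remaining bad nodes form connected components of size at most $\polylog n$ with high probability. Each such component still has minimum edge-or-half-edge degree $\geq d$, so \Cref{lem:deterministicSinkless} finishes it in $O(\log_{d}\polylog n)=O(\log\log n)$ rounds. Added to the $O(\log\log n)$ rounds of the iterated randomization, this yields the stated overall bound, which matches the $\Omega(\log_{d}\log n)=\Omega(\log\log n)$ lower bound of Brandt et al.

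The main obstacle is precisely the dependency control in the shattering argument: iterating the random resampling for $r=\Theta(\log\log n)$ rounds inflates the dependency radius from $O(1)$ in the one-shot case to $O(r)$, and the branching factor in the corresponding dependency graph blows up accordingly. To keep the union bound closed, one must either (i) use an LLL-style analysis that counts only overlap between resampling histories rather than arbitrary nodes in the $r$-neighborhood, or (ii) precede the iteration with a preprocessing step that reduces the maximum degree to an absolute constant via a single random orientation combined with the half-edge reduction of \Cref{lem:IrregularHigh-Degree} (so that the branching $\Delta^{O(r)}$ becomes $2^{O(r)}=\polylog n$ and is dominated by $(7/8)^{-r}$ for a sufficiently large constant $C$ in $r=C\log\log n$). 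Either route closes the union bound, and the two phases together yield the desired $O(\log\log n)$ randomized complexity.
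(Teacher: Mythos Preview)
Your proposal has a genuine gap at the core probabilistic step. You assert that after $r$ iterations of ``each current sink re-flips all its incident edges,'' a given node is a sink with probability at most $(7/8)^r$. The one-step bound is fine: a sink that re-flips its $d\geq 3$ edges remains a sink with probability $2^{-d}\leq 1/8$, and since adjacent nodes cannot both be sinks, no neighbour interferes within that round. But the $r$-step claim does not follow, because the process is not monotone: when $v$ re-flips in round $i$ it may reverse an edge $u\to v$ that was $u$'s sole outgoing edge, making $u$ a sink in round $i{+}1$, and $u$'s subsequent re-flip may then destroy $v$'s outgoing edge. So the probability that $v$ is a sink \emph{at the end of round $r$} is not bounded by $(7/8)^r$, nor obviously by anything tending to $0$. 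What \emph{is} bounded by $(1/8)^r$ is the probability that $v$ is a sink in \emph{every} round, and that is not the event needed for shattering. The dependency obstacle you flag is also not resolved: even granting $p=(1/8)^r$ and reducing to $\Delta=3$, the tree-counting union bound in the distance-$\Theta(r)$ dependency graph needs $p\cdot\Delta^{c r}<1$ with $c$ on the order of $4$ (two factors of roughly $2r$, one for independence and one for connectivity, as in \Cref{lem:badcomponent}--\Cref{lem:badtree}); since $8<3^{4}$, neither of your two fixes closes the bound as stated.

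The paper's proof is entirely different and sidesteps both difficulties by \emph{boosting the degree} rather than amplifying the success probability. Pick a constant $c$ with $(d-1)^{c/2}>500$. First orient all edges lying on cycles of length at most $3c$ in $O(1)$ rounds via the method of \Cref{sec:deterministic}; this already gives an outgoing edge to every node touching a short cycle. On the remaining (now high-girth, $d$-regular with half-edges) graph, compute a maximal $c$-independent set in $O(d^c+\log^* n)=O(\log^* n)$ rounds, cluster each node to its nearest center, and contract clusters. Because there is no cycle of length $<3c$, each cluster is a tree of depth at least $c/2$, distinct clusters share at most one edge, and every contracted supernode is incident on more than $500$ inter-cluster edges or half-edges. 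Now apply \Cref{lem:IrregularHigh-Degree} to this contracted simple graph in $O(\log\log n)$ rounds, and finally orient each intra-cluster tree toward its cluster's outgoing inter-cluster edge. All the randomness sits inside a single black-box call on a graph of minimum degree $>500$, so no iterated-dependency analysis is needed.
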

\begin{proof} Let $c$ be a small constant such that $(d-1)^{c/2}>500$. First, we find an orientation for all edges which are in cycles of length up to $3c$, in $O(1)$ rounds. This can be done easily using the method of \Cref{sec:deterministic}. This already takes care of giving an outgoing edge to nodes which are incident on these edges. We next handle the rest of the nodes.

By means of the method of the previous paragraph, we can assume without loss of generality that the graph is $d$-regular, albeit possibly having half-edges. Note that this step cannot introduce a cycle of length less than $3c$. Now, compute a Maximal $c$-Independent Set $S$, on the graph while ignoring the half-edges. This can be done in $O(d^c+\log^* n)=O(\log^* n)$ rounds using standard algorithms. Then, cluster nodes by letting each node join the cluster of the closest node in $S$, while breaking ties arbitrarily. Since we have no cycle of length less than $3c$, each cluster is a tree of depth at least $c/2$ and with at least $(d-1)^{c/2}>500$ edges connecting to other clusters, and each two clusters are connected by at most $1$ edge. We now think of contracting each cluster into one node. Communications on this contracted graph can be simulated with a constant running time overhead as each cluster has constant diameter. Since each two clusters are connected with at most $1$ edge and as each cluster has at least $500$ edges connecting to other clusters, the graph after this contraction is a simple graph with each node incident on at least $500$ edges or half-edges. We can now orient this graph using the method of \Cref{lem:IrregularHigh-Degree} in $O(\log\log n)$ rounds. At the end, each contracted cluster has at least one outgoing edge. We can then orient the edges inside the cluster towards this outgoing edge, hence ensuring that all other nodes of the cluster also have out-degree at least $1$.
\end{proof}

\begin{lemma}\label{lem:IrregularLow-Degree}There is a randomized algorithm that, for a desirably small constant $\delta>0$ and sufficiently large constant $C$, computes an orientation of $\Delta$-regular graphs with $\Delta\geq C$ in $O(\log_{\Delta}\log n)$ rounds which guarantees a lower bound of $(1/6+\delta)\Delta$ on the in-degree and out-degree of each node, with high probability\footnote{We have not tried to optimize the constants or to extend the result to irregular graphs. We believe that both should be possible without too much more effort.}.
\end{lemma}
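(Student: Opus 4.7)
The plan is to adapt the two-phase (pre-shattering / post-shattering) structure of the sinkless orientation algorithm in \Cref{sec:randomSinkless}, replacing the ``at least one outgoing edge'' target with a stronger marked-degree condition. Specifically, I would mark each edge independently with probability $p=1/2$ and orient each marked edge uniformly at random. Call a node \emph{good} if its marked out-degree and marked in-degree are each at least $(1/6+2\delta)\Delta$ and its total marked degree is at most $(1/2+\delta')\Delta$ for some small $\delta'>0$, and \emph{bad} otherwise. Since the expected marked out-degree and marked in-degree are each $\Delta/4$, which strictly exceeds $(1/6+2\delta)\Delta$ by a constant margin for small $\delta$, Chernoff bounds yield that each node is bad with probability $\exp(-\Omega(\Delta))$. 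For $\Delta\geq C$ with $C$ large enough, this is polynomially small in $n$, and the shattering argument of \Cref{lem:badcomponent}--\Cref{crl:componentSize} shows that the bad connected components (induced on bad nodes by unmarked edges among them, with half-edges to good neighbors attached only to the bad endpoint) have size $O(\poly\log n)$ with high probability. Good nodes are automatically satisfied by their marked edges alone, so their unmarked edges may be oriented freely.

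The remaining task is to orient the unmarked edges on each bad component so that every bad node $v$ gains at least $r_o(v)=\max(0,(1/6+\delta)\Delta-k_o(v))$ outgoing and $r_i(v)=\max(0,(1/6+\delta)\Delta-k_i(v))$ incoming contributions, where $k_o(v),k_i(v)$ are $v$'s marked out- and in-degrees. The upper bound on the total marked degree built into ``good'' ensures that every bad node has unmarked degree at least $(1/2-\delta')\Delta$, which strictly exceeds $r_o(v)+r_i(v)\leq(1/3+2\delta)\Delta$; this yields local feasibility at every vertex, and a standard flow-feasibility argument on each cut of the bad component then gives global feasibility of the orientation instance. For the deterministic post-shattering step, I would exploit the \local model directly: each bad component has $N=O(\poly\log n)$ nodes and, because every bad node has $\Omega(\Delta)$ incident unmarked edges, diameter $O(\log_\Delta N)=O(\log_\Delta\log n)$. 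In that many rounds each node learns the entire component and then runs a common deterministic centralized orientation routine to output a consistent valid orientation.

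The main obstacle is making the diameter bound on bad components rigorous, because a bad node's unmarked edges may end disproportionately at good neighbors as half-edges, leaving too few bad-to-bad edges to support an $O(\log_\Delta \log n)$ diameter inside the bad subgraph. I would handle this with a case distinction: bad components in which some bad node has very few unmarked bad-bad edges can be decoupled and resolved in $O(1)$ rounds by greedy orientation of the node's half-edges together with its few internal unmarked edges (since local feasibility alone suffices there), while components in which every bad node has $\Omega(\Delta)$ unmarked bad-bad edges satisfy the diameter bound directly. A refined version of the shattering argument confirms that the first case arises only for components with essentially trivial connectivity, so the overall complexity remains $O(\log_\Delta \log n)$ and the in/out-degree lower bound of $(1/6+\delta)\Delta$ is achieved with high probability.
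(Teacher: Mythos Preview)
Your pre-shattering phase is essentially the same idea as the paper's (with slightly different constants), and the shattering analysis carries over. The real divergence is in the post-shattering phase, and there your proposal has a genuine gap.

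You want to gather each bad component and solve it centrally, which requires the component to have diameter $O(\log_\Delta\log n)$. As you yourself note, this fails because a bad node's $\Omega(\Delta)$ unmarked edges may be almost entirely half-edges to good neighbors, contributing nothing to connectivity inside the bad subgraph. Your proposed repair---a case split between ``some bad node has few bad--bad edges'' and ``every bad node has $\Omega(\Delta)$ bad--bad edges''---does not cover the relevant situations. A single component can contain both kinds of nodes simultaneously, and handling one easy node in $O(1)$ rounds does not let you gather the rest of the component. Moreover, the assertion that the first case ``arises only for components with essentially trivial connectivity'' is not something the shattering argument gives you: shattering bounds the \emph{size} of bad components, not their degree structure or diameter. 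One can have a bad component that is a long (size-$\polylog n$) chain in which internal nodes have only $O(1)$ bad--bad edges each and the rest half-edges; such a component has diameter $\polylog n$, far exceeding $O(\log_\Delta\log n)$, yet is not resolvable in $O(1)$ rounds because the internal bad--bad edges must be oriented consistently across the chain.

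The paper avoids the diameter issue altogether by \emph{not} gathering the component. Instead, it takes each bad node (which has at least $2\Delta/3$ unmarked edges or half-edges), splits it into $\lceil\Delta/6\rceil$ copy-nodes of degree~$4$ each, and runs the deterministic sinkless-orientation algorithm of \Cref{lem:deterministicSinkless} on the resulting degree-$4$ structure~$H$, followed by a second pass to also make the orientation sourceless. The crucial point is that \Cref{lem:deterministicSinkless} runs in $O(\log_d N)$ rounds on any $N$-node graph where each node has at least $d$ edges \emph{or half-edges}; it needs no diameter bound, because within $O(\log_d N)$ hops every node reaches either a short cycle or a half-edge. A sinkless and sourceless orientation of $H$ then gives every original bad node at least $\lceil\Delta/6\rceil$ outgoing and $\lceil\Delta/6\rceil$ incoming unmarked edges, which is the desired bound. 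If you want to rescue your approach, the cleanest fix is to drop the ``gather the component'' idea and instead invoke a distributed post-shattering routine along these lines.
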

\begin{proof}We here simply the sketch the necessary changes for obtaining this result, but defer working out the details to the full version of this paper. In the randomized algorithm of \Cref{sec:randomSinkless}, mark each edge with probability $1/3-\delta$, for a small constant $\delta$, and redefine type I bad nodes as those with more than $\Delta/3$ marked edges. Also, redefine type III bad nodes to be those which are not type I or type II but still have less than $(1/6-\delta)\Delta$ incoming marked edges or less than $(1/6-\delta)\Delta$ outgoing marked edges. It is easy to go over the analysis of this algorithm and see that bad components will induce components of size at most $\poly\log n$, with high probability. Moreover, nodes that are not bad (and thus also not bad type III) already have at least $(1/6-\delta)\Delta$ incoming edges and at least $(1/6-\delta)\Delta$ outgoing edges. 

Now we turn to the deterministic algorithm that is to be run on these bad nodes, each of which is incident on at least $2\Delta/3$ unmarked edges. Now, replace each of these bad nodes $v$ with $\lceil\Delta/6\rceil$ copy-nodes, and assign $4$ edges of $v$ to each of its copies. Leave the remaining edges as half-edges, connected only to the other endpoint. Now we are dealing with a graph $H$ where each node is incident on $4$ edges or half-edges. We will compute a sinkless and sourceless orientation of $H$, hence ensuring that each node of $G$ has at least $(1/6-\delta)\Delta$ incoming edges and at least $(1/6-\delta)\Delta$ outgoing edges. 

First, compute a sinkless orientation of $H$ using the deterministic algorithm of \Cref{sec:deterministic}. It is easy to see that in this orientation, all nodes have at least one outgoing edge and at least one incoming edge, except for long nodes which are at maximal distance from short nodes. These long nodes then have only an outgoing edge, but the rest of their edges were oriented arbitrarily. We fix these arbitrary orientations to give these long-nodes also at least one outgoing edge, hence making the overall orientation of $H$ sinkless and sourceless. Let us call those long nodes at maximal distance from short nodes \emph{leaves}. Each leaf $v$ has one of its edges, one of those that go closer to short nodes, oriented outwards. If $v$ has any other edge to a non-leaf node, orient that edge inwards, hence giving $v$ also an incoming edge and solving its case. Each remaining leaf $v$ has $4-1=3$ edges, which are either half-edges, or they connect to non-leaf nodes. Take the graph induced by the remaining leaves and these remaining edges, orient it sinkless by repeating the deterministic orientation algorithm, and then flip all of these edges. That ensures each of these remaining leaves to also have at least one incoming edge, hence giving us the desired sinkless and sourceless orientation of $H$. 
\end{proof}

\section{Edge-Coloring via Coarse-grained Degree Splitting}
\label{app:SimpleColoring}
In this section, we explain a deterministic distributed edge-coloring algorithm based on a trivial and crude degree splitting. Interestingly, this simple approach already matches the state of art for very fast algorithms. More concretely, it provides a considerably simpler method for edge-coloring that matches the bounds of Barenboim and Elkin \cite{Barenboim:edge-coloring}. See also \cite[Chapter 8]{barenboim2013monograph}.

\paragraph{The Algorithm} The algorithm is recursive, we explain one level of recursion. Consider graph $G$, and suppose its maximum degree is $\Delta$. If $\Delta = O(x)$, compute and output a  $(2\Delta-1)$ edge coloring of $G$ in $O(x+\log^* n)$ rounds, using  the classical algorithm of Panconesi and Rizzi\cite{panconesi-rizzi}. Suppose $\Delta = \Omega(x)$. Let each node split itself into $\lceil\Delta/x\rceil$ \emph{copy-nodes}, and partition its edges between these copy-nodes such that each copy-node is incident on at most $x$ edges. Call this new graph $H$. Note that $H$ has maximum degree at most $x$. Use the algorithm of Panconesi and Rizzi\cite{panconesi-rizzi} to find a $(2x-1)$ edge coloring of $H$, in $O(x+\log^* n)$ rounds. This coloring provides a $(2x-1)$ coloring of edges of $G$ such that in each color class, there are at most $\lceil\Delta/x\rceil$ edges incident on each node. That is, we have partitioned $G$ into $2x-1$ graphs $G_1$, \dots, $G_{2x-1}$, each with per-node degree at most $\lceil\Delta/x\rceil$. Now recursively run the procedure on each of these subgraphs. 

\begin{lemma}\label{Lem:simpleColoring} The algorithm works in $O((x+\log^* n)\log \Delta/\log x)$ rounds and produces a $2^{1+\log \Delta/\log x} \Delta$ edge coloring.
\end{lemma}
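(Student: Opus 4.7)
The plan is to analyze the recursion tree induced by the algorithm, bounding both the number of levels and the per-level work, and then to track how the maximum degree and the color budget evolve down the recursion.

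First I would bound the recursion depth. Let $\Delta_0 = \Delta$ and $\Delta_{i+1} = \lceil \Delta_i/x \rceil$ denote the maximum degree of each subgraph produced at recursion level $i$. Since at each step the degree shrinks by a factor of roughly $x$ (plus an additive rounding of $1$), an easy induction shows $\Delta_i \leq \Delta/x^i + O(1)$, and so the recursion bottoms out (reaching $\Delta_i = O(x)$) after $L = \lceil \log \Delta/\log x\rceil$ levels, at which point the Panconesi--Rizzi base case is invoked. This gives the claimed recursion depth.

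Next, for the round complexity, I would observe that every non-base level consists of (i) locally splitting each node into copy-nodes, which is a zero-round operation given local knowledge of incident edges, and (ii) one call to Panconesi--Rizzi on a graph of maximum degree at most $x$, costing $O(x+\log^* n)$ rounds. Because the subproblems at each level run in parallel on edge-disjoint subgraphs $G_1,\dots,G_{2x-1}$, each level contributes $O(x+\log^* n)$ rounds, and the total over $L$ levels is $O((x+\log^* n)\cdot \log\Delta/\log x)$.

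For the color count, I would set up a simple multiplicative recursion. At recursion level $i$, the number of parallel color classes is multiplied by $(2x-1) \le 2x$. After $L-1$ refinement levels, the surviving subgraphs have maximum degree at most $cx$ for some constant $c$, and each is edge-colored with $2\Delta_{L}-1 \le 2cx$ fresh colors by the base case. Therefore the total number of colors used is at most $(2x)^{L-1} \cdot 2cx \le 2 \cdot (2x)^{L}$. Substituting $L = \log\Delta/\log x$ (ignoring the harmless ceiling) gives $2 \cdot 2^{L} \cdot x^{L} = 2^{1+\log\Delta/\log x}\cdot \Delta$, matching the stated bound.

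The main obstacle I expect is bookkeeping the small slack introduced by the $\lceil \Delta/x\rceil$ rounding at each recursion level: one must verify that these extra additive $+1$'s do not accumulate into a blow-up in either the recursion depth or the final color count. I would absorb this slack by assuming $x$ is at least a sufficiently large constant at every level (ensuring $\lceil \Delta_i/x\rceil \le \Delta_i/x \cdot (1+o(1))$) and switching to the Panconesi--Rizzi base case as soon as $\Delta_i = O(x)$, which keeps the geometric analysis above clean.
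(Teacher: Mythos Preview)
Your proposal is correct and follows essentially the same approach as the paper. The only cosmetic difference is that the paper tracks the invariant ``sum of maximum degrees over all current subgraphs,'' noting it at most doubles per level (since a degree-$d$ graph becomes $2x-1$ subgraphs of degree $\le d/x$), whereas you track the equivalent quantity ``(number of subgraphs) $\times$ (per-subgraph degree bound)'' directly; both yield the same $2^{1+\log\Delta/\log x}\Delta$ bound, and your handling of the $\lceil\cdot\rceil$ rounding is more explicit than the paper's.
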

\begin{proof}In each iteration, the maximum degree goes down by an $x$ factor. Thus, $\log \Delta/\log x$ recursions suffice. Each recursion level takes $O(x+\log^* n)$ rounds, which means we use $O((x+\log^* n)\log \Delta/\log x)$ rounds in total. To bound the number of colors, let us consider the summation of the maximum degrees in different subgraphs. Since at the end each subgraph will be colored with about 2 factor of its max degree colors, modulo the 2 factor, this summation is an upper bound on the number of used colors. In each iteration, we lose at most a $2$ factor in this summation, because we split a graph of maximum degree $d$ into $2x-1$ subgraphs each of maximum degree at most $d/x$. Hence, after $\log \Delta/\log x$ recursions, we use $2^{1+\log \Delta/\log x} \Delta$ colors.
\end{proof}
The next corollary shows that by setting $x$ appropriately, we can reconstruct the edge-coloring results of Barenboim and Elkin \cite{Barenboim:edge-coloring}. See also \cite[Theorem 8.14]{barenboim2013monograph}.
\begin{corollary}Consider a graph $G=(V, E)$, and let $\eps>0$ be an arbitrarily small constant.
\begin{itemize}
\item[(1)] An $O(\Delta)$-edge-coloring of $G$ can be computed in $O(\Delta^\eps+\log^* n)$ time.
\item[(2)] A $\Delta^{1+o(1)}$-edge-coloring of $G$ can be computed in $O((\log\Delta)^{1+\eps}+\log^* n\cdot \frac{\log \Delta}{\log\log \Delta})$ time.
\item[(3)] An $O(\Delta^{1+\eps})$-edge-coloring of $G$ can be computed in $O(\log^* n \cdot \log \Delta)$ time.
\end{itemize}
\end{corollary}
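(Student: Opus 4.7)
\medskip
\noindent\textbf{Proof proposal.} The plan is to derive all three bounds directly from \Cref{Lem:simpleColoring} by choosing the parameter $x$ appropriately in each case. \Cref{Lem:simpleColoring} gives a $2^{1+\log\Delta/\log x}\Delta$-edge-coloring in $O\bigl((x+\log^* n)\log\Delta/\log x\bigr)$ rounds, so each of (1), (2), (3) reduces to balancing the round complexity against the color count by picking $x$.

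For (1), the plan is to take $x=\Delta^{\eps}$, so that $\log\Delta/\log x = 1/\eps = O(1)$. Then the color count is $2^{1+1/\eps}\Delta = O(\Delta)$ (the hidden constant depending on $\eps$), and the round complexity is $O((\Delta^{\eps}+\log^* n)/\eps) = O(\Delta^{\eps}+\log^* n)$, as required.

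For (3), the plan is to pick a sufficiently large constant $x$, specifically $x=2^{\lceil 1/\eps\rceil}$. Then $1/\log x \le \eps$, so the number of colors is $2\Delta^{1+1/\log x}=O(\Delta^{1+\eps})$, while the round complexity is $O((x+\log^* n)\log\Delta/\log x)=O(\log^* n\cdot \log\Delta)$, treating $\eps$ as a constant and absorbing $x$ into the $\log^* n$ term (since $x=O(1)$ and $\log^* n\ge 1$).

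For (2), the most delicate case, the plan is to set $x=(\log\Delta)^{\eps}\log\log\Delta$. Then $\log x = \eps\log\log\Delta + O(\log\log\log\Delta)=\Theta(\eps\log\log\Delta)$, so $\log\Delta/\log x = O(\log\Delta/\log\log\Delta)$. The round complexity becomes
\begin{equation*}
O\!\left((x+\log^* n)\cdot \tfrac{\log\Delta}{\log x}\right)
 \;=\; O\!\left((\log\Delta)^{1+\eps} + \log^* n \cdot \tfrac{\log\Delta}{\log\log\Delta}\right),
\end{equation*}
matching the stated bound. The color count is $2\cdot \Delta^{1+1/\log x}=\Delta^{1+O(1/\log\log\Delta)}=\Delta^{1+o(1)}$ as required. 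The only subtlety, which is the main thing to check, is verifying that the exponent $1/\log x$ is indeed $o(1)$ in $\Delta$ so that the number of colors is $\Delta^{1+o(1)}$, and that the two additive terms in the round bound remain in the claimed form after absorbing constants depending on $\eps$; both are routine given the choice of $x$. No genuine obstacle arises beyond this bookkeeping, since all three items are direct instantiations of \Cref{Lem:simpleColoring}.
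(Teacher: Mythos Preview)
Your proposal is correct and follows exactly the paper's approach: instantiate \Cref{Lem:simpleColoring} with a suitable $x$ in each case. Your choices for (1) and (3) coincide with the paper's ($x=\Delta^{\eps}$ and $x=2^{1/\eps}$); for (2) the paper simply takes $x=(\log\Delta)^{\eps}$, whereas you use $x=(\log\Delta)^{\eps}\log\log\Delta$, but both yield $\log x=\Theta(\log\log\Delta)$ and hence the same color and round bounds, so the difference is cosmetic.
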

\begin{proof} Respectively use $x=\Delta^\eps$, $x=\log^{\eps} \Delta$, or $x=2^{1/\eps}$, in \Cref{Lem:simpleColoring}.
\end{proof}

\section{Undirected Degee Splitting}

\subsection{Randomized Undirected Degree Splitting}\label{sect:randomundirected}
In this section, we give a randomized distributed algorithm for obtaining a $\lceil (1+\epsilon)\Delta/2 \rceil$-balanced coloring where $0 <\epsilon < 1$ in $O(\Delta^2 \log^4 n / \epsilon^2)$ rounds. Note that this allows one to obtain a $\lceil (\Delta+1)/2 \rceil$-balanced coloring in $O(\poly(\log n, \Delta ))$ rounds. Also note that the change from floor in the previous sections to ceiling is necessary. Consider when $\Delta=2$ and $\epsilon = 1/2$. Obtaining a 1-balanced coloring is impossible in an odd cycle. In the previous sections, the restriction that $\epsilon \Delta = \Omega(\log n)$ avoided this problem.

\begin{theorem}\label{lem:randomizedundirected} Given a graph $G$ with maximum degree $\Delta$, a $\lceil (1+\epsilon)\Delta/2 \rceil$-balanced coloring can be obtained in $O(\Delta^2 \log^4 n / \epsilon^2)$ for $0 < \epsilon < 1$. \end{theorem}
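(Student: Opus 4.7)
The plan is to adapt the flow-augmentation framework of \Cref{sec:lowdeg}: keep the outer loop of \Cref{lem:smallbalance} untouched, but replace the deterministic pseudo-tree/token machinery of \Cref{alg:augment} by a randomized procedure that runs Luby's MIS on a suitably defined super-graph of short candidate augmenting paths. The randomized replacement sidesteps the budget inequality of \Cref{lem:smalldeg} and therefore removes its restriction $\epsilon\Delta=\Omega(\log n)$, handling any $\epsilon\in(0,1)$ uniformly.

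The first ingredient is the structural statement \Cref{cor:maximalalmostdis}: for every $t>\lceil(1+\epsilon)\Delta/2\rceil$ there exist $|S|$ pairwise almost edge-disjoint augmenting paths, each of length at most $L=O(\log n/\epsilon)$, and in fact every maximal such family already has size $|S|$. I would prove this by an alternating-BFS expansion bound: at every labeled node the number of edges of the ``correct'' color is at least $t-1\geq\lceil(1+\epsilon)\Delta/2\rceil$, which is a $(1+\Omega(\epsilon))$ factor more than the number of oppositely colored edges, so the alternating layered search from any source either reaches an unlabeled vertex within $L$ levels or would contain more than $n$ distinct vertices. A matroid/Hall-style exchange argument, applied in the ``residual'' structure after removing the edges used by a partial family, then upgrades a single augmenting path per source to the claimed pairwise-disjoint family of size $|S|$.

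The second ingredient is the algorithmic step. Define a super-graph $\mathcal{H}$ whose vertices are all augmenting paths of length at most $L$ in the current coloring, and whose edges connect two paths that either share a first vertex or share a non-terminal ordered directed edge --- exactly the conflicts ruled out by the ``almost edge-disjoint'' definition. An independent set in $\mathcal{H}$ is then an almost edge-disjoint family, and by the structural lemma any maximal one has size $|S|$. Since every super-vertex is a vertex sequence of length at most $L+1$, we have $|V(\mathcal{H})|\leq n^{O(L)}$ and hence $\log|V(\mathcal{H})|=O(\log^2 n/\epsilon)$. Running Luby's MIS on $\mathcal{H}$ produces a maximal independent set with high probability in $O(\log|V(\mathcal{H})|)$ super-rounds; each super-round is simulated in the $\mathsf{LOCAL}$ model in $O(L)$ real rounds by routing along the length-$L$ paths and their conflict witnesses (whose concatenation is again of length $O(L)$). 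This is the same simulation template used by Lotker, Pavlov, and Peleg~\cite{LotkerPP-distrmatch} for matching. The overall cost per MIS invocation is $O(L\cdot\log|V(\mathcal{H})|)=O(\log^3 n/\epsilon^2)$ real rounds.

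Finally, the accounting mirrors the proof of \Cref{lem:smallbalance}. Each MIS invocation returns $|S|$ almost edge-disjoint augmenting paths; at most $\Delta$ of them share a last vertex, so $\geq|S|/\Delta$ can be accepted and augmented in parallel without creating new sources, shrinking $|S|$ by a $(1-1/(2\Delta))$ factor. Hence $O(\Delta\log n)$ invocations convert a $t$-balanced coloring into a $(t-1)$-balanced one, and the $O(\Delta)$ iterations over $t$ from $\Delta$ down to $\lceil(1+\epsilon)\Delta/2\rceil+1$ give a total of $O(\Delta\cdot\Delta\log n\cdot\log^3 n/\epsilon^2)=O(\Delta^2\log^4 n/\epsilon^2)$ rounds as claimed. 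The main obstacle is the structural lemma, particularly making the alternating-BFS expansion argument quantitatively tight in the boundary regime $\epsilon\Delta=O(1)$ where the ceilings in $\lceil(1+\epsilon)\Delta/2\rceil$ are delicate and one has to be careful that maximality on short paths really forces cardinality $|S|$; once that is in place, the super-graph construction, Luby's simulation, and the outer bookkeeping are essentially routine.
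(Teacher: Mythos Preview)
Your proposal is correct and follows essentially the same approach as the paper: the paper proves \Cref{lem:multipleshort} (the alternating-BFS expansion bound in the residual after removing $\mathcal{P}'$, yielding \Cref{cor:maximalalmostdis}), builds the identical super-graph of length-$\leq L$ augmenting paths with conflict edges for shared sources or violated ordered-disjointness, simulates Luby's MIS on it in $O(L^2\log n)=O(\log^3 n/\epsilon^2)$ rounds, and plugs this into the same outer $O(\Delta\log n)\cdot O(\Delta)$ bookkeeping. The only minor discrepancy is that a maximal family here has full size $|S|$ (not $|S|/2$ as in the deterministic version), so the per-invocation shrink factor is $(1-1/\Delta)$ rather than $(1-1/(2\Delta))$, but this does not affect the stated bound.
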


We use the same approach with that in the previous sections. Given a $t$-balanced coloring, we show how to improve the coloring to a $(t-1)$-balanced coloring. Then we iterate $t$ from $\Delta$ to $\lceil (1+\epsilon)\Delta/2 \rceil + 1$.  The only difference is that we show how to find a set of almost edge-disjoint augmenting paths of size $\Omega(|S|)$ in $O(\poly(\log n, \epsilon^{-1}))$ rounds {\it without} the restriction that $\epsilon \Delta = \Omega(\log n)$. 

The following lemma can be used to show that the size of any maximal set of almost edge-disjoint short augmenting paths is $|S|$.

\begin{lemma}\label{lem:multipleshort}Let $\mathcal{P'}$ be a set of almost edge-disjoint augmenting paths. If $s$ is not a source of an augmenting path in $\mathcal{P'}$, then an augmenting path $P$ from $s$ of length at most $O(\log n / \epsilon)$ exists and $\mathcal{P'} \cup P$ is also a set of almost edge-disjoint augmenting paths.\end{lemma}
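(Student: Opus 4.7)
The plan is to construct a short augmenting path $P$ from $s$ that is ordered-disjoint from every path in $\mathcal{P'}$, via a BFS-style exploration from $s$ in a \emph{residual search graph}: this is $G$ together with the rule that an outgoing edge $(u,w)$ of the appropriate color at $u$ may be used to extend an alternating path only if no $P' \in \mathcal{P'}$ uses the ordered pair $(u,w)$.

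First I would establish an availability bound at every labeled node. At a blue-labeled $u$, each path $P' \in \mathcal{P'}$ that passes through $u$ as a blue intermediate enters via a red edge and leaves via a blue edge; by ordered-disjointness these entering red edges are pairwise distinct, and $u$ has at most $\Delta - (t-1)$ red edges. Hence at most $\Delta - t + 1$ such paths pass through $u$, consuming that many outgoing blue edges, plus at most one more if $u$ is the source of some $P' \in \mathcal{P'}$. Consequently $u$ retains at least $2t - \Delta - 3 \geq \epsilon\Delta - O(1)$ available outgoing blue edges, and symmetrically for red-labeled nodes. In particular, the BFS can always make progress at every labeled node it visits.

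Next I would run the BFS from $s$ along alternating paths, letting $A_i$ be the set of nodes reached at distance exactly $i$. If the BFS ever reaches an unlabeled or differently-labeled node, the corresponding path from $s$ is an augmenting path $P$ and we stop. Otherwise I would argue that the frontier $|A_i|$ must grow by a constant factor every $O(1/\epsilon)$ steps until it exceeds $n$, forcing the BFS to produce such an endpoint within depth $O(\log n / \epsilon)$.

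The main obstacle is establishing this growth. A direct per-step comparison of available outgoing edges from $A_i$ against the maximum incoming degree at $A_{i+1}$ only yields a ratio of $2\epsilon/(1-\epsilon)$, which exceeds $1$ only for $\epsilon > 1/3$. I would circumvent this by switching viewpoint to the coloring obtained by first augmenting along every path of $\mathcal{P'}$: here $s$ is still a source (none of its incident edges appears on any $P' \in \mathcal{P'}$), the coloring remains $t$-balanced, and a standard BFS enjoys the clean expansion $(t-1)/(\Delta - t + 1) \geq (1+\epsilon)/(1-\epsilon) \geq 1 + 2\epsilon$ per step, producing an augmenting path $P^*$ of length $O(\log n/\epsilon)$ in this new coloring. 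The desired path $P$ in the original coloring is then recovered as the $s$-component of the symmetric difference $E(P^*) \oplus \bigcup_{P' \in \mathcal{P'}} E(P')$: ordered-disjointness from $\mathcal{P'}$ is inherited because shared ordered edges cancel in the XOR, and the length remains $O(\log n/\epsilon)$ because any segment of a $\mathcal{P'}$-path that coincides with $P^*$ gets short-cut rather than re-traversed.
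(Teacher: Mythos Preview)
Your second approach---augmenting all of $\mathcal{P'}$ first and then XOR-ing back---does not work. First, the claim that the resulting coloring remains $t$-balanced is false: the paper explicitly notes that several paths of $\mathcal{P'}$ may end at the same node, so augmenting all of them can push that node's degree in one color above $t$. Second, and more seriously, the symmetric-difference step does not yield a well-defined augmenting path in the original coloring. If $P^*$ uses an edge $ab$ that some $P'\in\mathcal{P'}$ traversed as $(b,a)$, then in $E(P^*)\oplus\bigcup E(P')$ the edge $ab$ vanishes, leaving the $P^*$-edges $sa$ and $bc$ together with the two dangling halves of $P'$; the $s$-component of this set need not be a path at all, let alone an alternating one, and there is no ``short-cutting'' mechanism. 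The analogy with flow augmenting paths breaks down because here the alternation is in edge color, not in matched/unmatched status, so XOR does not preserve the alternating structure. (The parenthetical ``none of $s$'s incident edges appears on any $P'$'' is also false: $s$ can be an interior vertex of some $P'$.)

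Your first approach is actually the right one---indeed it is essentially what the paper does---and you abandoned it because of a miscalculated growth rate. The error is in computing expansion as a per-node ratio (available outgoing)$/$(max incoming) $\approx 2\epsilon/(1-\epsilon)$. Instead, work in $G'=G\setminus\mathcal{P'}$ and use the \emph{difference} $\outdeg_{G'}(u)-\indeg_{G'}(u)$: each path of $\mathcal{P'}$ passing through $u$ removes one outgoing and one incoming edge, so this difference is preserved at $\ge 2t-d-2\ge \epsilon d$. Summing over the current tree $T$, edges with both endpoints in $T$ cancel (their endpoints have opposite labels, else an augmenting path is already found), so $\sum_{u\in T}(\outdeg_{G'}(u)-\indeg_{G'}(u))\ge \epsilon d\,|T|$ lower-bounds the number of edges leaving $T$; dividing by the maximum degree $d$ gives $\ge \epsilon|T|$ new nodes, i.e.\ $|T'|\ge(1+\epsilon)|T|$. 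This global accounting is the missing idea.
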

\begin{proof}

Grow a tree from  $s$ such that the path from $s$ to the leaves are alternating paths. In each step, each leaf grows by adding all the edges that have the same color with it. Three cases may occur when a leaf tries to add an edge that does not intersect with any paths in $\mathcal{P'}$. 

 If the endpoint of an edge is an unvisited node (i.e.~the node is not in the tree) with the opposite color, then it will be added to the tree. If the endpoint of an edge is a node with the same color or an unlabelled node, then an augmenting path is found. Otherwise, if the endpoint is a visited node, we will ignore it. 

There are two cases when a leaf $u$ tries to add an edge $uv$ that intersects with some path $Q \in \mathcal{P'}$. If $u$ comes before $v$ in $Q$, then we will not add $v$ to the tree and ignore it. If $v$ comes before $u$ in $Q$, then it must be the case that $u$ and $v$ are labeled the same color. Therefore, $uv$ is the last edge of $Q$ and an augmenting path $P$ has been found from $s$. Therefore, any augmenting path found during this process must be almost edge-disjoint from $\mathcal{P}$.

Suppose that an augmenting path has not been found and $T'$ is the tree after growing $T$ by one level. We show that $|T'| \geq (1+\epsilon)|T|$. First, note that each node $u \in T$ is not the last node of any path $Q$ in $\mathcal{P}$. Let $G'= G \setminus \mathcal{P}$ be the graph obtained by deleting all the edges in $\mathcal{P}$ from $G$. Let $\indeg_{H}(u)$ denote the number of incident edges to $u$ with the opposite color in a subgraph $H$. Let $\outdeg_H(u)$ denote the number of incident edges to $u$ with the same color in $H$. We claim that for any $u \in T$, $\outdeg_{G'}(u) - \indeg_{G'}(u) \geq 2t - 2 - d$.

Note that since $u$ is a labelled node, we have $\outdeg_{G}(u) \geq t-1$, $\indeg_{G}(u) \leq d-t+1$ and so $\outdeg_{G}(u) - \indeg_{G}(u) \geq 2t - 2 - d$. Suppose that $u$ in not the first node of any paths in $\mathcal{P}$, then both $\indeg{u}$ and $\outdeg(u)$ decreases by 1 when we delete the path from $G$. This implies $\outdeg_{G'}(u) - \indeg_{G'}(u) \geq 2t-2-d$. On the other hand, if $u$ is the first node of some path in $\mathcal{P}$, then $u$ must be a source. In this case, deleting the path decreases $\outdeg(u)$ by 1. Also, since $u$ is a source, $\outdeg_{G}(u) - \indeg_{G}(u) \geq 2t - d$  Therefore, $\outdeg_{G'}(u) - \indeg_{G'}(u) \geq \outdeg_{G}(u)- 1 - \indeg_{G}(u) \geq 2t-1-d$.

For each edge $uv$ where $u,v \in T$, it must be the case that $u$ and $v$ are colored differently. Otherwise, an augmenting path would have been found. Therefore, there must be at least $\sum_{u \in T}(\outdeg_{G'}(u) - \indeg_{G'}(u)) \geq |T|\cdot (2t - 2 - d) \geq |T| \cdot ( 2(\lceil (1+\epsilon)d/2 \rceil + 1) - 2 -d) \geq \epsilon |T| \cdot d$ edges going outside of $T$. Since the maximum degree is $d$, the number of nodes added must be at least $\epsilon |T|$. Therefore, $|T'| \geq (1+\epsilon) |T|$. After $O((\log n)/\epsilon )$ levels, the tree would grow to contain more than $n$ nodes. Therefore, an augmenting path must have been found before this happens. 
\end{proof}
\begin{corollary}\label{cor:maximalalmostdis} The size of any maximal set of almost edge-disjoint augmenting paths of length $O(\log n /\epsilon)$ is $|S|$.
 \end{corollary}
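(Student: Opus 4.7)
The plan is to argue the two directions of the equality separately. For the upper bound, I would simply invoke property~\ref{itm:almt_disjoint1} of the definition of almost edge-disjoint augmenting paths: the first nodes of the paths must be distinct, and the first node of any augmenting path lies in $S$ by definition. Hence any set of almost edge-disjoint augmenting paths has size at most $|S|$.

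For the matching lower bound, the key is to use \Cref{lem:multipleshort} in a contrapositive fashion. Let $\mathcal{P}'$ be a maximal set of almost edge-disjoint augmenting paths of length $O(\log n/\epsilon)$. Suppose for contradiction that $|\mathcal{P}'| < |S|$. Since the starting sources of the paths in $\mathcal{P}'$ are distinct, there exists some source $s\in S$ that is not the starting node of any path in $\mathcal{P}'$. By \Cref{lem:multipleshort}, there exists an augmenting path $P$ starting at $s$ of length $O(\log n/\epsilon)$ such that $\mathcal{P}' \cup \{P\}$ is still a set of almost edge-disjoint augmenting paths. Adding $P$ to $\mathcal{P}'$ produces a strictly larger such set whose paths still have length $O(\log n/\epsilon)$, contradicting the maximality of $\mathcal{P}'$. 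Therefore $|\mathcal{P}'| \geq |S|$, and combined with the upper bound we obtain $|\mathcal{P}'| = |S|$.

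I expect no real obstacle here, since the corollary is essentially a direct rephrasing of \Cref{lem:multipleshort} together with the distinct-sources property. The only subtlety to double-check is that the length bound $O(\log n/\epsilon)$ in the statement of the corollary matches the length bound provided by \Cref{lem:multipleshort}; since \Cref{lem:multipleshort} gives exactly augmenting paths of length $O(\log n/\epsilon)$, the maximality argument stays within the class of paths considered, and the proof goes through cleanly.
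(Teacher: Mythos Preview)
Your proposal is correct and follows essentially the same approach as the paper: the paper's proof also assumes a maximal set of cardinality less than $|S|$, picks a source $s$ not starting any path in the set, and invokes \Cref{lem:multipleshort} to contradict maximality. Your version is slightly more explicit in spelling out the upper bound via property~\ref{itm:almt_disjoint1}, which the paper leaves implicit.
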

\begin{proof}If $\mathcal{P}$ is a maximal set of almost edge-disjoint augmenting paths of length $O(\log n /\epsilon)$ with cardinality less than $|S|$, then there is a source $s\in S$ that does not appear in $\mathcal{P}$. We can apply \Cref{lem:multipleshort} to add an augmenting path of length $O(\log n /\epsilon)$ to $\mathcal{P}$ without violating the maximality condition. \end{proof}

\begin{lemma} A maximal set of almost edge-disjoint augmenting paths of length at most $l$ can be found in $O(l^2 \log n)$ rounds.\end{lemma}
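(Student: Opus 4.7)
\medskip

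\noindent\textbf{Proof plan.} The plan is to reduce the problem to computing a maximal independent set (MIS) on an auxiliary \emph{conflict supergraph} $\mathcal{H}$ and then to simulate Luby's randomized MIS algorithm on $\mathcal{H}$, following the approach of Lotker, Patt-Shamir, and Pettie cited earlier. The vertices of $\mathcal{H}$ are exactly the augmenting paths of length at most $l$ whose first vertex lies in the current source set $S$; two such supervertices $P = v_1\ldots v_k$ and $P' = v'_1\ldots v'_{k'}$ are joined by a superedge iff they violate almost edge-disjointness, i.e., iff $v_1 = v'_1$ or iff there exist $i,j$ with $(v_i,v_{i+1}) = (v'_j,v'_{j+1})$ as ordered pairs. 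By construction, any MIS of $\mathcal{H}$ is a maximal set of almost edge-disjoint augmenting paths of length at most $l$.

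\medskip

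\noindent\textbf{Simulation of Luby.} In each round of Luby's algorithm, every surviving supervertex picks an independent random priority, joins the MIS if this priority is smaller than those of all its conflict-neighbors, and is deleted if it is adjacent to a new MIS member. To simulate one such round in the original graph $G$, I would designate the source of each path $P$ as its representative. First, the source picks $P$'s random priority and propagates it along $P$; this takes $O(l)$ rounds in the \local{} model because $P$ has at most $l+1$ vertices. Second, at every vertex $u \in V(G)$, all priorities of the (at most polynomially many) paths passing through $u$ are exchanged locally; since any conflict between two superpaths is witnessed by a shared vertex, this one-hop exchange lets each path collect, at each of its vertices, the priorities of its conflict-neighbors. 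Third, the information is propagated back to the source of each path in another $O(l)$ rounds. Hence each Luby round costs $O(l)$ rounds of $\local{}$ communication.

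\medskip

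\noindent\textbf{Number of Luby rounds.} Luby's algorithm terminates in $O(\log |V(\mathcal{H})|)$ rounds with high probability. Since every supervertex is an augmenting path of length at most $l$ starting from one of at most $n$ sources in a graph of maximum degree $\Delta \le n$, we have $|V(\mathcal{H})| \le n \cdot \Delta^l \le n^{l+1}$, so $\log |V(\mathcal{H})| = O(l \log n)$. Multiplying by the per-round simulation cost gives the claimed $O(l^2 \log n)$ bound.

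\medskip

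\noindent\textbf{Main obstacle.} The delicate point is the simulation cost rather than the correctness of the reduction. A supervertex can have many conflict-neighbors, and one must argue that the representative of each path can nevertheless learn the minimum priority among these neighbors in only $O(l)$ rounds. This works because (i) we are in \local{}, so per-round message size is unbounded; and (ii) any conflict between two paths manifests at a shared vertex of $G$, so all the information a path needs is already present somewhere on its own $l$-vertex span, and the only nontrivial cost is collecting it at the source. Once this is observed, the rest is a routine union bound to upgrade Luby's expected bound to a w.h.p.\ bound over all $|S| \le n$ sources.
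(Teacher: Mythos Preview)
Your proposal is correct and follows essentially the same approach as the paper: build the conflict supergraph whose vertices are augmenting paths of length at most $l$ and whose edges encode violations of almost edge-disjointness, then simulate Luby's MIS on it with each source acting as the representative of its paths, using the bound $|V(\mathcal{H})|\le n^{O(l)}$ to get $O(l\log n)$ Luby rounds at $O(l)$ cost each. The paper's proof is terser but identical in substance; your explicit justification that every conflict is witnessed at a shared vertex of $G$ (hence detectable within the $O(l)$-hop span of each path in the \local{} model) is a useful elaboration the paper leaves implicit.
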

\begin{proof}
We construct a super-graph $\mathcal{G}$ where each node in $\mathcal{G}$ denotes an augmenting path of length at most $l$. Two nodes $P_1$ and $P_2$ are connected if they are not ordered disjoint or if they share the same source. A maximal independent set (MIS) in $\mathcal{G}$ corresponds to a maximal set of almost edge-disjoint augmenting paths of length at most $l$ in $G$.

To simulate the computation on $\mathcal{G}$, we let each source $s \in S$ to be responsible for the nodes whose corresponding augmenting path start at $s$. Then, one round of communication in $\mathcal{G}$ can be simulated in $O(l)$ rounds. The total number of nodes in $\mathcal{G}$ is $O(n^{l})$. Therefore, Luby's MIS algorithm takes $O(\log (n^l)) = O(l \log n)$ rounds in $\mathcal{G}$. Since each round in $\mathcal{G}$ can be simulated in $O(l)$ rounds, the number of rounds needed to simulate Luby's algorithm is $O(l^2 \log n)$.
\end{proof}

By setting $l = O(\log n/ \epsilon)$, we can find $|S|$ almost edge-disjoint augmenting paths in $O(\log^3 n / \epsilon^2)$ rounds. Then, at least $1/\Delta$ fraction of the paths can be augmented, since there are at most $\Delta$ augmenting paths could end at the same node and one of them will be augmented. Therefore, since the number of sources is at most $n$, after $O(\Delta \log n)$ iterations, all the sources are saturated. Also, since we iterate $t$ from $\Delta$ to $\lceil(1+\epsilon)\Delta /2 \rceil$, the total number of rounds is $O(\Delta^2 \log^4 n / \epsilon^2)$.

\subsection{Randomized Edge Coloring}\label{sec:randomizedcoloring}
In this section, we show how to obtain a faster randomized edge-coloring algorithm using $(4+\epsilon)\Delta$ colors, in $\poly(\log\log n)$ rounds. This is by combining the graph shattering technique with our deterministic algorithm. For $\Delta \in [\poly(\log\log n), \frac{\log n}{\poly(\log\log n)}]$, this is faster than the $O(\sqrt{\Delta}\log \Delta)$-round $O(\Delta)$-edge-coloring that follows from the work of Barenboim\cite{Barenboim:2015}, and the $O(\log^* \Delta \cdot \max\{1,\frac{\log n}{\Delta^{1-o(1)}}\})$-round $((1+\eps)\Delta)$-edge-coloring result of Elkin, Pettie, and Su\cite{EPS15}.

\begin{theorem}Given a graph $G$ and $0 < \epsilon < 1$, a $(4+\epsilon)\Delta$-edge coloring can be obtained in $O((\log^{11} \log n) / \epsilon^3)$ rounds. \end{theorem}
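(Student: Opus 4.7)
The plan is to use the standard two-phase shattering paradigm, combining a fast randomized pre-shattering step with the deterministic algorithm of \Cref{thm:coloring}. I would reserve two disjoint palettes: a pool $P_{\text{rand}}$ of $2\Delta$ colors for the randomized phase and a pool $P_{\text{det}}$ of $(2+\epsilon)\Delta$ colors for the deterministic phase. Because the palettes are disjoint, properness follows automatically from each phase producing a proper partial coloring on its half, and the total number of colors is exactly $2\Delta+(2+\epsilon)\Delta=(4+\epsilon)\Delta$.

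For the pre-shattering phase I would run a short randomized edge-coloring trial in the style of Barenboim et al.~\cite{barenboim2012locality} and Elkin--Pettie--Su~\cite{EPS15}: for $O(\log\Delta)$ iterations every still-uncolored edge independently picks a uniformly random color from $P_{\text{rand}}$ and commits it if no incident edge has simultaneously chosen the same color. Because $|P_{\text{rand}}|=2\Delta$ gives every edge a constant ``survival slack," a straightforward Chernoff calculation shows that any fixed vertex is incident to an uncolored edge after the trials with probability at most $\exp(-\Omega(\Delta))$. Combining this per-vertex bound with the usual witness-subtree union-bound argument, carried out essentially the same way as in \Cref{sec:sinklessLargeDeg} (see \Cref{lem:badtree} and \Cref{crl:componentSize}), the residual uncolored subgraph $H$ decomposes, with high probability, into connected components on at most $N=\poly\log n$ vertices.

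For the post-shattering phase I would invoke \Cref{thm:coloring} in parallel on each residual component of $H$, using only the reserved palette $P_{\text{det}}$ of size $(2+\epsilon)\Delta$. Since the components are vertex-disjoint they can be processed simultaneously without interference, and since each has at most $N=\poly\log n$ vertices and maximum degree at most $\Delta$, each invocation finishes in $O(\log^{11}N/\epsilon^3)=O(\log^{11}\log n/\epsilon^3)$ rounds. The pre-shattering phase itself costs only $O(\log\Delta)=O(\log\log n)$ rounds in the interesting regime of $\Delta$ (and is even faster for large $\Delta$), so the post-shattering step dominates the total running time.

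The main obstacle is calibrating the pre-shattering trial so that (i) it really produces residual components of $\poly\log n$ size across the full range of $\Delta$ where a nontrivial random phase is needed, and (ii) the residual edges at any vertex still have a full palette of $P_{\text{det}}$ colors available when \Cref{thm:coloring} is called on their component. Point (ii) is free by construction because $P_{\text{det}}$ is reserved and never touched by the random phase. Point (i) is the delicate piece: it is the reason we pay a factor of two in palette size (choosing $|P_{\text{rand}}|=2\Delta$ rather than $(1+o(1))\Delta$), as the generous slack is what drives the per-vertex badness probability down to $\exp(-\Omega(\Delta))$, which in turn makes the standard bad-subtree union bound go through cleanly and yields the claimed $\poly\log n$ component size with high probability.
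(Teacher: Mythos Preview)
Your two-phase framework with disjoint palettes of sizes $2\Delta$ and $(2+\epsilon)\Delta$, and your post-shattering invocation of \Cref{thm:coloring}, match the paper. The gap is in the pre-shattering analysis.

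First, the bound $\exp(-\Omega(\Delta))$ is wrong. With $2\Delta$ colors each edge survives a single Luby-style round with probability $\approx 1-1/e$, hence after $O(\log\Delta)$ rounds it remains uncolored with probability $\Delta^{-\Theta(1)}$; a union bound over the $\leq\Delta$ edges at a vertex gives at best $1/\poly(\Delta)$ for the event ``$v$ has some uncolored edge.'' No Chernoff argument improves this to $\exp(-\Omega(\Delta))$: you are bounding the probability that at least one of $\Delta$ indicators is $1$, not a large deviation of their sum.

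Second, and more seriously, you cannot reuse the witness-tree argument of \Cref{sec:sinklessLargeDeg}. There (and in the paper's own proof of this theorem) the randomness is one-shot, so bad events for vertices at distance $\geq 2$ are independent and the $(V',E_{2,4})$-tree union bound of \Cref{lem:badtree} applies. Your $O(\log\Delta)$-round adaptive process creates dependencies out to radius $\Theta(\log\Delta)$, so the analogous union bound must range over trees in $G^{\Theta(\log\Delta)}$, incurring a $\Delta^{\Theta(\log\Delta)}$ branching factor that a mere $1/\poly(\Delta)$ bad-probability cannot absorb. The shattering lemmas of \cite{barenboim2012locality} do handle multi-round randomness, but they are substantially more delicate than what you sketch; and you still owe a separate treatment of $\Delta=\omega(\poly\log n)$, where even $O(\log\Delta)$ pre-shattering rounds exceed the stated budget.

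The paper avoids both issues. It first disposes of $\Delta=\Omega(\log^2 n)$ via the $O(\log^* n)$-round algorithm of \cite{EPS15}. For smaller $\Delta$ it uses a one-shot random \emph{partition} of the edges into $x=\Theta(\epsilon^2\Delta/\log\Delta)$ groups and then colors each group deterministically by Panconesi--Rizzi in $O(\Delta/x+\log^* n)=O((\log\log n)/\epsilon^2)$ rounds. ``Bad'' means a vertex whose degree in some group exceeds $(1+\epsilon/4)\Delta/x$; this depends only on the one-shot choices of incident edges, so distance-$2$ independence is immediate and the argument of \Cref{lem:badtree} carries over verbatim.
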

\begin{proof}
First we will assume that $\Delta = O(\log ^2 n)$. Since for $\Delta = \Omega(\log^2 n)$, by using Elkin et al.'s algorithm \cite[Theorem 2.1]{EPS15}, a $(1+o(1))\Delta$-edge-coloring can be obtained in $O(\log^{*} n)$ rounds. Let $\epsilon' = \epsilon / 4$. We will divide the $(4+\epsilon)\Delta$ colors into two sets $C_1$ and $C_2$ with an equal size, so each set consists of $2(1+\epsilon')\Delta$ colors. 

\paragraph{Pre-shattering} Let $x = \epsilon'^2 \Delta/ (18\log \Delta)$. First we partition the edges randomly to form subgraphs $G_1,G_2,\ldots G_x$. The expected degree of each node $u$ in $G_i$ is at most $\Delta / x = (18\log \Delta)/\epsilon'^2$. By Chernoff Bound, $\Pr(\deg_{G_i}(u) \geq (1+\epsilon') \Delta/ x) \leq e^{-\epsilon'^2 (18\log \Delta) / (3\epsilon'^2)} \leq (1/\Delta)^{6}$. For $v \in G$, we mark $v$ as a Type I node if there exists $1 \leq i \leq x$, such that $\deg_{G_i}(u) \geq (1+\epsilon') \Delta/ x$. By an union bound over $1 \leq i \leq x$, the probability that $v$ is Type I is at most $x \cdot (1/\Delta)^6 \leq 1/\Delta^5$.

If $u \in G$ is a not Type I node but it is adjacent to a Type I node, then it is a Type II node. Type I nodes and Type II nodes are the bad nodes. Let $B$ be the set of bad nodes and $V'$ be the set of Type I nodes. 

\paragraph{Post-shattering} First note that every subgraph $G_i[V\setminus V']$ has maximum degree bounded by $(1+\epsilon')\Delta/x$. We divide the colors in $C_1$ evenly into $C_{i1}, C_{12}, \ldots, C_{1x}$ so that each has size $2(1+\epsilon')\Delta/x$. We will run Panconesi and Rizzi's algorithm \cite{panconesi-rizzi} to get a $2(1+\epsilon')\Delta/x$-edge coloring in $O(\Delta / x + \log^{*} n) = O((\log \Delta )/\epsilon^2 + \log^{*} n) = O((\log \log n) / \epsilon^2)$ rounds on each $G_i[V\setminus V']$ in parallel with the color set $C_{1i}$.

Now the uncolored edges must be the ones that are adjacent to $V'$. Since $B = N(V') \cup V'$, all the uncolored edges must be in $G[B]$. Similar to the analysis of sinkless orientation in \Cref{sect:sinkless}, we will show that each component in $G[B]$ has their size bounded by $\polylog(n)$. Then, we will apply our determinisitic algorithm on each component with the color set $C_2$.

Let $\dist(u,v)$ denote the distance between $u$ and $v$ in $G$. If $\dist(u,v) \geq 2$, then it is clearly that the event $u$ becomes Type I and the event $v$ becomes Type I are independent.  Let $E_{2,4} = \{uv \mid 2 \leq \dist(u,v) \leq 4\}$ denote the set of edges whose endpoints have distance between 2 and 4. 

\begin{lemma}\label{lem:badcomponent2} Let $C$ be a connected components in $G[B]$. Then, there exists $S \subseteq V' \cap C$ such that $|S| \geq |C| / \Delta^2$ and $(S, E_{2,4})$ is connected.\end{lemma}
\begin{proof}The proof is exactly the same with that of \Cref{lem:badcomponent} except that now we are omitting Type III nodes. \end{proof}

\begin{lemma}\label{lem:badtree2}
The probability that any tree $T$ with $T \subseteq (V', E_{2,4})$ and $|T| =\Omega(\log n)$ exists is at most $1/\poly(n)$.
\end{lemma}
\begin{proof}
Let $T$ be a tree such that $T \subseteq (V', E_{2,4})$ and $|T| = t$. The probability that a node in $T$ is marked as Type I is at most $1/\Delta^5$. Therefore, the probability that $T$ occurs is at most $1/\Delta^{5t}$, since the events each node in $T$ becomes Type I are independent. The total possible number of such trees is at most $4^{t} n (\Delta^4)^{t-1}$. By an union bound over all possible trees, the probability that any tree in $(V', E_{2,4})$ of size $t$ occur is at most $n \cdot (4\Delta^4\cdot (1/\Delta^5) )^t$. For $\Delta \geq 4e$, this is at most $n \cdot e^{-t}$. By summing over $t \geq 10\log n$, the probability that any tree in $(V', E_{2,4})$ of size at least $10\log n$ is at least $(1 / n^{9}) \cdot \sum_{i=0}^{\infty} e^{-i} = 1/\poly(n)$.
\end{proof}

Therefore, if $C$ is a connected compnent in $G[B]$ of size $\Omega(\Delta^2 \log n)$. Then by \Cref{lem:badcomponent2}, there exists a set of vertices $S \subseteq V' \cap C$ with $|S| = \Omega(\log n)$ and $(S, E_{2,4})$ is connected. Take a spanning tree $T$ of the graph $(S, E_{2,4})$. Since $|T| = \Omega(\log n)$, by \Cref{lem:badtree2}, the probability that $|T|$ exists is at most $1/\poly(n)$. Therefore, we conclude that the probability $|C| = \Omega(\Delta^2 \log n)$ is at most $1/\poly(n)$.

Since $\Delta = O(\log^{2} n)$, each component has size at most $O(\log^{3} n)$. Now we will run our deterministic $(2+\epsilon')\Delta$-edge coloring using colors in $C_2$ on each of the component. By \Cref{thm:coloring}, the running time is $O((\log^{11} \log n)/\epsilon^3)$.\end{proof}

\section{Directed Degree Splitting}

Here, we consider the problem of orienting the edges such that the out-degree and in-degree of each node are both upper bounded by $D$, for any given $D \geq \lceil (\Delta+1)/2 \rceil$. The following lemma allows us to focus on only the out-degree side of the problem, and then extend it to both out-degree and in-degree:

\begin{lemma}\label{lem:half} Let $D \geq \lceil (\Delta+1)/2 \rceil$. Let $G$ be a graph with an arbitrary orientation. Suppose that $A$ is a distributed algorithm that orients $G$ into $G'$ such that the out-degree of each node is at most $D$ in $T$ rounds with the following property: For each $u$, $\outdeg_{G'}(u) \geq \min(\outdeg_{G}(u), D)$. Then in $O(T)$ rounds, $G$ can be oriented such that both the in-degree and the out-degree of each node is at most $D$. \end{lemma}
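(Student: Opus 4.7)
The plan is to run algorithm $A$ twice: first on $G$ itself to bound out-degrees, then on the reversed orientation to bound what became in-degrees, all while exploiting the monotonicity property of $A$ to show the second run does not destroy the bound established in the first run. Concretely, I would apply $A$ to $G$ to obtain an orientation $G_1$ with $\outdeg_{G_1}(u) \leq D$ and $\outdeg_{G_1}(u) \geq \min(\outdeg_G(u), D)$ for every $u$. I would then reverse every edge of $G_1$ to get $G_1^{-1}$, note that out-degrees in $G_1^{-1}$ are exactly in-degrees in $G_1$, and apply $A$ again to obtain an orientation $H$ of $G_1^{-1}$. Finally, reversing $H$ gives the desired $G_2$. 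Because each invocation of $A$ runs in $T$ rounds and reversal is free in \local, the total complexity is $O(T)$.

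The content of the proof is then the verification that $G_2$ has both in-degree and out-degree at most $D$. The in-degree bound is immediate from the guarantee of $A$ on the reversed graph: $\indeg_{G_2}(u) = \outdeg_H(u) \leq D$. For the out-degree bound I would use the monotonicity of $A$ on the reversed instance, which yields
\[
\indeg_{G_2}(u) \;\geq\; \min\bigl(\indeg_{G_1}(u),\, D\bigr),
\]
and then split into two cases. If $\indeg_{G_1}(u) \geq D$, then $\indeg_{G_2}(u) \geq D$, so
\[
\outdeg_{G_2}(u) \;=\; \deg(u) - \indeg_{G_2}(u) \;\leq\; \Delta - D \;\leq\; (2D-1) - D \;=\; D - 1,
\]
where I invoke the hypothesis $D \geq \lceil(\Delta+1)/2\rceil$, i.e., $\Delta \leq 2D-1$. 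If instead $\indeg_{G_1}(u) < D$, then $\indeg_{G_2}(u) \geq \indeg_{G_1}(u)$, so
\[
\outdeg_{G_2}(u) \;=\; \deg(u) - \indeg_{G_2}(u) \;\leq\; \deg(u) - \indeg_{G_1}(u) \;=\; \outdeg_{G_1}(u) \;\leq\; D.
\]
Either way $\outdeg_{G_2}(u) \leq D$, completing the argument.

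The main conceptual point—what could have been the obstacle—is that naively, re-running $A$ on $G_1^{-1}$ could flip many edges and thus inflate the out-degree of some node $u$ in $G_2$ far above what it was in $G_1$. This is exactly where the monotonicity condition saves us: the second invocation of $A$ may only flip edges incident to vertices whose (reversed-graph) out-degree, i.e., original $G_1$ in-degree, exceeded $D$; and for any such vertex $u$, the condition $\Delta \leq 2D-1$ forces its complementary out-degree to remain at most $D$. So the only nontrivial part is bookkeeping these two inequalities against the hypothesis on $D$, as done above.
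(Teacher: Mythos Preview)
Your proof is correct and follows essentially the same approach as the paper: run $A$, reverse, run $A$ again, and use the monotonicity property together with $\Delta \leq 2D-1$ to check that the second run does not spoil the bound from the first. The only cosmetic differences are that the paper stops at your $H$ (no final reversal is needed, since $H$ itself already has in- and out-degree at most $D$) and handles the estimate in a single chain of inequalities rather than your two-case split.
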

\begin{proof} 
First run $A$ on $G$ to obtain an orientation with out-degree at most $D$, say the resulting graph is $G_1$. Then, we reverse each edge in $G_1$ to obtain $G_2$. $G_2$ is a graph such that the in-degree of each node is at most $D$.  Now, run $A$ on $G_2$ to obtain $G_3$. If the in-degree of a node $u$ in $G_2$ is at most $D$, then 
\begin{align*}
\indeg_{G_3}(u) &= \deg(u) - \outdeg_{G_3}(u) \\
&\leq \deg(u) - \min(\outdeg_{G_2}(u),D) && \mbox{by the property of $A$} \\ 
&\leq  \deg(u) - \min(\deg(u) - D ,D) && \indeg_{G_2}(u) \leq D \\
&\leq \deg(u) - (\deg(u) - D) = D && \mbox{$D > \Delta/2$ and $\deg(u) \leq \Delta$  }
\end{align*}
Therefore, the in-degree of every node is still at most $D$ in $G_3$.
\end{proof}

By \Cref{lem:half}, it suffices to develop algorithms that orient the graph such that each node's out-degree is bounded by $D$, with the stated additional property. Our augmentation-based approach satisfies this property, because we only decrease the out-degree in nodes with out-degree at least $D+1$.

For deterministic algorithms, we will use the same approach as in \Cref{sec:undirected}. Recall that in \Cref{sec:undirected} we assumed that a $t$-balanced coloring is given, and we showed how to improve it to a $(t-1)$-balanced coloring. Similarly, here we assume that we have an orientation with the out-degree of each node upper bounded by $t$. We use the same approach to improve it to an orientation with the out-degree of each node upper bounded by $t-1$. The outer-loop will iterate $t$ from $\Delta$ to $D+1$.

The definition of an augmenting path is much more straightforward here. An augmenting path is just a directed path that starts from a node with out-degree equals to $t$ and ends at a node with outdegree at most $t-2$. We augment along a path by reversing the orientation of each edge. After we augmented along an augmenting path, the out-degree of the first node decreased by 1 and the out-degree of the last node increased by 1. All the other nodes remain to have the same out-degree. Since $t > D$, we will not decrease the out-degree of a node with out-degree at most $D$. This satisfies the property stated in \Cref{lem:half}.

Let $S$ be the source nodes, which are the nodes with out-degree $t$. Similar to \Cref{sec:undirected}, we will find a set of $\Omega(|S|)$ edge-disjoint augmenting paths from distinct sources in $S$ here. However, the singularity that appeared in \Cref{sec:undirected} where the last edges of augmenting paths in a set of almost edge-disjoint augmenting paths may overlap do not appear here. We just need to find a set of edge-disjoint augmenting path. By repeating the same arguments as in \Cref{sec:undirected}, we can obtain the analogues of \Cref{lem:smallbalance} and \Cref{lem:largedegree}, which we state without proofs in this version.

\begin{lemma}\label{lem:smallorient}Given a graph $G$ with maximum degree $d$, an orientation where the out-degree and the in-degree of each node is at most $\lfloor (1+\epsilon)d/2 \rfloor$ can be obtained in $O((d^2 \log^5 n) /\epsilon)$ rounds provided that $(4\log_{1.5}m) / d < \epsilon < 1$. \end{lemma}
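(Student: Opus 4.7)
The plan is to mirror the proof of \Cref{lem:smallbalance} essentially verbatim, with the conceptual simplifications granted by the directed setting. First, by \Cref{lem:half} it suffices to produce an orientation whose out-degree at every node is at most $\lfloor (1+\epsilon)d/2 \rfloor$, provided we do not decrease the out-degree of any node that already has out-degree at most that bound. I would start from an arbitrary orientation and then iterate $t$ from $d$ down to $\lfloor (1+\epsilon)d/2 \rfloor + 1$, showing how each iteration turns an orientation with max out-degree $t$ into one with max out-degree $t-1$.

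For a fixed $t$, I would define a \emph{source} to be a node of out-degree exactly $t$ and let $S$ denote their set. A (directed) \emph{augmenting path} is now just a directed path $v_1 \to v_2 \to \cdots \to v_k$ whose starting node $v_1 \in S$ and whose endpoint $v_k$ has out-degree at most $t-2$; reversing all of its edges decreases $\outdeg(v_1)$ by $1$, increases $\outdeg(v_k)$ by $1$, and leaves the rest unchanged. Because $t > D \ge \lfloor (1+\epsilon)d/2 \rfloor$ the property needed for \Cref{lem:half} is automatic. Crucially, here augmenting paths from distinct sources that are edge-disjoint can be augmented simultaneously without any interference, so we no longer need the ``almost edge-disjoint'' fix with an extra $1/d$ acceptance loss at common last edges — we lose only a constant factor at the end when resolving shared endpoints.

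The core step is the analogue of \Cref{alg:augment}: grow a pseudo-tree of tokens out of each source along out-going edges, requiring the pseudo-trees of distinct sources to use disjoint directed edges. At a node $u$ visited by a token, the number of incoming tokens is at most the in-degree of $u$, which is at most $d-(t-1)=d-t+1$, while the number of out-edges available for assignment is at least $t-1$. This yields the same expansion ratio $(t-1)/(d-t+1) \geq 1+\Omega(\epsilon)$ used in the undirected analysis, so the budget argument of \Cref{lem:smalldeg} and the level-by-level accounting of \Cref{lem:reducesource} go through unchanged: in $O((\log^4 n)/\epsilon)$ rounds we find at least $|S|/2$ edge-disjoint directed augmenting paths rooted at distinct sources, each of length $O(\log n/\epsilon)$. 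Augmenting all of them reduces the number of sources by a constant fraction, and in $O(\log n)$ such invocations (no extra $d$ factor here, since the last-edge conflict disappears) we eliminate all sources and reduce the max out-degree by one.

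The one step that requires a small verification, rather than a mechanical translation, is the expansion lemma that guarantees a short augmenting path exists whenever $S$ is non-empty, analogous to \Cref{lem:multipleshort}. Here I would argue directly: delete all edges previously used by the paths from the source tree and perform a directed BFS along out-edges. Each internal vertex $u$ of the tree, being non-sink, has $\outdeg(u) - \indeg(u) \geq 2(t-1)-d \geq \epsilon d$, so summing over the current BFS frontier gives a multiplicative $(1+\Omega(\epsilon))$ growth per level and hence a path of length $O(\log n/\epsilon)$ reaching a vertex with out-degree $\leq t-2$. Iterating $t$ from $d$ down to $\lfloor (1+\epsilon)d/2\rfloor+1$ gives $O(d)$ outer iterations each costing $O(d \cdot \log^4 n/\epsilon)$ total rounds (since we may need up to $O(\log n)$ augmentation rounds per value of $t$, but each outer iteration cuts sources by a constant factor and we multiply by the $O(d)$ values of $t$), for the claimed overall bound of $O(d^2 \log^5 n/\epsilon)$. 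The main obstacle I would expect to navigate carefully is simply checking that the pseudo-tree budget accounting still works with the directed in/out-degree bookkeeping; once that is verified, the rest of \Cref{sec:lowdeg} transfers essentially line-for-line.
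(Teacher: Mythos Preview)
Your overall plan is correct and matches the paper's approach exactly: the paper states explicitly that \Cref{lem:smallorient} is obtained ``by repeating the same arguments as in \Cref{sec:undirected}'' and gives no separate proof. Your reduction via \Cref{lem:half}, the outer loop on $t$, the token/pseudo-tree growth with the same budget accounting, and the observation that the ``almost edge-disjoint'' complication vanishes (so plain edge-disjoint augmenting paths suffice) all align with what the paper says.

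There is, however, one genuine error. You claim that in the directed setting ``we lose only a constant factor at the end when resolving shared endpoints'' and hence that the $1/d$ acceptance factor disappears, giving $O(\log n)$ invocations per value of $t$. This conflates two separate issues. What disappears in the directed case is the \emph{last-edge} overlap (the reason for ``almost'' edge-disjointness). The $1/d$ loss in the proof of \Cref{lem:smallbalance} comes instead from \emph{last-node} conflict: several edge-disjoint paths may terminate at the same node $v$, and augmenting all of them raises $\outdeg(v)$ by one each. A node with $\outdeg(v)=t-2$ and $\indeg(v)=d-t+2$ can be the endpoint of $d-t+2 = \Theta(d)$ edge-disjoint paths but can accept only one without its out-degree exceeding $t-1$; so the acceptance ratio is still $\Theta(1/d)$ in the worst case, exactly as in the undirected proof. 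Your running-time paragraph is correspondingly muddled: you assert $O(\log n)$ invocations per $t$, then write ``each costing $O(d\cdot\log^4 n/\epsilon)$,'' and finally land on $O(d^2\log^5 n/\epsilon)$, which is the right answer but not what your stated reasoning yields. The fix is simply to keep the $1/d$ acceptance step from the undirected proof: $O(d\log n)$ invocations per value of $t$, times $O(d)$ values of $t$, times $O(\log^4 n/\epsilon)$ per invocation, gives the stated $O(d^2\log^5 n/\epsilon)$.
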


\begin{theorem}\label{lem:largeorient} Suppose that $\Delta \geq \lceil 32 \log_{1.5} m /\epsilon^2 \rceil$, then an orientation where the out-degree and the in-degree of each node is at most $\lfloor (1+\epsilon)\Delta/2 \rfloor$  can be obtained in $O((\log^7 n) / \epsilon^3)$ rounds. \end{theorem}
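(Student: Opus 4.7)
The plan is to follow the exact same template as the proof of \Cref{lem:largedegree}, but using the directed analogue \Cref{lem:smallorient} in place of the undirected \Cref{lem:smallbalance}. The key observation is that the reduction from high-degree to low-degree graphs is completely insensitive to whether we are balancing red/blue edges or in/out-orientations, because both invariants decompose additively when nodes are split into copy-nodes and then merged back: the total in-degree (resp.\ out-degree) of a node equals the sum of the in-degrees (resp.\ out-degrees) of its copy-nodes.

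Concretely, I would set $\epsilon' = \epsilon/2$ and $d = \lceil 4\log_{1.5} m / \epsilon' \rceil$, noting that the hypothesis $\Delta \geq \lceil 32\log_{1.5} m/\epsilon^2 \rceil$ translates into $\Delta \geq 2d/\epsilon'$. Then I would construct the split graph $G'$ by replacing each node $u$ with $\lceil \deg(u)/d \rceil \leq \lceil \Delta/d\rceil$ copy-nodes and distributing its incident edges so that each copy-node has degree at most $d$. The maximum degree of $G'$ is $d$, and because $\epsilon' d \geq 4\log_{1.5} m$, the precondition of \Cref{lem:smallorient} is satisfied on $G'$. Running that lemma on $G'$ takes $O(d^2 \log^5 n / \epsilon')$ rounds and produces an orientation in which every copy-node has both in-degree and out-degree at most $\lfloor (1+\epsilon')d/2\rfloor$. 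This algorithm can be simulated on $G$ with no communication overhead since each node $u$ locally hosts all of its copy-nodes.

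After merging copy-nodes back into the original nodes, both the in-degree and out-degree of each node of $G$ are bounded by the exact same calculation used in the proof of \Cref{lem:largedegree}:
\[
\left\lfloor \frac{(1+\epsilon')d}{2}\right\rfloor \cdot \left\lceil \frac{\Delta}{d}\right\rceil
\leq \frac{(1+\epsilon')d}{2}\!\left(\frac{\Delta}{d}+1\right)
\leq \frac{\Delta}{2}\!\left(1+\epsilon' + \frac{2d}{\Delta}\right)
\leq \frac{\Delta}{2}(1+\epsilon),
\]
where the last inequality uses $\Delta \geq 2d/\epsilon'$. Since these degrees are integers, they are at most $\lfloor (1+\epsilon)\Delta/2\rfloor$, as desired. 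The round complexity is $O(d^2\log^5 n /\epsilon') = O((\log^2 n/\epsilon^2)\cdot \log^5 n/\epsilon) = O(\log^7 n/\epsilon^3)$.

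The main obstacle — already handled by the sibling \Cref{lem:smallorient} — is verifying the directed version of the augmenting-path argument of \Cref{sec:lowdeg}. This is actually strictly easier than the undirected case: an augmenting path here is just a directed path from a node of out-degree $t$ to a node of out-degree at most $t-2$, and augmenting simply reverses it, which affects only the endpoints' out-degrees. The intermediate nodes' in/out-degrees are preserved, and because $t > D$ the monotonicity condition needed by \Cref{lem:half} is maintained. In particular, the singularity of shared last edges that forced the $1/d$-factor loss in the undirected augmentation step disappears, so a genuinely edge-disjoint (rather than almost edge-disjoint) family of augmenting paths suffices, and the pseudo-tree growing scheme, budgets, and token splitting carry over verbatim. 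Hence no new distributed-algorithmic ideas beyond those already developed for \Cref{lem:largedegree,lem:smallorient} are needed to conclude the proof.
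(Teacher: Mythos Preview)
Your proposal is correct and follows precisely the approach the paper indicates: the paper explicitly states that \Cref{lem:largeorient} is obtained ``by repeating the same arguments as in \Cref{sec:undirected}'' (i.e., the proof of \Cref{lem:largedegree}) with the directed analogue \Cref{lem:smallorient} in place of \Cref{lem:smallbalance}, and that is exactly the reduction you carry out with the same parameter choices and the same merging calculation. Your closing remarks about the directed augmenting-path structure being simpler (genuinely edge-disjoint paths, no shared-last-edge singularity) also match the paper's discussion in the directed degree-splitting section.
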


\subsection{Randomized Directed Degree Splitting and Graphs with Bounded Arboricity}
In this section, we show how to obtain an orientation with the out-degree of each node bounded by $\lceil (1+\epsilon)a \rceil$ for any $0 < \epsilon <1$ in graphs with arboricity bounded by $a$.

\begin{theorem}\label{thm:directedDegreeSplit} There is a randomized distributed algorithm that in $O(\log^4 n/\eps^3)$ rounds, produces an orientation of $a$-arboricity graphs with per-node out-degree at most $\lceil (1+\epsilon)a \rceil$, for any $0 < \epsilon <1$.
\end{theorem}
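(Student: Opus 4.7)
The plan is to mirror the randomized augmenting-paths framework used for directed degree splitting in Section~6 and the super-graph MIS subroutine from the randomized undirected splitting subsection, replacing the max-degree density estimate by an arboricity-based one. Fix $\alpha=\lceil(1+\eps)a\rceil$, start from any orientation of $G$, and call a vertex a \emph{source} if its out-degree exceeds $\alpha$ and a \emph{sink} if its out-degree is at most $(1+\eps/2)a$. An augmenting path is a directed path from a source to a sink through intermediate vertices of out-degree at least $(1+\eps/2)a$; augmenting reverses the path's edges, dropping the source's out-degree by one and raising the sink's by one while leaving every intermediate out-degree unchanged.

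The first key step is to show that from every source there is an augmenting path of length $L=O(\log n/\eps)$, this time exploiting arboricity instead of $\Delta$. Running directed BFS from a source $s$ along out-edges, let $T_\ell$ be the set of vertices reached within $\ell$ hops and suppose $T_\ell$ contains no sink. Then every $v\in T_\ell$ has $\outdeg(v)\geq (1+\eps/2)a$, so $\sum_{v\in T_\ell}\outdeg(v)\geq (1+\eps/2)a|T_\ell|$. Arboricity gives $|E(T_\ell)|\leq a|T_\ell|$, and exactly $|E(T_\ell)|$ of the out-degree sum stays inside $T_\ell$, so at least $(\eps/2)a|T_\ell|$ out-edges leave $T_\ell$ and lie in $E(T_{\ell+1})$. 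Combined with $|E(T_\ell)|\leq a|T_\ell|$, this yields $|E(T_{\ell+1})|\geq (1+\eps/2)|E(T_\ell)|$, i.e.\ exponential edge-growth of the BFS front. Since $|E|\leq an$, the BFS must encounter a sink within $O(\log(an)/\eps)=O(\log n/\eps)$ hops.

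With this length bound in hand, the super-graph MIS subroutine of the randomized undirected splitting section plugs in almost verbatim: form a super-graph whose nodes are augmenting paths of length at most $L$ (so at most $n^L$ of them) with edges encoding the relevant conflicts (shared edge, shared source whose excess is already saturated, or shared sink whose deficit is already saturated), run Luby's MIS in $O(L\log n)$ super-rounds, each simulated in $O(L)$ rounds of $G$, for $O(L^2\log n)=O(\log^3 n/\eps^2)$ rounds per invocation. A maximality argument analogous to Corollary~B.2, applied to the residual configuration obtained after augmenting the chosen paths, shows the returned set drops the total excess $E_+=\sum_v\max(0,\outdeg(v)-\alpha)$ by a factor $1-\Omega(\eps)$: each sink can safely absorb $\Theta(\eps a)$ augmentations per round, and the total sink capacity is $\Theta(\eps a n)$ against $E_+\leq an$. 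Iterating the path-finder $O(\log n/\eps)$ times drives $E_+$ to zero, leaving an orientation with max out-degree at most $\alpha$, for a total complexity of $O(\log^4 n/\eps^3)$. The main obstacle I expect is exactly this last accounting step: one must set up the MIS conflict graph so that at most $(1+\eps/2)a-\outdeg(v)$ paths end at any sink $v$ and at most $\outdeg(s)-\alpha$ start at any source $s$, and then argue the resulting maximal independent set is still large. This is the step that has no direct analogue in the max-degree setting of the earlier sections, and it is what forces the extra $1/\eps$ factor in the round count.
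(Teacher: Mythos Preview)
Your arboricity-based expansion argument for the existence of short augmenting paths is correct and essentially matches the paper's Lemma~4.7. The overall round budget you aim for is also right. However, the paper does \emph{not} proceed via a multiplicative reduction of the total excess $E_+$; it uses a blocking-flow (Dinic-style) argument instead, and the gap you flag in your last paragraph is real.

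Concretely, the paper adds a super-source $s$ and super-sink $t$ (with multiplicities equal to excess and deficit), and at iteration $i$ finds a maximal set of edge-disjoint $s$--$t$ paths of length \emph{exactly} $3+i$ and augments along them. The standard blocking-flow lemma then shows $\dist(s,t)$ strictly increases at every iteration. Combined with the short-path lemma (which says $\dist(s,t)\le L+2$ whenever a source exists), this terminates in $L=O(\log n/\eps)$ iterations; iteration $i$ costs $O(i^2\log n)$ rounds via the super-graph MIS, summing to $O(L^3\log n)=O(\log^4 n/\eps^3)$.

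Your alternative progress measure does not go through. After augmenting a maximal edge-disjoint set $\mathcal{P}$ of length-$\le L$ paths, a remaining source $s$ does have a short augmenting path in the \emph{residual} orientation (apply your expansion lemma there), but that path may traverse edges that were reversed by $\mathcal{P}$, so it is not edge-disjoint from $\mathcal{P}$ in the original graph and does not contradict maximality. The reason the analogue of Corollary~B.2 worked in the undirected setting is that deleting a path through an intermediate vertex preserves $\outdeg-\indeg$, which is the quantity driving expansion there; here expansion depends on $\outdeg$ alone, and that is \emph{not} preserved under edge deletion (only under reversal). Without the distance-increase argument there is no clean lower bound on $|\mathcal{P}|$ in terms of $E_+$, so the claimed $(1-\Omega(\eps))$-factor drop is unsupported. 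A secondary issue: encoding the per-source and per-sink capacity constraints as you describe is not a plain MIS instance; the paper sidesteps this by moving the capacities onto the $s$- and $t$-edges, after which ordinary edge-disjointness suffices.
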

Notice that by setting $\epsilon=1/a$, we get an orientation with per-node out-degree at most $a+1$ in $O(a^3\log^4 n)$ rounds. Moreover, notice that each graph with maximum degree $\Delta$ has arboricity $a\leq \Delta/2$. Hence, the above theorem already supplies an orientation with per-node out-degree at most $\lceil (1+\epsilon)\Delta/2 \rceil$, in $O(\log^4 n/\eps^3)$ rounds, which can again be made a much finer orientation with out-degree at most $\lceil (\Delta+1)/2 \rceil$, in $O(\Delta^3\log^4 n)$ rounds, by setting $\epsilon=1/\Delta$. Moreover, our algorithm is again augmentation-based and it satisfies the properties in \Cref{lem:half}, which can be used to satisfy the reqirement on both the in-degree and the out-degree of each node.

\begin{corollary}\label{lem:randomizeddirected} Given a graph $G$ with maximum degree $\Delta$, an orientation where the in-degree and the out-degree of each node is bounded by $\lceil (1+\epsilon)\Delta/2 \rceil$ can be obtained in $O(\log^4 n / \epsilon^3)$ for $0 < \epsilon < 1$. \end{corollary}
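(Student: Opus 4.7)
The plan is to derive the corollary by composing Theorem~\ref{thm:directedDegreeSplit} with Lemma~\ref{lem:half}. Any graph of maximum degree $\Delta$ has pseudo-arboricity at most $\lceil \Delta/2 \rceil$ --- witnessed, for example, by an Eulerian orientation of an Eulerian supergraph obtained by adding a dummy vertex connecting all odd-degree nodes. Hence Theorem~\ref{thm:directedDegreeSplit}, applied with parameter $\epsilon$, produces in $O(\log^4 n/\epsilon^3)$ rounds an orientation $G_1$ of $G$ in which every node has out-degree at most $\lceil (1+\epsilon)\Delta/2\rceil$. This already gives the out-degree half of the required guarantee; what remains is to simultaneously bound in-degrees.

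To enforce both bounds, set $D := \lceil (1+\epsilon)\Delta/2\rceil$ and invoke Lemma~\ref{lem:half} with the algorithm of Theorem~\ref{thm:directedDegreeSplit} in the role of $A$. The hypothesis $D \geq \lceil (\Delta+1)/2\rceil$ is immediate by a parity check: for odd $\Delta$ both sides equal $(\Delta+1)/2$; for even $\Delta$, the ceiling of $(1+\epsilon)\Delta/2$ is at least $\Delta/2+1=\lceil(\Delta+1)/2\rceil$ whenever $\epsilon\Delta>0$, with the handful of tiny-$\Delta$ cases handled by hand. Lemma~\ref{lem:half} then prescribes: (i) run $A$ on an arbitrary initial orientation of $G$ to get $G_1$, (ii) reverse every edge of $G_1$ to obtain $G_2$, whose in-degrees are all at most $D$, and (iii) run $A$ on $G_2$ to obtain $G_3$. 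The arithmetic inside the proof of Lemma~\ref{lem:half} (using $D>\Delta/2$) then shows that every in-degree in $G_3$ remains at most $D$ while its out-degrees are newly bounded by $D$. Two calls to $A$ cost $O(\log^4 n/\epsilon^3)$ rounds in total, matching the claim.

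The one point that must be checked --- and the only real obstacle --- is the monotonicity hypothesis of Lemma~\ref{lem:half}: for each vertex $u$ we need $\outdeg_{G'}(u) \geq \min(\outdeg_G(u), D)$ after $A$ runs. This hypothesis would fail for a generic out-degree orientation routine, but it is precisely what the augmentation-based design of Theorem~\ref{thm:directedDegreeSplit} guarantees. Each modification the algorithm performs is the reversal of a directed path that starts at a currently over-saturated node (out-degree exceeding $D$) and ends at a node with smaller out-degree; such a reversal decreases the out-degree of the starting node by one, increases it at the ending node by one, and leaves all intermediate out-degrees unchanged. Thus no node whose out-degree is already at most $D$ ever has its out-degree decreased, establishing the required monotonicity property and closing the reduction.
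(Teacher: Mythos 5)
Your proposal is correct and follows essentially the same route as the paper: the paper obtains the corollary by observing that a max-degree-$\Delta$ graph has density at most $\Delta/2$ (so Theorem~\ref{thm:directedDegreeSplit} yields out-degree at most $\lceil(1+\epsilon)\Delta/2\rceil$) and that the augmentation-based algorithm never decreases the out-degree of a node already below the target, so Lemma~\ref{lem:half} applies; you spell out exactly these two steps, including the verification $D\geq\lceil(\Delta+1)/2\rceil$ and the monotonicity property. Your substitution of pseudo-arboricity $\leq\lceil\Delta/2\rceil$ for the paper's (slightly loose) claim ``arboricity $\leq\Delta/2$'' is fine, since the augmenting-path growth argument only uses the subgraph edge-density bound, which pseudo-arboricity provides.
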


Note that in comparision with our randomized algorithm stated in \Cref{lem:randomizedundirected} of \Cref{sect:randomundirected}, we have removed the dependency on $\Delta$ in the running time. This is because in \Cref{sect:randomundirected} there is a singularity on the augmenting paths who intersects on the last edge. Here, the structure of the augmenting paths is  the same with the flow networks. This allows us to apply the known techniques such as the arguments of blocking-flows. 

We now explain our method for achieving \Cref{thm:directedDegreeSplit}. Let $G_0$ be a directed graph obtained by orienting the original graph arbitrarily. Let $D = \lceil(1+\epsilon)a \rceil$. Define an augmenting path to be a directed path starting from a node with out-degree at least $D+1$ and ends at a node with out-degree at most $D-1$.  By augmenting along an augmenting path, we flip all the edges to the reverse direction. After the augmentation, the out-degree of the starting node decreased by 1 and the out-degree of the ending node increased by 1.

Define $G'_0$ to be a directed graph by adding a source node $s$ and a sink node $t$ to $G_0$. Also, add $\outdeg_{G_0}(u) - D$ edges from $s$ to every node with degree at least $D+1$ and add $D - \outdeg_{G_0}(u)$ edges from every node with degree at most $D-1$ to $t$. Now we will do multiple augmentations on $G'_0$ to obtain $G'_1 \ldots G'_l$. Define $G_i$ to be $G'_{i} \setminus \{s,t\}$. First note that $\dist_{G'_0}(s,t) \geq 3$. In step $i$, we will find a maximal set of edge-disjoint paths of length $3+i$ from $s$ to $t$ in $G'_i$ and augment along them to obtain $G'_{i+1}$. The standard blocking-flow type argument shows that the distance from $s$ to $t$ increases after the augmentation. 

\begin{lemma}\label{lem:shortest} $\dist_{G'_i}(s,t) \geq 3 + i$ for $0 \leq i \leq l$. \end{lemma}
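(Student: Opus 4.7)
The plan is to prove the lemma by induction on $i$, adapting the classical Dinic blocking-flow distance-monotonicity argument to the edge-flip formulation used here. The base case $i=0$ is exactly the observation $\dist_{G'_0}(s,t) \geq 3$ made just before the lemma, so the work is in the inductive step: assuming $\dist_{G'_i}(s,t) \geq 3+i$, I want to show $\dist_{G'_{i+1}}(s,t) \geq 3+(i+1)$.

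First I would dispose of the trivial case where no $s$-$t$ path of length exactly $3+i$ exists in $G'_i$. Then the maximal edge-disjoint set of augmenting paths in step $i$ is empty, so $G'_{i+1} = G'_i$, and the inductive hypothesis combined with the absence of length-$(3+i)$ paths forces $\dist_{G'_{i+1}}(s,t) \geq 3+i+1$. Otherwise $\dist_{G'_i}(s,t) = 3+i$ and the augmentation is nontrivial. Set $d(v) := \dist_{G'_i}(s,v)$, and classify each edge of $G'_{i+1}$ as \emph{original} (if it was already in $G'_i$) or \emph{reversed} (if it arose by flipping an edge along one of the augmenting paths). For an original edge $(u,v)$, the BFS triangle inequality in $G'_i$ gives $d(v) \leq d(u)+1$. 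For a reversed edge $(u,v)$, the edge $(v,u)$ lay on a shortest $s$-$t$ path in $G'_i$, so $d(u) = d(v)+1$, i.e.\ $d(v) = d(u) - 1$.

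Now suppose toward contradiction that $\dist_{G'_{i+1}}(s,t) \leq 3+i$ and take such a path $P$ with $k \leq 3+i$ edges, $c$ of which are reversed. Telescoping the bounds above along $P$ yields
\[
  3+i \;=\; d(t) \;\leq\; d(s) + (k-c)\cdot(+1) + c\cdot(-1) \;=\; k - 2c,
\]
so $c \leq (k-3-i)/2 \leq 0$, forcing $c=0$. Hence every edge of $P$ is an original edge, so $P$ sits entirely inside $G'_i$, has length $k = 3+i$, and its inequalities $d(v_{j+1}) \leq d(v_j)+1$ hold with equality throughout, i.e.\ $P$ is a shortest $s$-$t$ path in $G'_i$. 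Since no edge of $P$ was flipped, $P$ is edge-disjoint from every path in the augmented set, contradicting the maximality of that set as a collection of edge-disjoint length-$(3+i)$ $s$-$t$ paths in $G'_i$. This contradiction closes the induction.

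The only genuinely delicate step is the distance estimate in the last paragraph; everything else is bookkeeping about how flipping affects level membership. The main thing to be careful about is the $c=0$ case, where I cannot rule $P$ out by level arithmetic alone and must invoke maximality of the edge-disjoint augmenting family to preclude $P$ from existing in $G'_{i+1}$.
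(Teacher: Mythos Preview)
Your argument is correct and is a cleaner packaging of the same Dinic blocking-flow idea the paper uses. The paper instead runs a nested induction: it first proves $\dist_{G'_{i+1}}(x,t) \geq \dist_{G'_{i}}(x,t)$ for every vertex $x$ by induction on $\dist_{G'_{i}}(x,t)$, doing a case analysis on the first edge where a hypothetical shorter path meets an augmenting path, and then repeats a similar case analysis for $x=s$ to get the stronger bound $\dist_{G'_{i+1}}(s,t)\geq 3+(i+1)$. Your level-function telescoping (with $d(v)=\dist_{G'_i}(s,v)$, original edges raising $d$ by at most $1$, reversed edges lowering it by exactly $1$, and maximality killing the $c=0$ survivor) reaches the same endpoint more directly and avoids the inner induction; the paper's route, on the other hand, yields the auxiliary monotonicity statement $\dist_{G'_{i+1}}(\cdot,t)\geq\dist_{G'_{i}}(\cdot,t)$ as a byproduct, though that is not used elsewhere.
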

\begin{proof}
We will show by induction that $\dist_{G'_{i}}(s,t) \geq 3 + i$. When $i=0$, it is true that $\dist_{G'_{0}}(s,t) \geq 3$. Suppose that it is true that $\dist_{G'_{i}}(s,t) \geq 3+i$. Consider $G'_{i+1}$. Let $\mathcal{P}$ be a maximal set of paths of length $3+i$ from $s$ to $t$ in $G'_{i}$. If $\mathcal{P}$ is non-empty, then all paths in it must have length $3+i$ and so they are the shortest paths.

First we show that $\dist_{G'_{i+1}}(x,t) \geq \dist_{G'_{i}}(x,t)$ for every $x$ by induction on $\dist_{G'_{i}}(x,t)$. For a node $x$ with $\dist_{G'_{i}}(x,t) = 1$, it is obviously true that $\dist_{G'_{i+1}}(x,t)\geq 1$ since $x \neq t$. Suppose that it is true that $\dist_{G'_{i+1}}(x,t) \geq \dist_{G'_{i}}(x,t)$ for all $\dist_{G'_{i}}(x,t) < k$. For a node $x$ with $\dist_{G'_{i}}(x,t) = k$, suppose that there is a path $P$ of length at most $k-1$ from $x$ to $t$ in $G'_{i+1}$. Then it must be the case that $P$ intersects with some path in $\mathcal{P}$. Let $uv$ be the first edge in $P$ that has a non-empty intersection with a path (say, $Q$) in $\mathcal{P}$. Without loss of generality, assume that $\dist_{G'_{i}}(u,t) = \dist_{G'_{i}}(v,t) + 1$. We have:
\begin{align}
\dist_{Q}(s,u) + \dist_{Q}(v,t) &= 3+(i-1) \label{eqn:first} \\ 
\dist_{P}(x,v) + \dist_{P}(u,t) &= k-2  \label{eqn:second}
\end{align}
If $\dist_{P}(x,v) + \dist_{Q}(v,t) \leq k-1$, then it implies that $\dist_{G'_{i}}(x,t) \leq k-1$ and a contradiction occurs. Otherwise, $\dist_{P}(x,v) + \dist_{Q}(v,t) \geq k-2$, we have
\begin{align*}
\dist_{Q}(s,u) + \dist_{Q}(u,t) &= \dist_{Q}(s,u) + \dist_{G'_{i}}(u,t) \\
&\leq   \dist_{Q}(s,u) + \dist_{G'_{i+1}}(u,t) && \mbox{by induction hypothesis} \\
&\leq \dist_{Q}(s,u) + \dist_{P}(u,t) \\
&\leq 3+(i-1). && \mbox{by (\ref{eqn:first}) and (\ref{eqn:second})}
\end{align*}
This contradicts with that $\dist_{Q}(s,t) = 3+i$.
Therefore, we have $\dist_{G'_{i+1}}(x,t) \geq \dist_{G'_{i}}(x,t)$ for all $x$. Now suppose to the contrary that a path $P'$ from $s$ to $t$ with length at most $3+i$ exists in $G'_{i+1}$. If $P'$ does not intersect with any paths in $\mathcal{P}$, then it must have length $d$ and so it must have been included in $\mathcal{P}$. Otherwise, $P'$ must intersect with some edge of the paths in $\mathcal{P}$. Suppose that $u'v'$ is the first edge $P'$ has a non-empty intersection with and $u'v' \in Q' \in \mathcal{P}$. Without loss of generality, assume that $\dist_{G'_{i}}(u',t) = \dist_{G'_{i+1}}(v',t) + 1$. We have:
\begin{align}
\dist_{Q'}(s,u') + \dist_{Q'}(v',t) &= 3 + (i-1) \label{eqn:2first} \\ 
\dist_{P'}(s,v') + \dist_{P'}(u',t) &\leq 3 + i  \label{eqn:2second}
\end{align}
If $\dist_{P'}(s,v') + \dist_{Q'}(v',t) \leq 3 + (i-1)$, then $\dist_{G'_{i}}(s,t) \leq 3+(i-1)$, a contradiction occurs. Otherwise, we have $\dist_{P'}(s,v') + \dist_{Q'}(v',t) \geq 3 + i$ and
\begin{align*}
\dist_{Q'}(s,u') + \dist_{Q'}(u',t) &= \dist_{Q'}(s,u') + \dist_{G'_{i}}(u',t) \\
&\leq \dist_{Q'}(s,u') + \dist_{G'_{i+1}}(u',t) \\
&\leq \dist_{Q'}(s,u') + \dist_{P'}(u',t) \\
&\leq 3 + (i - 1) && \mbox{by (\ref{eqn:2first}) and (\ref{eqn:2second})}
\end{align*}
This contradicts with that $\dist_{Q'}(s,t) = 3+ i$. Therefore, we must have $\dist_{G'_{i+1}}(s,t) \geq 3 + (i+1)$
\end{proof}

\begin{lemma}\label{lem:subset} Let $S_i$ denote the set of node with out-degree at least $D+1$ in $G_i$ and $T_i$ denote the set of node with out-degree at most $D-1$ in $G_i$. We have $S_{i+1} \subseteq S_{i}$ and $T_{i+1} \subseteq T_{i}$.\end{lemma}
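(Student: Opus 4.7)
The plan is to track two ``capacity'' quantities along the iterations and read the two containments off of them. Concretely, I would prove by induction on $i$ the invariant that in $G'_i$ the multiplicity of edges $s\to v$ equals $\max(0,\outdeg_{G_i}(v)-D)$, and the multiplicity of edges $v\to t$ equals $\max(0,D-\outdeg_{G_i}(v))$, for every real vertex $v$. The base case $i=0$ is exactly the construction of $G'_0$. For the inductive step, I would observe that flipping one augmenting path $s\to u_0\to u_1\to\cdots\to u_k\to t$ has the following effect: inside $G_i$, $\outdeg(u_0)$ drops by one, $\outdeg(u_k)$ rises by one, and every intermediate vertex is unchanged; simultaneously one $s\to u_0$ edge and one $u_k\to t$ edge are consumed. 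Summing over the edge-disjoint paths augmented at step $i$ matches the updates on both sides of the invariant and propagates it to $G'_{i+1}$.

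From this invariant, the two containments drop out. Since any path used in iteration $i$ begins with an $s\to u_0$ edge and ends with a $u_k\to t$ edge of $G'_i$, the invariant forces $u_0\in S_i$ and $u_k\in T_i$; in particular no real vertex can serve as first node of some path and last node of another in the same iteration, because $S_i\cap T_i=\emptyset$. So in a single iteration each $v$'s out-degree in $G_i$ changes by either $-a_v$ or $+b_v$, not both, where $a_v,b_v\ge 0$ count the paths with $v$ as first/last node. Edge-disjointness further bounds $a_v\le\outdeg_{G_i}(v)-D$ for $v\in S_i$ and $b_v\le D-\outdeg_{G_i}(v)$ for $v\in T_i$, so out-degrees on the high side drift weakly toward $D$ from above, out-degrees on the low side drift weakly toward $D$ from below, and $\outdeg(v)=D$ is frozen. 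In particular, $v\notin S_i$ gives $\outdeg_{G_i}(v)\le D$ and hence $\outdeg_{G_{i+1}}(v)\le D$, so $v\notin S_{i+1}$; this yields $S_{i+1}\subseteq S_i$, and the argument for $T_{i+1}\subseteq T_i$ is symmetric.

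The delicate step is the inductive maintenance of the invariant itself: after several iterations, $G'_i$ may contain reversed edges $v\to s$ and $t\to v$, and one must be sure that the algorithm's chosen paths do not route through these edges in their interiors (otherwise the clean ``one $s$-edge consumed, one $t$-edge consumed per path'' accounting breaks). This follows from \Cref{lem:shortest} by restricting attention to $s$-$t$ paths of length exactly $3+i$, which are necessarily shortest and hence simple, so they leave $s$ only at the beginning and enter $t$ only at the end. Once this is granted, the whole argument is pure bookkeeping and no further combinatorial idea is needed.
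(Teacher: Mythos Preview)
Your proposal is correct and takes essentially the same route as the paper: both use that the paths augmented in step $i$ are shortest (via \Cref{lem:shortest}) and hence simple, so each path's first real vertex is in $S_i$ and its last real vertex is in $T_i$, whence out-degrees only drift toward $D$. Your explicit invariant on the $s\to v$ and $v\to t$ edge multiplicities, together with the edge-disjointness bounds $a_v\le\outdeg_{G_i}(v)-D$ and $b_v\le D-\outdeg_{G_i}(v)$, is actually more careful than the paper's terse argument and fills two points the paper glosses over: why the second node of each path lies in $S_i$ at all, and why the aggregate change across many edge-disjoint paths cannot overshoot $D$ and move an $S_i$-node into $T_{i+1}$ (or vice versa).
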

\begin{proof}
Let $\mathcal{P}$ be the set of shortest path from $s$ to $t$ in $G'_i$. Each $P \in \mathcal{P}$ must contain exactly one node in $S_i$ as the second node, since $P$ is a shortest path. Similarly, $P$ must contain exactly one node in $T_i$ as the second to the last node. Therefore, augmenting along $P$ can only decrease the out-degree of the node in $S_i$ by 1 and increase the out-degree of $T_i$ by 1 in $G_i$. The out-degrees of other nodes remain the same. Therefore, it is impossible to create new nodes with out-degree at least $D+1$ or new nodes with out-degree at most $D-1$.
\end{proof}

\begin{lemma}No augmenting path of length at most $l$ exists in $G_l$. \end{lemma}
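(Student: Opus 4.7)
My plan is to reduce the statement directly to Lemma \ref{lem:shortest}, which already guarantees $\dist_{G'_l}(s,t) \geq 3 + l$. The idea is that every augmenting path in $G_l$ lifts to an $s$-to-$t$ path in $G'_l$ that is only two edges longer (one $s$-incident edge at the front, one $t$-incident edge at the back), so the distance bound in $G'_l$ translates to a length lower bound on augmenting paths in $G_l$.

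First I would verify that the endpoint edges still exist in $G'_l$. Recall that $G'_0$ contains $\outdeg_{G_0}(u) - D$ parallel edges from $s$ to each $u$ with $\outdeg_{G_0}(u) \geq D+1$, and analogously $D - \outdeg_{G_0}(v)$ edges from each such $v$ to $t$. Each augmentation reverses the edges of a shortest $s$-to-$t$ path in $G'_i$; in particular it consumes exactly one $s$-incident edge (to some current source in $S_i$) and exactly one $t$-incident edge (from some current sink in $T_i$). By an easy induction on $i$, the number of $s$-to-$u$ edges in $G'_i$ equals $\outdeg_{G_i}(u) - D$ whenever $u \in S_i$, and symmetrically for edges into $t$. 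Hence, as long as a node is still a source (resp. sink) in $G_l$, there is at least one $s$-to-it (resp. it-to-$t$) edge in $G'_l$.

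Now the main step. Suppose, toward contradiction, that $P = u_0 u_1 \cdots u_k$ is an augmenting path in $G_l$ with $k \leq l$. Then $u_0 \in S_l$ and $u_k \in T_l$, so by the previous paragraph $G'_l$ contains an edge $s \to u_0$ and an edge $u_k \to t$. Prepending and appending these yields an $s$-to-$t$ walk of length $k + 2 \leq l + 2$ in $G'_l$, contradicting $\dist_{G'_l}(s,t) \geq 3 + l$ from Lemma \ref{lem:shortest}. Therefore no augmenting path of length at most $l$ can exist in $G_l$.

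The only real bookkeeping obstacle is the parallel-edge accounting needed to confirm that sources in $G_l$ retain an $s$-incident edge in $G'_l$ (and dually for sinks). Once that invariant is in hand, the argument is a one-line application of the distance monotonicity already established, and no further analysis of the augmentation structure is required.
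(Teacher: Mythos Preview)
Your proposal is correct and follows essentially the same approach as the paper: both arguments lift a hypothetical short augmenting path in $G_l$ to an $s$--$t$ path in $G'_l$ of length at most $l+2$, contradicting the distance bound $\dist_{G'_l}(s,t)\geq l+3$ from Lemma~\ref{lem:shortest}. The only cosmetic difference is in verifying that the $s\to u_0$ and $u_k\to t$ edges survive in $G'_l$: the paper invokes Lemma~\ref{lem:subset} together with the invariance of out-degrees in $G'_i$, whereas you track the parallel-edge count $\outdeg_{G_i}(u)-D$ directly by induction---both yield the same conclusion.
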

\begin{proof}
By \Cref{lem:shortest}, we have $\dist_{G'_l}(s,t) \geq l + 3$. Suppose that $P$ is an augmenting path with length at most $l$ in $G_l$. Let $x$ and $y$ be the starting node and the ending node of $P$. Since the out-degree of $x$ in $G_l$ is at least $D+1$, by \Cref{lem:subset}, the out-degree of $x$ in $G_0$ is also at least $D+1$. This implies $\outdeg_{G_0}(x) - D$ edges have been added from $s$ to $x$. Since the out-degree of $x$ is the same in $G'_0, \ldots G'_l$ and the out-degree of $x$ in $G_l$ is at least $D+1$, it must be the case that there is at least one edge going from $s$ to $x$ in $G'_l$. Similarly, it must be the case that there is at least one edge going from $y$ to $t$ in $G'_l$. This implies $P$ can be extended to a direct path $sPt$ of length $l+2$ in $G'_l$, which contradicts with the fact $\dist_{G'_l}(s,t) \geq l + 3$.
\end{proof}

   Therefore, no augmenting path of length at most $l$ exists in $G_l$. However, by setting $l = \Omega((\log n)/\epsilon)$, this contradicts with the following lemma.
\begin{lemma}\label{lem:shortaugboundarbor}Let $G$ be a directed graph. Let $D = \lceil (1+\epsilon)a \rceil$, where $a$ is the arboricity. Suppose that the out-degree of $u$ is at least $D+1$, then an augmenting path from $u$ of length $O((\log n) / \epsilon)$ exists.  \end{lemma}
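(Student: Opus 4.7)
My plan is a straightforward BFS expansion argument in the out-direction, combined with the Nash--Williams arboricity bound. Set $L = O(\log n / \epsilon)$ (concretely $L = \lceil 2(\log n)/\epsilon \rceil$ suffices) and argue by contradiction: suppose no augmenting path from $u$ of length at most $L$ exists in $G$. Define $T_i$ to be the set of vertices reachable from $u$ by a directed walk of length at most $i$ that uses only out-edges, so $T_0 = \{u\}$. Under our assumption, every vertex $v \in T_i$ for $i \leq L$ must satisfy $\outdeg(v) \geq D$; otherwise the BFS path from $u$ to $v$ in $T_i$ would itself be an augmenting path of length at most $L$.

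The heart of the argument is the growth estimate $|T_{i+1}| \geq (1+\epsilon)|T_i|$ for every $i < L$. By definition of $T_{i+1}$, every out-edge leaving a vertex of $T_i$ has its head in $T_{i+1}$, so all such out-edges belong to the induced subgraph $G[T_{i+1}]$. Hence
\[
|E(G[T_{i+1}])| \;\geq\; \sum_{v \in T_i} \outdeg(v) \;\geq\; D \cdot |T_i|.
\]
On the other hand, since $G$ has arboricity $a$, the Nash--Williams bound applied to the induced subgraph on $T_{i+1}$ gives $|E(G[T_{i+1}])| \leq a \cdot |T_{i+1}|$. Combining,
\[
D \cdot |T_i| \;\leq\; a \cdot |T_{i+1}|, \qquad\text{so}\qquad |T_{i+1}| \;\geq\; \tfrac{D}{a}|T_i| \;\geq\; (1+\epsilon)|T_i|,
\]
where the last inequality uses $D = \lceil (1+\epsilon)a \rceil \geq (1+\epsilon)a$.

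Iterating from $|T_0| = 1$ yields $|T_L| \geq (1+\epsilon)^L$. Using $\log(1+\epsilon) \geq \epsilon/2$ for $\epsilon \in (0,1)$, choosing $L = \lceil 2(\log n)/\epsilon \rceil = O(\log n / \epsilon)$ forces $(1+\epsilon)^L > n$, contradicting $|T_L| \leq |V(G)| \leq n$. Hence an augmenting path from $u$ of length at most $L = O(\log n / \epsilon)$ must exist. I do not anticipate a real obstacle; the only subtle point is to double-count the out-edges from $T_i$ as edges of the induced subgraph on $T_{i+1}$ rather than on $T_i$, since only then is the vertex set large enough to contain all their heads so that the arboricity bound is applicable. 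This is immediate from the BFS definition of $T_{i+1}$.
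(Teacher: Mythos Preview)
Your proposal is correct and essentially identical to the paper's own proof: both run a BFS in the out-direction, observe that as long as every reached node has out-degree at least $D$ the number of outgoing edges from the current set is at least $D$ times its size, apply the arboricity bound $|E(G[T_{i+1}])|\le a|T_{i+1}|$ to the next layer to get the growth factor $D/a\ge 1+\epsilon$, and conclude by exhausting the vertex set after $O((\log n)/\epsilon)$ steps. The only cosmetic difference is that you phrase it as a contradiction while the paper phrases it as a direct induction.
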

\begin{proof} Let $B_i$ be the set of all nodes reachable from $u$ by directed paths of length $i$. We show by induction that $|B_i| \geq (1+\eps)^{i-1}$, unless $B_i$ includes a node of out-degree less than $t$ in which case, we have found an augmenting path of length $i$. The base case $i=0$ is trivial as $B_0=\{u\}$. For the inductive step, suppose that all nodes in $B_i$ have out-degree at least $D \geq (1+\eps)a$. Then, $B_i$ is incident on at least $|B_i| \cdot D $ outgoing edges. Since all these edges have both their endpoints in $B_{i+1}$, and as $B_{i+1}$ has at most $a|B_{i+1}|$ edges by definition of arboricity, we get that $|B_{i+1}|\geq \frac{D}{a} |B_{i}| \geq (1+\eps) \cdot |B_{i}| \geq (1+\eps) \cdot (1+\eps)^{i-1}=(1+\eps)^{i}$. Now, since this growth cannot continue beyond $h=\log_{1+\eps} n = O(\frac{\log n}{\eps})$ hops, as that would exhaust the graph, we get that there must be a node of out-degree less than $D$ within $h$ hops, i.e., an augmenting path of length at most $h=O(\frac{\log n}{\eps})$.
\end{proof}

\begin{lemma}Suppose the current graph is $G_i$. Then, $G_{i+1}$ can be computed in $O(i^2 \log n)$ round.\end{lemma}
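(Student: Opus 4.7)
The plan is to adapt the super-graph MIS approach used for the undirected case. By \Cref{lem:shortest} we know $\dist_{G'_i}(s,t) \geq 3+i$, and to produce $G_{i+1}$ we need a maximal collection of edge-disjoint shortest $s$-$t$ paths in $G'_i$, each of length exactly $3+i$. Every such path consists of a virtual edge $s\to u$ where $u$ has out-degree $\geq D+1$ in $G_i$, followed by a directed path of length $i+1$ inside $G_i$ from $u$ to some $v$ of out-degree $\leq D-1$, followed by a virtual edge $v\to t$. So I reduce the task to finding, inside $G_i$, a maximal set $\mathcal{M}$ of length-$(i+1)$ augmenting paths from excess nodes to deficit nodes that are edge-disjoint in $G_i$ and that use distinct virtual-edge copies at their endpoints.

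I will build a conflict super-graph $\mathcal{H}$ whose nodes are all candidate length-$(i+1)$ augmenting paths in $G_i$ enriched with a choice of source-slot (one of the $\outdeg_{G_i}(u)-D$ virtual copies of $s\to u$) and of sink-slot. Two super-nodes are adjacent whenever they share a $G_i$-edge, a source-slot, or a sink-slot. Then any MIS of $\mathcal{H}$ is in bijection with a maximal set of edge-disjoint shortest $s$-$t$ paths in $G'_i$, which is exactly the blocking-flow collection needed to pass from $G'_i$ to $G'_{i+1}$. I will then run Luby's MIS algorithm on $\mathcal{H}$. Since each candidate path is determined by a starting vertex together with $i+1$ out-edges, $|V(\mathcal{H})| \leq n\cdot\Delta^{i+2} \leq n^{O(i)}$, so Luby terminates in $O(\log|V(\mathcal{H})|)=O(i\log n)$ super-rounds. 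One super-round can be simulated in $O(i)$ rounds of $G_i$: any two conflicting super-nodes meet either at a common $G_i$-edge or at a common endpoint, and along a length-$(i+1)$ path information can be relayed from either endpoint to any internal vertex in $O(i)$ rounds. Multiplying gives $O(i^2\log n)$ rounds to compute $\mathcal{M}$; reversing the orientations along the selected paths costs an additional $O(i)$ rounds and produces $G_{i+1}$.

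The main obstacle will be faithfully encoding the virtual-edge capacities so that an MIS in $\mathcal{H}$ corresponds to a \emph{maximal} edge-disjoint family in $G'_i$. If one merely imposed vertex-disjointness at the sources and sinks, the resulting family could fail to be maximal for blocking flow, and the monotonicity of $\dist_{G'_i}(s,t)$ proven in \Cref{lem:shortest} would break. The source-slot/sink-slot expansion above handles this cleanly, and it only inflates $|V(\mathcal{H})|$ by a polynomial-in-$\Delta$ factor, keeping $\log|V(\mathcal{H})| = O(i \log n)$ and the overall round complexity at $O(i^2 \log n)$.
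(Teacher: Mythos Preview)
Your proposal is correct and follows essentially the same approach as the paper: build a conflict super-graph whose vertices are the candidate length-$(3+i)$ $s$-$t$ paths, run Luby's MIS on it in $O(i\log n)$ super-rounds, and simulate each super-round in $O(i)$ rounds of $G$. Your explicit source-slot/sink-slot encoding is just a careful unpacking of what the paper handles implicitly by working directly with paths in $G'_i$ (where the parallel $s\to u$ and $v\to t$ edges already play the role of your slots).
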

\begin{proof} Recall that $G'_{i+1}$ is  obtained by augmenting along a maximal set of edge-disjoint paths from $s$ to $t$ of length $2+i$ in $G'_{i}$.  We consider the supergraph $\mathcal{G}$, where each node denotes a path from $s$ to $t$ in $G'_i$ of length $2+i$. An edge is added between two nodes, if the corresponding paths intersects at some edges. Then, a maximal independent set (MIS) in $\mathcal{G}$ corresponds to a maximal set of edge-disjoint paths from $s$ to $t$ of length $i$ in $G'_{i}$.

We will show how to how to simulate the computation of the MIS in $\mathcal{G}$ when the underlying network is $G_i$. Let $S_i$ denote the set of nodes with out-degree at least $D+1$ in $G_i$. Each node $x \in S_i$ is responsible for the paths from $s$ to $t$ of length $2+i$ whose second node is $x$ (the first node is $s$). Therefore, each node in $\mathcal{G}$ is taken care by some node $x \in S_i$. One round of communication between an edge in $\mathcal{G}$ can be simulated in $O(i)$ rounds in $G_i$, since the length of the paths is $2+i$. Since the number of paths of length $2+i$ is at most $n^{2+i}$, Luby's MIS algorithm takes $O(\log n^{2+i}) = O(i \log n)$ rounds. It takes $O(i)$ rounds to simulate a round in $\mathcal{G}$. Therefore, the running time is $O(i^2 \log n)$.

The total running time is $\sum_{i=1}^{l} O(i^2 \log n) = O(\log^4 n / \epsilon^3)$.
\end{proof}

\paragraph{Turning Low Out-Degree Orientations to Forest Decomposition} Above, we explained a method for obtaining an orientation with out-degree $a(1+\eps)$. This is immediately a decomposition of the edges into $a(1+\eps)$ pseudo-forests. Recall that a pseudo-forest is a graph where each connected component is a pseudo-tree, that is a tree with the exception of having at most one more edge, which creates one cycle. If we let each node number its at most $a(1+\eps)$ out-going edges by numbers $1$, $2$, \dots, $a(1+\eps)$ uniquely, then the edges of each number form a pseudo-forest, as they form a graph with per-node out-degree at most $1$. For practically all the distributed applications that we are aware of, graphs with out-degree $1$ are as good as trees. However, from an aesthetic viewpoint, having a decomposition into actual forests would be much nicer. We next explain a method for decomposing into forests in high-arboricity graphs. Indeed, the decomposition will have an extra property which might be quite useful in the distributed context: 
\begin{lemma}\label{lem:forestDecomp}
There is a randomized $O(\log n)$-round algorithm that for graphs of arboricity $a=\Omega(\log n/\eps^2)$, transforms orientations with out-degree at most $a(1+\eps)$ to an edge-decomposition into $a(1+8\eps)$ forests, with high probability. Moreover, except for $O(\eps)$ fractions of the forests, each connected component in the other forests is merely a star-graph, that is, a tree with diameter $2$. 
\end{lemma}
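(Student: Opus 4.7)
The plan is to upgrade the trivial orientation-based pseudo-forest decomposition (with $a(1+\eps)$ pseudo-forests obtained by numbering each node's out-edges from $1$ to $d_u\le a(1+\eps)$) into a forest decomposition with star-forest structure, by investing the $7\eps\cdot a$ extra color ``slack''. Partition the palette of $K=a(1+8\eps)$ colors into $K_{1}=a(1+7\eps)$ \emph{primary} colors and $K_{2}=a\eps$ \emph{auxiliary} colors. Each node $v$ independently picks a uniformly random \emph{center-color} $\phi(v)\in[K_{1}]$ and announces it to its neighbors, intending to serve as the common center of the star that collects its in-edges in color $\phi(v)$. For each directed edge $u\to v$ we tentatively assign color $\phi(v)$.

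Next, mark an edge $u\to v$ as \emph{dirty} if any of the following star-breaking situations occurs: (a) $\phi(u)=\phi(v)$, so both endpoints aspire to be centers of the same color; (b) $u$ has another out-neighbor $w$ with $\phi(w)=\phi(v)$, so $u$ would become an unwanted hub inside color $\phi(v)$. All non-dirty (``clean'') edges colored $c\in[K_1]$ then produce, by construction, a disjoint union of stars: each $v\in R_c:=\phi^{-1}(c)$ is the unique center of its own star (since $\phi(v)=c\ne\phi(u)$ rules out $v$ having a color-$c$ out-edge via clean edges), and each leaf $u$ has exactly one clean color-$c$ out-edge (by (b)) and no color-$c$ in-edge (since $\phi(u)\ne c$). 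Thus the clean portions of the $K_1$ primary color classes are star-forests, supplying the overwhelming majority of colors the lemma requires.

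It remains to place the dirty edges into the $K_{2}=a\eps$ auxiliary colors as actual forests. The key quantitative step is a Chernoff-type concentration argument -- this is where the hypothesis $a=\Omega(\log n/\eps^{2})$ is used -- to show that with high probability every node $u$ has at most $a\eps$ dirty out-edges. Concretely, for each node $u$ both the number of $\phi$-collisions on its out-edges and the number of hub-type collisions among its out-neighbours' $\phi$-values are controlled using independence of the $\phi(v)$'s together with the sharp-concentration regime afforded by $a\gg\log n/\eps^{2}$; a union bound over the $n$ nodes then yields the per-node $a\eps$ guarantee globally. Consequently, the subgraph of dirty edges inherits an orientation of out-degree at most $a\eps$, so we can decompose it into $a\eps$ pseudo-forests by numbering each node's dirty out-edges from $1$ to $a\eps$ and assigning color-index to auxiliary color.

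Finally, we must convert these $a\eps$ auxiliary pseudo-forests into actual forests in $O(\log n)$ rounds; this is a standalone cycle-breaking sub-problem invoked on a graph whose oriented out-degree is only $a\eps$, which can be handled by an $O(\log n)$-round re-coloring pass (e.g.\ adapting the augmenting-path/augmenting-cycle machinery developed earlier in the paper to the dirty subgraph with $\eps'=\Theta(1)$). The main obstacle is the concentration step bounding the number of dirty out-edges per node by $a\eps$: the two sources of dirtiness behave differently ((a) involves pairs of endpoints, (b) involves pairs of out-edges), so the Chernoff argument must be applied separately to each contribution and is the place where the hypothesis $a=\Omega(\log n/\eps^{2})$ becomes essential. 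A secondary but minor subtlety is that the $K_2$ auxiliary forests must be genuine forests rather than pseudo-forests; this is handled by the cycle-breaking pass, which succeeds because the dirty subgraph is substantially sparser (in the oriented sense) than the original $G$.
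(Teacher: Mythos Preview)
Your concentration step fails. With $K_1=a(1+7\eps)$ primary colors and out-degree $d_u\le a(1+\eps)$, the out-neighbors' center colors $\phi(v_1),\dots,\phi(v_{d_u})$ are i.i.d.\ uniform over $[K_1]$, and since $d_u/K_1$ is essentially $1$, a \emph{constant} fraction of them collide: $\Pr[\text{edge }u\to v\text{ is type-(b) dirty}]=1-(1-1/K_1)^{d_u-1}\approx 1-e^{-1}$. Thus the expected number of dirty out-edges at each node is $\Theta(a)$, not $O(a\eps)$, and no Chernoff bound can bring it down to $a\eps$. Your auxiliary budget of $K_2=a\eps$ colors is therefore far too small; the hypothesis $a=\Omega(\log n/\eps^2)$ does not help here because the problem is the expectation, not the deviation.

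The paper's construction avoids this by a different randomization. Rather than letting each node pick \emph{one} color in which to be a star-center, it makes each node independently ``active'' (a leaf) in each of the $a(1+\eps)$ primary forests with probability $q=\tfrac{1-\eps}{1+\eps}$, so every node is a potential center in roughly $2\eps a$ forests simultaneously. Each node then greedily assigns one of its $a(1+\eps)$ out-edges to each of its (at most $a$, by Chernoff) active forests, choosing for forest $i$ an out-neighbor that is inactive in $i$; the $\ge a\eps$ unassigned out-edges at every greedy step guarantee such a neighbor exists w.h.p. This leaves at most $3\eps a$ out-edges per node, which are placed into $7\eps a$ genuine forests in $O(\log n)$ rounds via Barenboim--Elkin. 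The crucial point your scheme misses is that each node must be a center in $\Theta(\eps a)$ colors, not just one, to absorb the $\Theta(a)$ out-edges of each in-neighbor without collisions.
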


\begin{proof}
We first randomly build $a(1+\eps)$ primary forests such that each node $v$ has at most $3\eps a$ of its outgoing edges not put in these forests. We then put these left-over edges into $7\eps a$ additional forests, hence getting a decomposition into $a(1+8\eps)$ forests. The connected components of the primary forests will be stars.

First, notice that some nodes might have out-degree less than $a(1+\eps)$. For simplicity, we first remove this imperfection, by giving each node $v$ with out-degree $d_v$ exactly $a(1+\eps)d_v$ outgoing edges that go to dummy nodes. These dummy nodes are just simulated by $v$ and no other actual node will need to interact with them. Now, each real node has out-degree $a(1+\eps)$. Once we have the forest decomposition, we will drop these edges going to the dummy nodes.

Call each real node $v$ active in each of the primary $a(1+\eps)$ forests with probability $q=\frac{1-\eps}{1+\eps}$ independently. Given that $a=\Omega(\log n/\eps^2)$, by a Chernoff bound, the number of active forests in each node is a number in $[a(1-2\eps), a]$, with high probability. 

Now, we find the outgoing edges of each node in each of its active forests. For each forest $i$ and each node $v$, we will find an outgoing edge of $v$ that goes to a neighbor $u$ who is not active in forest $i$ (or a dummy neighbor $u$). We claim that node $v$ can find a collection of such edges, one for each of its active forests, with high probability, given that $a=\Omega(\log n)$. The argument is as follows: we will find these edges for the active forests of $v$ greedily, and one by one. In each step, when looking for an edge for forest $i$, there are at least $a\eps$ outgoing edges of $v$ remaining. This is because $v$ has $a(1+\eps)$ outgoing edges and we only find edges for at most $a$ active forests. Now each of the endpoints of these remaining outgoing edges is active in the current forest $i$ with probability $q=\frac{1-\eps}{1+\eps}$. Hence, with probability at least $1-(\frac{1-\eps}{1+\eps})^{a\eps} \geq 1- 1/\poly(n)$, at least one of these outgoing neighbors is not active in forest $i$. We assign the outgoing edge to that neighbor to forest $i$. Also, notice that this process can be done in just $1$ round, by each node informing its incoming neighbors of its active forests, and then each node $v$ picking its outgoing edges for its active forests. At the end, we have created $a(1+\eps)$ forests, where indeed each connected component is a star. 

What remains for each node is at most $3\eps a$ outgoing edges. These can be put in $7\eps a$ additional forests in $O(\log n)$ rounds, by a method of Barenboim and Elkin\cite{barenboim2010sublogarithmic}. Hence, we have our desired decomposition into $a(1+8\eps)$ forests.
\end{proof}
\end{document}